\newcommand \remove[1] {}
\newcommand \hide[1] {}
\def \ham {\mathcal{H}}
\newtheorem{Theorem}{Theorem}
\newtheorem{Lemma}{Lemma}
\newtheorem{Property}{Property}
\newcommand{\bF}{\mathbf{F}}
\newcommand{\bFu}[1]{\bF^{(#1)}}
\newcommand{\boldf}{\mathbf{f}}
\newcommand{\bI}{\mathbf{I}}
\newcommand{\BS}{\mathbb{S}}
\newcommand{\bS}{\mathbf{S}}
\newcommand{\bu}{\mathbf{u}}
\newcommand{\bx}{\mathbf{x}}
\newcommand{\by}{\mathbf{y}}
\newcommand{\taus}{s}
\begin{document}
\clubpenalty=10000
\widowpenalty = 10000

\floatstyle{ruled}
\newfloat{algorithm}{thp}{lop}
\floatname{algorithm}{Algorithm}

\pagenumbering{arabic}
\title{Optimal Energy-Aware Epidemic Routing in DTNs}

\author{
Soheil Eshghi,
MHR. Khouzani,
Saswati Sarkar,
Ness B. Shroff,
Santosh S. Venkatesh%
}
\maketitle
{\renewcommand{\thefootnote}{} \footnotetext{
S. Eshghi, S. Sarkar and S. S. Venkatesh are with the Department of Electrical and Systems Engineering at the University of Pennsylvania,
Philadelphia, PA. Their email addresses are \emph{eshghi,swati,venkates@seas.upenn.edu}, their mailing address is: 200 S. 33rd Street,
Philadelphia, PA 19104, tel.~: (215) 573-9071, and fax : (215) 573-2068. MHR. Khouzani and N. B. Shroff are with the Department of Electrical and Computer Engineering at the Ohio State University, Columbus, OH. Their e-mail addresses are \emph{khouzani,shroff@ece.osu.edu}, their mailing address is: 2015 Neil Avenue, Columbus, OH 43210, tel.~: (614) 247-6554, and fax: (614) 292-7596.\ \\
This paper was presented [in part] at the ACM International Symposium on Mobile Ad Hoc Networking and Computing (MobiHoc '12), Hilton Head Island, SC, June 2012.\ \\This work is partially supported by the Army Research Office MURI Awards W911NF-08-1-0238 and W911NF-07-1-0376, the Defense Thrust Research Agency (DTRA) grant HDTRA1-14-1-0058, and NSF grants CNS-0831919, CNS-0721434, CNS-1115547, CNS-0915697, CNS-0915203, and CNS-0914955.}}
\begin{abstract}
In this work, we investigate the use of epidemic routing in energy constrained Delay Tolerant Networks (DTNs).
In epidemic routing, messages are relayed by intermediate nodes at contact opportunities, i.e., when pairs of nodes come within the transmission range of each other. Each node needs to decide whether to forward its message upon contact with a new node based on its own residual energy level and the age of that message.
We mathematically characterize the fundamental trade-off between energy conservation and a measure of Quality of Service
as a dynamic energy-dependent optimal control problem. We prove 
that in the mean-field regime, the optimal dynamic forwarding decisions follow simple threshold-based structures in which the forwarding threshold for each node depends on its current remaining energy.  We then characterize the nature of this dependence\hide{dependence of these thresholds on  current energy reserves in each node}. Our simulations reveal that the optimal dynamic policy significantly outperforms heuristics.
\end{abstract}
\hide{\terms{Theory}}
\begin{IEEEkeywords}
DTN, 
Energy-Based Epidemic Routing, Stratified Optimal Control,  Threshold-Based Forwarding.
\end{IEEEkeywords}
\IEEEpeerreviewmaketitle

\section{Introduction}

Delay Tolerant Networks (DTNs) have been envisioned for\hide{disaster response and military/tactical networks} civilian disaster response networks, military networks, and environmental surveillance, e.g., where communication devices are carried by disaster relief personnel and soldiers, or where they can be mounted on wandering animals. These networks are comprised of mobile nodes whose communication range is much smaller than their roaming area, and therefore messages are typically relayed by intermediate nodes at times of spatial proximity.\hide{ Opportunistic networking based on DTNs is envisioned to assist ad hoc or cellular communication in next generation networks whenever end-to-end connectivity is hard to achieve \cite{repantis2005data}.} 
Relaying messages consumes a signicant amount of energy in the sending and receiving nodes. However, mobile nodes in DTNs typically have limited  battery reserves and replacing/recharging the batteries of drained nodes is usually
infeasible or expensive\hide{not cost-efficient}. Simple epidemic forwarding depletes the limited energy reserves of nodes, while conservative forwarding policies jeopardize the timely delivery of the message to the destination.
\hide{Simple epidemic forwarding
schemes that rely on flooding as a method of message propagation may deplete the energy reserves of the mobile nodes, thereby undermining the performance of the network.
A depleted node, or one with critically low battery reserve, will not be able to 
relay new messages in the future.
This reduction in the number of relay nodes will in turn reduce the long term throughput of the network, specially for time sensitive messages.
On the other hand, an overly conservative packet forwarding strategy would compromise the timely delivery of the message to the destination.}Hence, there is an inherent trade-off between
timely message delivery and energy conservation.  

The literature on message routing in DTNs is extensive~\cite{vahdat2000epidemic, zhang2007performance,lindgren2003probabilistic,de2009nectar,banerjee2010design,spyropoulos2005spray,wang2006dft,lu2010energy,nelson2009encounter,de2010optimal,singh2011optimal,singh2010delay,balasubramanian2007dtn,altman2010optimal,neglia2006optimal}. {Most notably, Vahdat and Becker \cite{vahdat2000epidemic} present a policy where each node propagates the message to all of its neighbours simultaneously (``Epidemic Routing''), while Spyropoulos {\em et al.}~\cite{spyropoulos2005spray} propose spreading a specific number of copies of the message initially and then waiting for the recipients of these copies
to deliver the message \hide{directly }to the destination (``Spray and Wait''). Wang and Wu~\cite{wang2006dft} present {`` Optimized Flooding''}, where flooding is stopped once the total probability of message delivery exceeds a threshold. Singh \emph{et al.}~\cite{singh2011optimal} and Altman \emph{et al.}~\cite{altman2010optimal} \hide{respectively }identify optimal \hide{(in terms of total energy)} and approximately optimal message forwarding policies in the class of policies that do not take the distribution of node energies into account.} In summary, the state of the art in packet forwarding in DTNs comprises of heuristics that ignore energy constraints\cite{vahdat2000epidemic,zhang2007performance,lindgren2003probabilistic}, those that consider only overall energy consumption but provide no analytic performance guarantees\cite{wang2006dft,de2009nectar,banerjee2010design,spyropoulos2005spray,lu2010energy,nelson2009encounter}, and those that do not utilize the energy available {\em to each node} in making forwarding decisions\cite{de2010optimal,singh2011optimal,singh2010delay,balasubramanian2007dtn,altman2010optimal,neglia2006optimal} (we describe some of these policies in more detail in \S\ref{sec:Numericals}).
{An efficient forwarding strategy can use knowledge of the distribution of energy among among nodes to its advantage, and this motivates the design of dynamic energy-dependent {controls} which are the subject of this paper.}

 We start by formulating the trade-off between energy conservation and likelihood of timely delivery
 as a dynamic energy-dependent optimal control problem: at any given time, each node {chooses} its forwarding probability based on its current remaining energy. Since the number of relay nodes with the message increases and residual energy reserves decrease with transmissions and receptions,
 the forwarding probabilities vary with time. 
Thus, they must be chosen so as to control the evolution of network states, which capture both the fraction of nodes holding a copy of the message and the  remaining battery reserves of the nodes.
We model the evolution of \hide{network}these states using  epidemiological differential equations that rely on mean-field approximation of Markov processes, and seek dynamic forwarding probabilities ({\em optimal controls}) that optimize objective functions penalizing  energy depletion subject to enforcing timely message delivery (\S\S\ref{subsec:model_dynamics},\ref{subsec:objective}).  \hide{These dynamic forwarding probabilities constitute our optimal controls.}

Our first result is to prove that dynamic optimal controls follow simple 
threshold-based rules (\S\ref{sec:Structural_Results}, Theorem~\ref{Thm:Gen_structure}). That is, a node in possession of a copy of the message forwards the message to nodes it encounters that have not yet received it until a
certain threshold time that depends on its current remaining energy. Calculating these thresholds is much simpler than solving the general problem and can be done once at the source node of the message. Subsequently, they can be added to the message as a small overhead. Each node that receives the message can retrieve the threshold times and forward the message if its age is less than the threshold entry of the node's residual energy level. The execution of the policy at each node is therefore simple  and based only on local information.{\hide{As the node forwards the message, it loses energy,
and its {time threshold} changes accordingly.  If the age of the message is past {the} threshold time corresponding to a node's current level of energy, it stops forwarding the message to others and will only transmit a copy to the destination of the message. The simplicity of the analytically-derived strategies
is somewhat surprising given that the system dynamics involve non-linear {terms} and a vector of controls.}}{\hide{ This result and its proof constitute important contributions of this paper.}}\hide{The proofs for optimality
of threshold-type policies in such cases do not follow from existing optimal control results.}

Our second result is to characterize the nature of the dependence of the thresholds on the energy levels.
Intuitively, the less energy a node has, the more reluctant it should be to transmit the message, as the transmission will drive it closer to critically low battery levels{\hide{ (which in turn will impair timely delivery of future messages)}}.
However, our investigations reveal that this intuition
can only be confirmed when the penalties associated with low final remaining energies are convex (\S\ref{sec:Structural_Results}, Theorem~\ref{Thm:Order}), and does not hold in general otherwise.
{\hide{That is, in the former case, higher remaining energy levels lead to longer forwarding durations, but
 the monotonicity of the thresholds in energy levels is not necessarily preserved otherwise.}}

{Finally, our optimal control provides a missing \emph{benchmark} for forwarding policies in large networks in which no information about the mobility pattern of the individual nodes is available and a minimum QoS is desired. This benchmark
 allows us to observe the sub-optimality of some simpler heuristic policies, and to identify parameter ranges in which they perform close to the optimal\hide{identify  some  even simpler heuristic policies that perform close to the optimal control, and
 also those that substantially compromise performance for simplicity} (\S\ref{sec:Numericals}).

\hide{Heuristic based routing policies for DTNs involving mobile nodes are proposed in~\cite{wang2006dft,lindgren2003probabilistic,de2009nectar,bannerjee2010design,Spyropoulos2005spray,lu2010energy,nelson2009encounter}.
Vahdat and Becker \cite{vahdat2000epidemic} present and implement an algorithm where each node propagates the message to all of its neighbours simultaneously (``epidemic routing''), which is equivalent to a constant forwarding probability of one {for all nodes} in our settings.
In the routing protocol PROPHET~\cite{lindgren2003probabilistic}, each node maintains a vector of probabilities of delivery {to the destination} and the message is forwarded from nodes with lower  probability of delivery to those   with higher probability of delivery. We now briefly describe some heuristic forwarding strategies that take energy consumption into account. NECTAR~\cite{de2009nectar} tries to find a desirable path based on the contact history of nodes. Banerjee \emph{et al.}~\cite{bannerjee2010design} propose {to introduce} fixed nodes (``throwboxes'') for energy efficient routing. Spray-and-Wait~\cite{Spyropoulos2005spray} proposes spreading a specific number of copies of the message initially and then waiting for the recipients of these copies
to deliver the message directly to the destination {(when these recipients are
within the communication range of the destination).} In our case, this would be equivalent to starting off with a specific number of message-carrying nodes and then setting all forwarding probabilities (to non-destination nodes) to zero. Encounter-based routing~\cite{nelson2009encounter} builds upon the above by having nodes decide how to spread the limited number of copies of a message in accordance with  the contact history of an encountered node. Wang and Wu~\cite{wang2006dft} present {\em Optimized Flooding} where flooding, i.e., forwarding with probability one {from all nodes}, is stopped once the total probability of message delivery exceeds a threshold. Neglia and Zhang~\cite{neglia2006optimal} present two other heuristics as performance benchmarks: \emph{Probabilistic Forwarding}, where all contacts result in a message transmission with a certain probability, and \emph{k-Hop Forwarding}, where each message is transmitted at most $k$ times. Finally, Lu and Hui~\cite{lu2010energy}
 limit transmissions to times when a node has a minimum number of neighbours, limiting the usefulness of their approach when contacts are sparse. Such protocols provide no prespecified analytical guarantee of QoS or optimality of energy usage.

\hide{In [?], Krifa \emph{et al.} consider another limitation of DTNs: the storage buffers of nodes. Here, if the storage capacities are not constrained, then the best policy is to keep a copy of \emph{all} of the different messages (to be delivered to their destinations upon future contact) before they are dropped at expiration of each of their  time-to-live values (TTL), after which delivery of the messages is irrelevant. However, if the storage buffer is full  when a new message arrives, a decision needs to be made about which  message (from the set of the existing messages in the buffer plus the newly arrived one) should be dropped. The decision rule is referred to as a \emph{buffer management} policy. The paper derives policies that approximately optimize the buffer management policies  for average delay and average throughput.}

The problem of finding optimal dynamic forwarding policies in DTNs
considering the resource overhead of replications has been investigated in~\cite{de2010optimal,singh2011optimal,singh2010delay,balasubramanian2007dtn,altman2010optimal,neglia2006optimal} among others.
Singh \emph{et al.}~\cite{singh2011optimal} and Altman \emph{et al.}~\cite{altman2010optimal} (also in~\cite{de2010optimal}) respectively present optimal and approximately optimal forwarding policies but without taking the distribution of node energies into account. These policies entail initial distribution of the message with forwarding probability equal to one, with a sudden cessation after a period of time.
Neglia and Zhang~\cite{neglia2006optimal} prove optimality for a policy where flooding is ceased once there are a threshold number of message-carriers, resulting in a similar structure.
\hide{These papers either
impose an indirect constraint, e.g., restricting the total
number of copies of the message in the network to control
the energy overhead, or directly consider a cost for overall
energy usage.} However, the forwarding rules in these papers
do not utilize the current energy levels of nodes. Thus, if a
node has started with low energy, or has lost a large portion
of its battery reserves during multiple transmissions, it still
has to abide by the general rule that is identical for all nodes.
 This may compromise the long-run performance of the network, as it will be left with nodes
with critically low remaining energy. We propose a new framework that yields
optimal forwarding policies attaining custom trade-offs between
QoS guarantees and the desirability of the distribution of residual
energy reserves among nodes. Our contribution is to show that even when the forwarding
policy is allowed to depend on the remaining energy level of a node,
the same structure (i.e., initial dispatch with probability one
followed by a sudden cessation) is optimal, with the difference
that the cessation instant will depend on the remaining
energy of the node. Furthermore, we show that these
cessation instants are ordered in accordance with the energy
levels whenever the penalties assumed for the node energies
at terminal time are convex.}
\section{System Model}
We assume a low-load scenario in which only one message is propagated in the network within a terminal time $T$. \hide{We present the model for a \emph{single-delivery} setting. Particularly, the}This message has a single destination and it is sufficient for a copy of the message to be delivered to its destination by the terminal time.\hide{\footnote{\hide{The single-delivery}This setting also
captures cases in which there are multiple destinations that are, unlike the source and the intermediate nodes,  inter-connected through a backbone network, \hide{An example of such a setting is where}e.g., when the final destination is (a group of) base stations in a cellular network. Note that in \hide{the latter}this case there is no additional benefit in delivering more than one copy of the message to one of the destinations, since once one destination receives the message, it can instantly inform the other destinations of its reception using the high-speed backbone network. So we can assume all destination nodes receive the message virtually simultaneously. The only modification in our modeling would be a re-scaling of the rate of contact between a mobile node and the destination by the number of these inter-connected sinks (Base Stations).}} We use the deterministic mean-field (i.e., for large numbers of nodes) regime which models state evolution using a system of differential equations. Such models have been shown to be acceptable approximations both analytically and empirically for large and fast-moving mobile wireless networks~\cite{khouzani2012optimal}. \hide{In~\cite{techreport}, we independently investigate the accuracy of this model using simulations.  }In \S\ref{subsec:model_dynamics}, we develop our system dynamics model based on mean-field deterministic ODEs. 
Subsequently, in \S\ref{subsec:objective} we consider two classes of utility functions that cogently combine a penalty for the impact of
the policy on the residual energy of the nodes with guarantees for the QoS of the forwarding policy.

\subsection{System Dynamics}\label{subsec:model_dynamics}
We begin with some definitions: a node that has received a copy of the message and is not
its destination  is referred to as an \emph{infective}; a (non-destination) node that has not yet received a copy of the message is called a \emph{susceptible}.
The maximum energy capacity of all nodes is $B$ units.
 A message transmission between a pair of nodes consumes $\taus$ units of energy in the transmitter and $r$ units in the receiver{, independent of their total energy level.} 
Naturally,  $r\leq\taus$.
When an infective node contacts a susceptible at time $t$, the message is transmitted with a certain forwarding probability if the infective (transmitter) and susceptible (receiver) have at least $\taus$  and $r$
units of energy. {If either does not have the respective sufficient energy, transmission will not occur.}

Two nodes contact each other at rate $\hat\beta$.  We assume that inter-contact times are exponentially distributed and uniform among nodes, an assumption common to many mobility models (e.g., Random Walker, Random Waypoint, Random Direction, etc. \cite{groenevelt2005message}). Moreover, it is shown in~\cite{groenevelt2005message} that
\begin{equation}\label{eq:hat_beta}
\hat\beta \propto \frac{\text{average rel. speed of nodes}\times\text{communication ranges}}{\text{the roaming area}}.
\end{equation}
Assuming $t=0$ mark the moment of message generation, we define $S_i(t)$   (respectively, $I_i(t)$) to be the \emph{fraction} of nodes that are susceptible (respectively, infective) and that have $i$ energy units at time $t$.
Hence for $t\in [0,T]$: $\sum_{i=0}^{B} \left(S_i(t)+I_i(t)\right)=1$. 

At any given time, each node can observe its own
level of available energy, and its forwarding decision should, in general, utilize such information.
Hence, upon an instance of contact between a susceptible node with $i$ units of energy and an infective node with $j$ units of energy at time $t$, as long as $i\geq r$ \textbf{and} $j\geq\taus$, the message is {forwarded} with probability $u_j(t)$ ($0\leq u_j(t)\leq 1$). { We take these probabilities to be our controls $\bu(t) = \bigl(u_\taus(t),u_{\taus+1}(t),\dots,u_B(t)\bigr)\in \mathcal{U}$, where $\mathcal{U}$ is the set of piecewise continuous controls with left-hand limits\hide{ existing} at each $t \in (0, T]$, and right-hand limits\hide{ existing} at each $t \in [0, T)$.} If the message is forwarded},
the susceptible node transforms to an infective node with ${i-r}$ energy units, and the infective node {likewise} to an infective node with ${j-\taus}$ energy units.
{We assume that once an infective contacts another node, the infective can identify (through a low-load exchange of control messages) whether the other node has a copy of the message (i.e., is infective), or does not (i.e., is susceptible), whether the contacted node is a destination and also whether it has enough energy to receive the message.
We assume that the dominant mode of energy consumption is the transmission and reception of the message, and that each exchange of the control messages consumes an insignificant amount of energy.} {If a message-carrying node that has sufficient energy for one transmission contacts the destination that has yet to receive the message, the message is always forwarded to the destination.}

Let $N$ be the total number of nodes and define $\beta:=N\hat\beta$.
Following~\eqref{eq:hat_beta}, $\hat\beta$ is inversely proportional to the roaming area, which scales with $N$.
Hence, if we can define a density of nodes, $\beta$ has a nontrivial value.
The system dynamics in \hide{the mean-field}this regime \hide{(i.e., for large $N$) }over any finite interval can be approximated thus, { except at the finite points of discontinuity of $\bu$} \hide{ as follows }(\!\!\cite[Theorem~1]{gast2011mean}):

\begin{subequations}\label{model:no_recharge}
\begin{numcases}{\dot{S}_i =}
-\beta S_i\sum_{j=s}^B u_j I_j ~~\quad\quad\qquad\qquad (r\leq i\leq B),
    \label{no_rech:S_gen}\\
 0 ~\qquad\qquad\qquad\qquad\qquad\qquad (0\leq i<r), \label{no_rech:S_margin} 
\end{numcases}
\begin{numcases}{\dot{I}_i =}
 - \beta u_i I_i \sum_{j = r}^B S_j  \qquad\qquad\qquad~(B-r< i \leq B),
    \label{no_rech:I_hi_margin}\\ 
 -\beta u_i I_{i} \sum_{j=r}^{B}S_j
    + \beta S_{i+r} \sum_{j=\taus}^{B}u_j I_j~~\notag
       \\\qquad\qquad\qquad\qquad\qquad~~(B-\taus<i\leq B-r), \label{no_rech:I_midhi_margin}\\ 
 -\beta u_i I_{i} \sum_{j=r}^{B}S_j
    + \beta S_{i+r} \sum_{j=\taus}^{B}u_j I_j  + \beta u_{i+s} I_{i+s} \sum_{j=r}^B S_j\notag
      \\\qquad\qquad\qquad\qquad\qquad\qquad~~(\taus\leq i\leq B-\taus), \label{no_rech:I_gen}\\ 
   \beta S_{i+r} \sum_{j=\taus}^{B}u_j I_j
      + \beta u_{i+\taus}I_{i+\taus} \sum_{j=r}^{B}S_j
      ~ (0\leq i<\taus). \label{no_rech:I_low_margin} 
\end{numcases}
\end{subequations}
 Note that in the above differential equations and in the rest of the paper, whenever not ambiguous, the dependence on $t$ is made implicit.
We now explain each of these equations:\footnote{We will consider protocols where a destination receives at most one copy of the message by the terminal time. Nothing changes in the system dynamics if we allow multiple copies to be received because isolated transmissions have no effect on the mean-field regime.\hide{The system dynamics can ignore the single instance of the delivery of the message to the destination. This is because {when the state is represented as the fraction of nodes of each type in the mean-field regime, i.e. for large $N$}, the change in the energy distributions as a result of a single transmission of the message is negligible.
Note that in the single-delivery scenario, once the destination receives the message, subsequent contacts between infectives and the destination will not result in a transmission of the message.}
}\\
\eqref{no_rech:S_gen}: The rate of decrease in the fraction of susceptible nodes with energy level $i\geq r$ is proportional to the rate of contacts between
these nodes and {transmitting} infective nodes with energy level equal to or higher than $\taus$.\\
\eqref{no_rech:S_margin}: Susceptibles with less than $r$ units of energy cannot convert to infectives.\\
{ The rate of change in infectives of energy level $i$ is due to three mechanisms:

\begin{enumerate}
\item {Transmitting} infectives of energy level $i$ convert {to} infectives with energy level $i-\taus$ upon contact with susceptibles that have sufficient energy for message exchange. This conversion happens due to the energy consumed in transmitting the message, resulting in a decrease in infectives of energy level $i$.\label{case1}

\item Susceptibles with energy level  $i+r$  are transformed {to} infectives of energy level $i$ upon contact with {transmitting} infectives that have at least $\taus$ units of energy, swelling the ranks of infectives of energy level $i$. This conversion occurs due to the energy consumed in receiving the message.

\item {Transmitting} infectives of energy level $i+\taus$ convert {to} infectives with energy level $i$ upon contact with susceptibles that have sufficient energy for message exchange, adding to the pool of infectives of energy level $i$. Like \ref{case1}, this is due to the energy consumed in transmitting the message.
\end{enumerate}
Now, given that energy levels are upper-bounded by $B$:

\begin{enumerate}[I]
\item If $B-r<i\leq B$, only mechanism 1 is possible, as $i+\taus\geq i+r > B$, ruling out 2 and 3 respectively. This results in~\eqref{no_rech:I_hi_margin}.

\item If $B-\taus<i\leq B-r$, only mechanisms 1 and 2 are possible, as $i+\taus>B$ rules out 3, leading to~\eqref{no_rech:I_midhi_margin}. 

\item If $\taus\leq i\leq B-\taus$, all three mechanisms are in play, resulting in~\eqref{no_rech:I_gen}.

\item If $0\leq i< \taus$, only mechanisms 2 \& 3 remain, as $i-\taus<0$ rules out 1. Thus, we have~\eqref{no_rech:I_midhi_margin}.
\end{enumerate}
}

\hide{\eqref{no_rech:I_hi_margin}: The rate of decrease in the fraction of infective nodes with energy level $i$ such that $i>B-r$ is proportional to their rate of contact with any susceptible with more than $r$ units of energy. {\bf{No susceptible or infective node can become an infective with this high level of energy by receiving or sending the message, as the act of message-passing will deplete its reserves too much.}}\\
\eqref{no_rech:I_midhi_margin}: Similarly, the rate of change in the fraction of infectives with energy level $i$ such that \hide{between }$B-\taus < i \leq B-r$\hide{, along with the mechanism in~\eqref{no_rech:I_hi_margin},} is due to the transformation of susceptibles with energy level  $i+r$  upon contact with {transmitting} infectives that have at least $\taus$ units of energy, along with the mechanism in~\eqref{no_rech:I_hi_margin}. No infective 
can be transformed to an infective of such a high energy level.\\
\eqref{no_rech:I_gen}: This is the non-marginal equation for the evolution of the infectives. Here, three mechanisms are in place: \eqref{no_rech:I_hi_margin}, \eqref{no_rech:I_midhi_margin}, and one more: {transmitting} infectives of energy level $i+\taus$ convert
to infectives with energy level $i$ upon contact with susceptibles that have sufficient energy for message exchange.\\
\eqref{no_rech:I_low_margin}: Infectives with less than $\taus$ units of energy cannot convert to any other type.}

{We consider { continuous} state solutions} $\bS(t)= \bigl(S_0(t),  \dots, S_B(t)\bigr)$, $\bI(t) = \bigl(I_0(t),  \dots, I_B(t)\bigr)$ to the dynamical system~\eqref{model:no_recharge} subject to initial conditions
\begin{equation}\label{state:initial}
\begin{split}
\mathbf{S}(0) &=\mathbf{S}_0:=(S_{00},\ldots,S_{0B}),\\
\mathbf{I}(0) &=\mathbf{I}_0:=(I_{00},\ldots,I_{0B}).
\end{split}
\end{equation}
 We naturally assume that the initial conditions satisfy $\bS(0)\geq\mathbf{0}$, $\bI(0)\geq\mathbf{0}$, and $\sum_{i=0}^B \bigl(S_i(0) + I_i(0)\bigr) = 1$ (vector inequalities are to be interpreted component-wise throughout). 

We say that a state solution $(\bS(t)$, $\bI(t))$ for the system~\eqref{model:no_recharge} is \emph{admissible} if the non-negativity and {normalization} conditions
\begin{align}\label{states:constraints}
 \mathbf{S}(t)\geq\mathbf{0}, \quad \mathbf{I}(t)\geq\mathbf{0}, \quad
 \sum_{i=0}^B\left(S_{i}(t)+I_i(t)\right)=1,
\end{align}
are satisfied for all $t\in[0,T]$\hide{Here and elsewhere,}. 
We next show that {\hide{a solution of}states satisfying}~\eqref{model:no_recharge} are admissible and unique for any $\bu~\in~\mathcal{U}$:
\begin{Theorem}\label{thm:constraints}
Suppose the initial conditions satisfy $\bS(0)\geq\mathbf{0}$, $\bI(0)\geq\mathbf{0}$, and $\sum_{i=0}^B \bigl(S_i(0) + I_i(0)\bigr) = 1$, and suppose $\bu(t) = \bigl(u_\taus(t),u_{\taus+1}(t),\dots,u_B(t)\bigr)$ is any system of piecewise continuous controls. Then the dynamical system~\eqref{model:no_recharge} has a unique state solution $(\bS(t)$, $\bI(t))$, which is admissible. If $I_i(t') > 0$ for any $i$ (respectively, $S_j(t')>0$ for any $j$) and $t' \in [0, T),$ $I_i(t) > 0$ (respectively $S_j(t)>0$) for all $t > t'$. Also, for each $j,$  $S_j(t') = 0$ for all $t' \in (0, T]$ if $S_j(0) = 0.$
\end{Theorem}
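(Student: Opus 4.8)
The plan is to read \eqref{model:no_recharge} as a finite concatenation of ODE systems, one for each of the finitely many maximal subintervals of $[0,T]$ on which the control $\bu$ is continuous, and to exploit that every equation has the sign structure $\dot X_k = -X_k\,a_k + b_k$ in which $a_k,b_k\ge 0$ as soon as $(\bS,\bI)\ge\mathbf 0$ and $0\le u_j\le 1$. On any such interval $[a,b]$ of continuity of $\bu$ the right-hand sides of \eqref{model:no_recharge} are polynomials (quadratic) in $(\bS,\bI)$ and continuous in $t$, hence locally Lipschitz in the state, so standard ODE theory (Picard iteration) gives a unique maximal solution with the prescribed value at the left endpoint. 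Combining this with the a priori bound obtained below (the solution stays in the unit simplex, hence bounded) rules out finite-time escape and extends the solution to all of $[a,b]$; concatenating across the finitely many breakpoints of $\bu$, using the left-hand limit of the state at each breakpoint as the next initial condition, yields the unique continuous state solution on $[0,T]$.

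For admissibility I would first add all the equations in \eqref{model:no_recharge}: the loss term $-\beta S_i\sum_j u_jI_j$ in $\dot S_i$ is cancelled by the gain $+\beta S_{i+r}\sum_j u_jI_j$ appearing in $\dot I_{i-r}$, and the loss $-\beta u_iI_i\sum_j S_j$ in $\dot I_i$ by the gain $+\beta u_{i+\taus}I_{i+\taus}\sum_j S_j$ in $\dot I_{i-\taus}$ --- the several boundary equations being exactly the cases in which one of these partners would need an energy index outside $\{0,\dots,B\}$. Hence $\tfrac{d}{dt}\sum_{i=0}^B\bigl(S_i(t)+I_i(t)\bigr)\equiv 0$ and the sum stays at its initial value $1$. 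For non-negativity I would verify the quasi-positivity (Kamke) condition: if $(\bS,\bI)\ge\mathbf 0$ and a single coordinate vanishes, the corresponding right-hand side of \eqref{model:no_recharge} is a finite sum of products of non-negative quantities, hence $\ge 0$; with the Lipschitz property this makes the non-negative orthant forward invariant. Equivalently, the integrating-factor representation $X_k(t)=X_k(t')\exp\!\bigl(-\!\int_{t'}^{t} a_k\bigr)+\int_{t'}^{t}\exp\!\bigl(-\!\int_{\tau}^{t} a_k\bigr)\,b_k(\tau)\,d\tau$ with $b_k\ge 0$ gives $X_k\ge 0$. Admissibility then yields $0\le S_i(t),I_i(t)\le 1$, closing the loop with the existence argument.

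For the propagation statements the $\bS$-equations decouple: $S_j$ is constant for $j<r$, while for $j\ge r$ one has $S_j(t)=S_j(0)\exp\!\bigl(-\beta\int_0^t\sum_k u_kI_k\,d\tau\bigr)$; since the exponential is strictly positive, $S_j(0)=0$ forces $S_j\equiv 0$ on $[0,T]$ and $S_j(t')>0$ forces $S_j(t)>0$ for all $t$. For the infectives, in the representation above $a_i=\beta u_i\sum_{k\ge r}S_k\in[0,\beta]$ (using the normalization) and $b_i\ge 0$, so $I_i(t)\ge I_i(t')\,e^{-\beta(t-t')}>0$ whenever $I_i(t')>0$.

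I expect the non-negativity of the $\bI$-components to be the main obstacle: unlike the $\bS$-equations, the $I_i$-equations are genuinely coupled through their source terms $b_i$, which themselves contain $I$-variables, so a bare integrating-factor argument is circular. The clean fix is the invariance argument via the quasi-positivity condition, or a standard $\varepsilon$-perturbation $\dot X^{\varepsilon}=F(t,X^{\varepsilon})+\varepsilon\mathbf 1$ whose solution stays strictly positive because the perturbed field points strictly inward on the boundary, followed by $\varepsilon\downarrow 0$. A minor but real point is checking that the telescoping for the normalization identity is exact at all the energy-level boundaries; the concatenation across the discontinuities of $\bu$ is routine.
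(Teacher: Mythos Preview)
Your proposal is correct and follows essentially the same route as the paper: local Lipschitz (quadratic) right-hand sides for existence and uniqueness, summing all equations for the normalization identity, and the sign structure $\dot X_k=-X_k a_k+b_k$ with $a_k$ bounded and $b_k\ge 0$ for non-negativity and for the positivity-propagation statements via integrating factors. The only packaging difference is that the paper handles non-negativity by first proving strict positivity for \emph{interior} initial data (a first-hitting-time argument combined with the same integrating-factor bounds you use) and then invoking continuous dependence on initial conditions (their Lemma~\ref{baselemma}) to pass to boundary data---which is exactly your $\varepsilon$-perturbation idea and equivalent to your Kamke/quasi-positivity check.
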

 
In our proof, we use the following  general result: \hide{(proved in Appendix-\ref{appendix_lemma_1}):}
\begin{Lemma}
\label{baselemma}
Suppose the vector-valued function $\boldf = (f_i,1\leq i\leq N)$  has component functions given by quadratic forms $
  f_i(t, \bx) = \bx^T Q_i(t) \bx \quad (t\in[0,T];\; \bx\in\BS)$,
  where $\BS$  {is the set of $N$-dimensional vectors} $\bx = (x_1,\dots, x_N)$ satisfying $\bx\geq\mathbf{0}$ and $\sum_{i=1}^N x_i = 1$, and $Q_i(t)$ is a  matrix whose  components are uniformly, absolutely bounded over $[0,T]$. Then, for an $N$-dimensional vector-valued function $\bF$, the system of differential equations 
\begin{equation}\label{Vectordf}
\begin{split}
  \dot{\bF}(t) = \boldf(t, \bF) \qquad(0<t\leq T)\\
    \quad\text{subject to initial conditions $\bF(0)\in \BS$}
\end{split}
\end{equation}
 has a unique solution,
 $\bF(t)$, which varies continuously with the initial conditions $\bF_0 \in \BS$  at each $t\in[0,T]$.
 \end{Lemma}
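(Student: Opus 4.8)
The plan is a Picard--Lindel\"of existence/uniqueness argument together with a Gr\"onwall estimate for continuous dependence, with two features that need care: the right-hand side is quadratic in the state (hence only \emph{locally} Lipschitz, with a priori superlinear growth), and it is only piecewise continuous in $t$, since the entries of the matrices $Q_i(t)$ inherit the piecewise continuity of the controls $\bu$.

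First I would pin down the regularity of $\boldf$. Although each $f_i(t,\cdot)$ is described on $\BS$, the quadratic form $\bx\mapsto\bx^{T}Q_i(t)\bx$ makes sense on all of $\mathbb{R}^N$, with gradient $\bigl(Q_i(t)+Q_i(t)^{T}\bigr)\bx$; since every entry of $Q_i(t)$ is bounded in modulus by a fixed $M$ on $[0,T]$, on any ball $\{\|\bx\|\le R\}$ the map $\boldf(t,\cdot)$ is Lipschitz with a constant depending only on $R$ and $M$, \emph{uniformly in} $t$. Moreover $\bu\in\mathcal U$ splits $[0,T]$ into finitely many subintervals on whose interiors $t\mapsto Q_i(t)$ is continuous, so there $\boldf$ is jointly continuous in $(t,\bx)$ and locally Lipschitz in $\bx$. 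Applying the classical local existence/uniqueness theorem on the first subinterval, restarting at each breakpoint from the (still continuous) state value, and concatenating, produces a unique absolutely continuous $\bF$ solving $\dot{\bF}=\boldf(t,\bF)$ at all but the finitely many breakpoints, on a maximal subinterval of $[0,T]$. (If one wants the stated hypothesis in full generality, with $Q_i(t)$ merely bounded and measurable, one replaces Picard--Lindel\"of by Carath\'eodory's existence theorem; nothing else changes.)

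The crux is showing this maximal subinterval is all of $[0,T]$, i.e.\ excluding finite-time blow-up, which a quadratic field can exhibit (cf.\ $\dot x = x^2$). I would do this by confining $\bF(t)$ to a compact set, the natural choice being $\BS$ itself. For the forms arising here, $\sum_i f_i(t,\bx)\equiv 0$ on $\BS$, so $\frac{d}{dt}\sum_i F_i(t)=0$ and the sum stays at $1$; and on each face $\{x_i=0\}$ of $\BS$ the $i$-th component of $\boldf$ is nonnegative (on the face $S_i=0$ one has $\dot S_i=0$, and on the face $I_i=0$ every term of $\dot I_i$ is a nonnegative product provided $\bS,\bI\ge\mathbf 0$), so no coordinate can cross $0$. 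Hence $\BS$ is forward-invariant, $\bF(t)$ stays in the compact set $\BS\subseteq\{\|\bx\|\le 1\}$, blow-up is impossible, and the solution extends uniquely to all of $[0,T]$. (If one prefers not to invoke this conservation/sign structure, an equivalent route is to replace $\boldf$ outside a fixed bounded neighbourhood of $\BS$ by a globally Lipschitz extension $\tilde\boldf$: the extended system has a unique solution on $[0,T]$ by Picard--Lindel\"of, since a globally Lipschitz field grows at most exponentially, and one then checks, as in the proof of Theorem~\ref{thm:constraints}, that the solution never leaves $\BS$ and so solves the original system.)

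Finally, continuous dependence on $\bF_0\in\BS$ follows from Gr\"onwall's inequality: if $\bF$ and $\bG$ solve the system from initial data $\bF_0,\bG_0\in\BS$, both stay in $\BS\subseteq\{\|\bx\|\le 1\}$, where $\boldf(t,\cdot)$ is $L$-Lipschitz uniformly in $t$, so $\|\bF(t)-\bG(t)\|\le\|\bF_0-\bG_0\|+L\int_0^{t}\|\bF(\tau)-\bG(\tau)\|\,d\tau$ and therefore $\|\bF(t)-\bG(t)\|\le e^{LT}\|\bF_0-\bG_0\|$ for every $t\in[0,T]$, which is the asserted (indeed Lipschitz) continuity at each $t$. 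The only genuinely delicate step is the passage from local to global existence; the remainder is bookkeeping around the finitely many discontinuities of $\bu$.
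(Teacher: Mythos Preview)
Your argument is correct and considerably more explicit than the paper's. The paper dispatches the lemma in a single sentence: it observes that $\boldf(t,\cdot)$, being built from quadratic forms, is Lipschitz on the compact set $[0,T]\times\BS$, and then invokes a standard existence/uniqueness/continuous-dependence theorem (Theorem~A.8 in Seierstad--Sydsaeter) that already accommodates right-hand sides with finitely many discontinuities in $t$. Your Picard/Gr\"onwall development is essentially an unpacking of that citation, and your treatment of the breakpoints of $\bu$ by concatenation matches what such a theorem does internally.

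One point worth flagging: your primary route to global existence, via forward-invariance of $\BS$, appeals to the identities $\sum_i f_i\equiv 0$ on $\BS$ and $f_i\ge 0$ on the face $\{x_i=0\}$. These hold for the specific system~\eqref{model:no_recharge} but are \emph{not} consequences of the generic hypothesis of the lemma (arbitrary bounded $Q_i(t)$), so strictly speaking that paragraph proves the lemma only for the system at hand. The paper has the same logical dependence---it relies on the Lipschitz bound holding along the trajectory, which in turn needs the trajectory to stay in $\BS$, a fact established separately in the proof of Theorem~\ref{thm:constraints}. Your parenthetical alternative (extend $\boldf$ to a globally Lipschitz field outside a neighbourhood of $\BS$, get a global solution, then verify a posteriori that it never leaves $\BS$) is the cleaner way to close this loop at the level of the lemma, and is the only part of your write-up that genuinely adds to what the paper records.
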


{This follows from standard results in the theory of ordinary differential equations~\cite[Theorem A.8, p. 419]{seierstad1987optimal}  given the observation that $\boldf(t,\bF)$ is comprised of quadratic forms and is thus Lipschitz over $[0,T]*\BS$.}

\begin{proof}{
We write $\bF(0) = \bF_0$,\hide{, in a slightly informal notation, we may thus} and in a slightly informal notation, $\bF = \bF(t) = \bF(t, \bF_0)$ to acknowledge the dependence of $\bF$ on the initial value $\bF_0$. }

We first verify the normalization condition of the admissibility criterion.  
By summing the left and right sides of the system of equations~\eqref{model:no_recharge} we see that
$
  \sum_{i=0}^B\bigl(\dot{S}_i(t) + \dot{I}_i(t)\bigr) = 0,
$
and, in view of the initial normalization 
{$\sum_{i=0}^B\bigl(S_i(0) + I_i(0)\bigr) = 1$, we have 
$ \sum_{i=0}^B\bigl(S_i(t) + I_i(t)\bigr) = 1$ for all $t$.} 

We now verify the non-negativity condition.  Let $\bF = (F_1,\dots,F_N)$ be the  state vector in $N = {2(B+1)}$ dimensions whose elements are comprised of $(S_i,0\leq i\leq B)$ and $(I_i, 0\leq i\leq B)$ in some order.\hide{,  and  $\BS$  {is the set of $N$-dimensional vectors} $\bx = (x_1,\dots, x_N)$ satisfying $\bx\geq\mathbf{0}$ and {$\sum_{i=1}^N x_i = 1$}.} The system of equations~\eqref{no_rech:S_gen}--\eqref{no_rech:I_low_margin} can thus be represented as $\dot{\bF} = \boldf(t, \bF)$, where for $t\in[0,T]$ and  $\bx\in\BS$, the vector-valued function $\boldf = (f_i,1\leq i\leq N)$ has component functions  $
  f_i(t, \bx) = \bx^T Q_i(t) \bx
$
in which $Q_i(t)$ is a  matrix whose non-zero elements are of the form $\pm\beta u_k(t)$. Thus,  the components of $Q_i(t)$ are uniformly, absolutely bounded over $[0,T]$.\hide{ We may now write the system~(\ref{no_rech:S_gen}--\ref{no_rech:I_low_margin}) compactly in the form of the vector differential equation~\eqref{Vectordf},\hide{
} the solutions of which depend on time and also implicitly on the initial conditions. }\hide{Writing  $\bF(0) = \bF_0$, in a slightly informal notation, we may acknowledge this dependence and write $\bF = \bF(t) = \bF(t, \bF_0)$. } Lemma~\ref{baselemma} establishes that the solution $\bF(t, \bF_0)$ to the system
 \eqref{no_rech:S_gen}--\eqref{no_rech:I_low_margin} is unique and varies continuously with the initial conditions $\bF_0$; it clearly varies continuously  with time.   Next, using elementary calculus, we show in the next paragraph that if $\bF_0\in \text{\bf Int }\BS$ (and, in particular, each component of $\bF_0$ is positive), then each component of the solution $\bF(t, \bF_0)$ of \eqref{no_rech:S_gen}--\eqref{no_rech:I_low_margin}  is positive at each $t\in[0,T]$.\footnote{Throughout the paper, we use positive for strictly positive, etc.}  Since  $\bF(t, \bF_0)$  varies continuously with $\bF_0$, it follows that $\bF(t,\bF_0)\geq\mathbf{0}$ for all $t\in[0,T]$, {$\bF_0\in \BS$}, which completes the overall proof.

Accordingly, let each component of $\bF_0$ be positive. Since the solution $\bF(t, \bF_0)$  varies continuously with time,
there exists a time, say $t' > 0$, such that each component of  $\bF(t, \bF_0)$ is positive
in the interval $[0, t').$ The result follows trivially if $t' \geq T$. Suppose now that there exists $t''<T$
such that each component of  $\bF(t, \bF_0)$ is positive
in the interval $[0, t'')$, and at least one component is $0$ at $t''.$
We first show that such components can not be $S_i$ for any $i \geq 0$ and subsequently
rule out $I_i$ for all $i \geq 0.$  Note that $u_j(t), I_j(t), S_j(t)$ are bounded in $[0, t'']$ (recall $\sum_{j=0}^B\left(S_j(t)+I_j(t)\right) = 1, S_j(t)
\geq 0, I_j(t) \geq 0$ for all $j, t \in [0, t'']$). First, let $r \leq i \leq B.$ From \eqref{no_rech:S_gen},
$S_i(t'') = S_i(0) e^{-\beta\int_{0}^{t''} \sum_{j=s}^B u_j(t)I_j(t) \, dt}.$
Since all $u_j(t), I_j(t)$ are bounded in $[0, t'']$,  and $S_i(0) > 0$, $\beta > 0$, therefore $S_i(t'') > 0$.
From \eqref{no_rech:S_margin}, $S_i(t'') = S_i(0) > 0$ for $0 \leq i < r.$ Thus, $S_i(t'') > 0$ for all $i.$ Since $S_i(t) > 0$, $I_i(t) \geq 0$ for all $i, t \in [0, t'']$, from \eqref{no_rech:I_hi_margin} --
\eqref{no_rech:I_gen}, $\dot{I}_i \geq -\beta u_i I_i \sum_{j=r}^B S_j$  for all $i \geq \taus$ in the interval $[0,t'']$. Thus, $I_i(t'') \geq I_i(0) e^{-\beta\int_{0}^{t''} u_i(t)\sum_{j=r}^B S_j(t) \, dt}.$ Since all $u_j(t), I_j(t), S_j(t)$ are bounded in $[0, t'']$, and $I_i(0) > 0, \beta > 0$,  it follows that $I_i(t'') > 0$ for all $i \geq \taus.$ Finally, since $S_i(t) > 0, I_i(t) \geq 0$ for all $i, t \in [0, t'']$,  from \eqref{no_rech:I_low_margin}, it  follows that $\dot{I}_i \geq 0$ for all $i < \taus$, $t \in [0, t'']$. Thus, $I_i(t'') \geq I_i(0) > 0$
for all $i < \taus.$ This contradicts the definition of $t''$ and in turn implies that $\bF(t,\bF
_0)>0$ for all $t \in [0,T]$, $\bF_0 \in \text{\bf Int } \BS$.

Since the control and the unique state solution $\bS(t)$, $\bI(t)$ are non-negative, \eqref{no_rech:S_gen}
implies that  $\bS(t)$ is a non-increasing function of time. Thus, $S_j(t) = 0$ if $S_j(0) = 0$ for any $j.$  Using the  argument in the above paragraph and starting from a $t' \in [0, T)$ where $S_j(t') > 0$, or $I_j(t') > 0$,
it may be shown that $S_j(t) > 0$ or $I_j(t) > 0$ respectively for all $t > t'.$
 \end{proof}


The above proof allows for choices of $T$
that depend on the controls $\bu$, provided such controls result in finite $T$. For the problem to be non-trivial, we assume henceforth that there exist $i\geq r$, $j\geq \taus$ for which $S_{i}(0)>0$ and $I_{j}(0)>0$.

{ We conclude this section with another technical lemma which we will use later:

\begin{Lemma}\label{dotI}
For all $t\in (0,T)$ and all $i$, $|\dot{I}_i(t^+)|$ and $|\dot{I}_i(t^-)|$ exist and are bounded, as is $|\dot{I}_i(T^-)|$.
\end{Lemma}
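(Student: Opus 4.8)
The plan is to read $\dot I_i$ directly off the dynamics~\eqref{model:no_recharge} and observe that each one-sided limit is a continuous (indeed polynomial) function of the one-sided limits of the states and controls, all of which exist. First I would record two facts from what precedes. By Theorem~\ref{thm:constraints} the state solution $(\bS,\bI)$ is continuous and admissible on $[0,T]$; in particular $S_i(t),I_i(t)\in[0,1]$ for every $i$, and $\sum_j S_j(t)\le 1$, $\sum_j I_j(t)\le 1$, so also $\sum_{j=\taus}^B u_j(t)I_j(t)\le 1$. Second, away from the finitely many discontinuity points of $\bu$, each $\dot I_i$ equals the corresponding right-hand side in \eqref{no_rech:I_hi_margin}--\eqref{no_rech:I_low_margin}, and each of those right-hand sides is a sum of at most three terms, every one a product of $\beta$ with one control value $u_k$ (or with a sum $\sum_j u_j I_j$) and with bounded state factors; hence $|\dot I_i(t)|\le 3\beta$ wherever it is classically defined. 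So $(\bS,\bI)$ is piecewise $C^1$: continuous on $[0,T]$ and $C^1$ on each subinterval between consecutive discontinuities of $\bu$.

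Next I would fix $t\in(0,T)$ and treat the right-hand limit; the left-hand limit and the point $t=T$ are handled identically, using that $\bu$ has left-hand limits at every point of $(0,T]$. Since $\bu$ has only finitely many discontinuities, there is $\varepsilon>0$ with $\bu$ continuous on $(t,t+\varepsilon)$, so $\dot I_i$ is continuous there and equals the appropriate right-hand side of~\eqref{model:no_recharge}. As $\tau\downarrow t$: continuity of the states gives $S_l(\tau)\to S_l(t)$ and $I_m(\tau)\to I_m(t)$, and the defining property of $\mathcal{U}$ gives $u_k(\tau)\to u_k(t^+)$ for each $k$. Each right-hand side of~\eqref{model:no_recharge} is a polynomial in these finitely many quantities, so $\dot I_i(\tau)$ converges, as $\tau\downarrow t$, to the number $L_i^+$ obtained by substituting $u_k(t^+)$ into that polynomial, and $|L_i^+|\le 3\beta$ by the bound above.

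It remains to upgrade this to the statement that the one-sided derivative $\dot I_i(t^+)$ exists and equals $L_i^+$, and this is the only step requiring care. Because $I_i$ is continuous on $[t,t+\varepsilon)$ and differentiable on $(t,t+\varepsilon)$ with $\dot I_i(\tau)\to L_i^+$ as $\tau\downarrow t$, the standard corollary of the mean value theorem --- equivalently, writing $\frac{1}{h}\bigl(I_i(t+h)-I_i(t)\bigr)=\frac{1}{h}\int_t^{t+h}\dot I_i(\tau)\,d\tau$ and letting $h\downarrow 0$ --- yields $\dot I_i(t^+)=L_i^+$. The same argument on $(t-\varepsilon,t)$, using the left-hand limits $u_k(t^-)$, gives $\dot I_i(t^-)$ for $t\in(0,T)$, and on $(T-\varepsilon,T)$ it gives $\dot I_i(T^-)$; in every case the value is bounded by $3\beta$, which proves the lemma. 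Thus the main obstacle is purely the elementary-analysis point that a one-sided limit of the derivative is a one-sided derivative of the (continuous) solution; boundedness is immediate from admissibility together with the fact that the controls take values in $[0,1]$.
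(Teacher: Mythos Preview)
Your argument is correct and follows essentially the same route as the paper: use admissibility of the states (Theorem~\ref{thm:constraints}) and boundedness and piecewise continuity of the controls to conclude that the right-hand sides of~\eqref{model:no_recharge} are bounded and have one-sided limits everywhere. Your write-up is more explicit than the paper's---you give the concrete bound $3\beta$ and include the mean-value-theorem step upgrading $\lim_{\tau\downarrow t}\dot I_i(\tau)$ to the one-sided derivative---but the underlying idea is the same; the paper is content with the one-sided limit of the derivative function, so your final step is extra rigor rather than a different approach.
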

\begin{proof}
 The states are admissible~(Theorem \ref{thm:constraints}) and continuous, and the controls are bounded by definition. Hence, due to~\eqref{model:no_recharge}, $|\dot{I}_i (t)|$ exists and is bounded at all points except the finite set of points of discontinuity of the controls, and continuous over each interval over which $\bu$ is continuous. Thus, $|\dot{I}_i(t^+)|$ and$|\dot{I}_i(t^-)|$ exist and are bounded for all $t\in (0,T)$. Using the same reasoning,  $|\dot{I}_i(T^-)|$ also exists and is bounded.
\end{proof}}

\vspace{-0.2in}
\subsection{{Throughput constraint and }objective functions}\label{subsec:objective}
The objective function of the network {can represent} both a measure of the efficacy of the policy in ensuring timely message delivery, and the effect of the policy on the residual energy reserves of the nodes. \hide{In what follows, }{We first develop measures for each of these cases, and then utilize them to define an objective function and a constraint on the achieved network throughput.}

\paragraph*{Throughput constraint}
One plausible measure of QoS in the context of DTNs is the probability of delivery of the message to the destination \hide{(in the single-delivery scenario) or to as many destination nodes as possible (in the multi-delivery scenario)}before a terminal time $T$. 
We examine two cases: one in which a minimum probability of delivery is mandated on the message before a fixed terminal time $T$, and another in which the time-frame of message delivery is flexible and the goal is to meet the minimum probability of delivery requirement as soon as possible. In what follows, we discuss these two cases.



Let $\hat\beta_0$ be the rate of contact of a node with the destination,  potentially different from $\hat\beta$,
and define $\beta_0:=N\hat\beta_0$.

Following from the exponential distribution of the inter-contact times, the {{\em mandated probability of delivery}} constraint 
$\mathbb{P}(\text{delivery})\geq p$ (i.e., the message being delivered to the destination with probability greater than or equal to $p$ within $[0,T]$) implies that:
$
1-\exp\left(-\int_{0}^{T}\!\beta_0\sum_{i=\taus}^B I_i(t)\,dt\right)\geq p
$. \footnote{This is because $P(\text{delivery})=\mathbb{E}\{\mathbf{1}_{\sigma=t}\}$, where $\sigma$ is the time of delivery of the message to the destination. Therefore:
\begin{align*}
P(delivery)&=\hide{\mathbb{E}\{\mathbf{1}_{\sigma=t}\}
=}\int_{0}^T\!\ P(\sigma=t)\,dt\\
 &=\int_{0}^T\! \exp \left(-\hat\beta_0\int_{0}^{t}\!\sum_{i=\taus}^B N I_i(\xi)\,d\xi \right)
\cdot \hat\beta_0 \sum_{i=\taus}^BNI_i(t)\,dt\\&=1-\exp\left(-\int_{0}^{T}\!\beta_0\sum_{i=\taus}^B I_i(t)\,dt\right).
\end{align*}
A special case of this \hide{formula} was shown in~\cite[Appendix A]{groenevelt2005message}
and \cite[\S II.A]{altman2010optimal}.}

 Note that the {\hide{second}exponential} term in the LHS is the probability that no contact occurs between the destination and  any infective with sufficient energy  during the interval of $[0,T]$.

Also notice that similar to~\eqref{eq:hat_beta}, $\hat\beta_0$ is inversely proportional to the roaming area, which itself scales with $N$.
{\hide{Therefore, as long as we can define a meaningful density (number of nodes divided by their total roaming area), 
this probability is nontrivial.}} 
Another point to note is that the summation inside the integral starts from index $\taus$, since infective nodes with less than $\taus$ units of energy cannot forward their message to the destination upon potential contact.
This is equivalent to a throughput constraint:
\begin{align}\label{path_constraint}
 \int_{0}^{T}\!\sum_{i=\taus}^B I_i(t)\,dt\geq -\ln (1-p)/\beta_0.
\end{align}

In the first case, referred to as the \emph{fixed terminal time} problem,  the terminal time $T$ is fixed and the throughput constraint is satisfied
(along with minimizing the adverse effects on the residual energy of the nodes which we will discuss next) {through appropriate choice of control function $\bu$, if any such functions exist.}
In the second case, referred to as the \emph{optimal stopping time} problem,  for every  choice of the  control function $\bu$, the  terminal time $T$ is chosen to
satisfy \eqref{path_constraint}{\hide{, specifically}} with equality. The terminal time
is therefore variable and depends on the choice of $\bu.$ {\hide{There exists}Such a $T$ exists} for a given control $\bu$ if and only if for the resulting states
\begin{equation}{ \label{omit}
\lim_{T'\rightarrow\infty} \int_{0}^{T'}\!\sum_{i=\taus}^B I_i(t)\,dt \geq  -\ln (1-p)/\beta_0.}
\end{equation}
 The throughput constraint will not be satisfied in any finite time horizon for controls
 that do not satisfy the above. We will therefore exclude such controls in the optimizations we formulate next. Note that if the system uses a zero-control
  (i.e., $\bu(t) = (0, \ldots, 0)$ at all $t$) then $S_i(t) = S_i(0)$ and $I_i(t) = I_i(0)$ for all $t$; thus, since $\sum_{i=s}^B I_i(0) > 0$, \eqref{omit} holds. Therefore, there exists at least one control
  that satisfies \eqref{omit}.  Since $T$ is finite for every control that satisfies \eqref{omit},
  the system is admissible for each such control as well.

\paragraph*{Energy cost of the policy}
In the simplest representation of the trade-off with the energy overhead,
one can think of maximizing the aggregate remaining energy in the network at the terminal time, irrespective of how it is distributed. It is however desirable for the network to avoid creating nodes with critically low energy reserves.
{\hide{Specifically, nodes with lower residual energy can {only} contribute in relaying and/or generating future messages for shorter durations. In the extreme case, {having} a sizable fraction of  depleted nodes can gravely jeopardize the functionality of the network.}}
We capture the impact of a forwarding policy on the residual energy reserves of the nodes by penalizing the nodes that have lower energy levels. Specifically, denoting the terminal time as $T$, the overall penalty associated with the distribution of the residual energies of nodes at $T$, henceforth referred to as the {\em energy cost} of the policy, is captured by:
\hide{\begin{gather}}
$\sum_{i=0}^B a_i\left(S_i(T)+I_i(T)\right)$,
\hide{\end{gather}}
in which, $\{a_i\}$ is a \emph{decreasing} sequence in $i$, i.e., a higher penalty is associated with lower residual energies at $T$.

{The trade-off can now be stated as follows:} by using a more aggressive forwarding policy (i.e., higher $u_i(t)$'s and for longer durations),
the message propagates faster and there is a greater chance of delivering the message to 
the destination {in a timely manner}.
However, this will lead to lesser overall remaining energy in the nodes upon delivery of the message, and it will potentially  push the energy reserves of some nodes to critically low levels, degrading the future performance of the  network.

\paragraph*{Overall Objective and Problem Statements}
We now  state the two optimization problems {for which we provide necessary structural results for optimal forwarding policies in \S\ref{sec:Structural_Results}.}

\textbf{Problem 1: Fixed Terminal Time}
Considering a fixed terminal time $T,$ we seek to maximize the following utility\hide{ overall utility function}:
  \begin{gather}
  R=- \sum_{i=0}^B a_i\left(S_i(T)+I_i(T)\right)\label{objective_single}
\end{gather}
 by dynamically selecting the vector  $\bu(t) = (u_{\taus}(t),\ldots,u_{B}(t))$ of piece-wise continuous controls subject to control constraints $0\leq u_{i}(t)\leq 1$ for all $\taus \leq i\leq B$, $0\leq t\leq T$ and throughput constraint~\eqref{path_constraint}. States $\bS(t)$ and $\bI(t)$
 satisfy state  dynamics~\eqref{model:no_recharge} and positivity and normalization conditions~\eqref{states:constraints}.  

\textbf{Problem 2: Optimal Stopping Time} We seek to  minimize {a combination of} a penalty
associated with the terminal time $T$ {(the time taken to satisfy the throughput constraint~\eqref{path_constraint})} and one associated with the adverse effects on the residual energy of nodes through choice of the control $\bu$.
We represent the  penalty  associated with terminal time $T$ as $f(T)$. We make the natural
assumption that $f(T)$ is \emph{increasing} in $T$.  We further assume that $f(T)$ is differentiable  (thus $f'(T)>0$){\hide{ and $f(T)$ is finite for all finite values of $T$}}. Considering a variable terminal time $T$ that
is selected to satisfy \eqref{path_constraint} with equality, the system seeks  to maximize\hide{ the overall utility function}:
\begin{equation}\label{objective_single_stopping}
R= \hide{\max_{0\leq u_i\leq 1}} -f(T)-\sum_{i=0}^Ba_i(S_i(T)+I_i(T))
\end{equation}
by dynamically regulating the piecewise continuous set of controls $\bu(t) = (u_{\taus}(t),\ldots,u_{B}(t))$ subject to the control constraints $0\leq u_{i}(t)\leq 1$ for all $\taus \leq i\leq B$, $0 \leq t \leq T$ and \eqref{omit}.  As in \hide{fixed terminal time}Problem 1,  states $\bS(t)$ and $\bI(t)$
 satisfy state  dynamics~\eqref{model:no_recharge} and positivity and normalization conditions~\eqref{states:constraints}.

\section{Optimal Forwarding Policies}\label{sec:Structural_Results}
We identify the structure of the optimal forwarding policies in \S\ref{sec:mainresults} and prove them in \S\ref{sec:proof1} and \$\ref{sec:proof2} respectively. {Our theorems apply to both the Fixed Terminal Time and Optimal Stopping Time problem statements.}

\subsection{Structure of the optimal controls}
\label{sec:mainresults}
We establish that the optimal dynamic forwarding policies require the nodes to opportunistically forward the message to any node that they encounter until a threshold time that depends on their current remaining energy.\footnote{As an infective node transmits, its energy level sinks;  the threshold of each infective {node }should therefore be measured with regards to the \emph{residual} level of energy (and not, for example, the starting level).} Once the threshold is passed, they cease {\hide{to forward}forwarding} the message until the time-to-live (TTL) of the message.
In the language of control theory, we show that, excluding the optimal controls related to energy levels for which the fraction of infectives is zero throughout, all optimal controls are bang-bang with at most one jump from one to zero. In the excluded cases, optimal controls do not affect the evolution of states or objective values. 

\begin{Theorem}\label{Thm:Gen_structure}
Suppose the set $\mathcal{U}^*$ of optimal controls is not empty.\footnote{If $\mathcal{U}^*$  is non-empty, the problem is feasible, i.e., there exists at least one control for which the throughput constraint holds. But, even if the problem is feasible,  $\mathcal{U}^*$  may be empty, albeit rarely. For example,  there may be an infinite  sequence of optimal controls such that the objective values constitute a bounded increasing sequence of positive real numbers; such a sequence will have a limit but the limiting value may not be attained by any control.}  
Then for all optimal controls $\bu$ in $\mathcal{U}^*$, and for all $\taus\leq i\leq B$ such that $I_i \not \equiv 0$, there exists a $t_i \in [0, T]$  such that   $u_i(t)=1$ for $0 < t < t_i$  and $u_i(t) = 0$ for $t_i < t \leq T.$\footnote{Since the optimal controls
associated with energy levels for which the population of the infectives is zero throughout do not influence the evolution of states or the objective values, this theorem implies that unless $\mathcal{U}^*$ is empty, there exists an optimal control in $\mathcal{U}^*$ that will have the reverse-step function  structure posited in the theorem for all $\taus\leq i\leq B$. Note that the irrelevance of optimal controls associated with energy levels with zero population of infectives implies that the optimal controls are not, in general, unique.} Moreover, under any optimal control, for all $\taus\leq i\leq B$, either $I_i(t) = 0$ for all $t \in [0, T]$ or
$I_i(t) > 0$ for all $t \in (0, T].$
\end{Theorem}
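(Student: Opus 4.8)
The plan is to apply the Pontryagin Maximum Principle (PMP) to an arbitrary optimal control $\bu\in\mathcal{U}^*$. First I would convert the integral throughput constraint~\eqref{path_constraint} (resp.~\eqref{omit}) into a terminal constraint by adjoining a scalar state $y$ with $\dot y=\sum_{i\geq\taus}I_i$, $y(0)=0$, and $y(T)\geq-\ln(1-p)/\beta_0$; in Problem~2, $T$ is additionally free and pinned to the surface $y(T)=-\ln(1-p)/\beta_0$. Writing the Hamiltonian $H=\sum_i\lambda^S_i\dot S_i+\sum_i\lambda^I_i\dot I_i+\mu\dot y$, the PMP~\cite{seierstad1987optimal} furnishes absolutely continuous costates solving $\dot\lambda=-\partial H/\partial(\text{state})$, transversality conditions obtained from the terminal payoff $-\sum_i a_i\bigl(S_i(T)+I_i(T)\bigr)$ (and, for Problem~2, the extra relation for the free terminal time involving $f'(T)$), a constant multiplier $\mu\geq0$ with complementary slackness, and pointwise maximization of $H$ over $\bu$. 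Because the dynamics~\eqref{model:no_recharge} are affine in each $u_i$ and \emph{every} occurrence of $u_i$ in the system carries the factor $I_i$, the switching function factors as $\phi_i(t)=\beta\,I_i(t)\,\psi_i(t)$, where a short computation gives
\[
\psi_i(t)=\Gamma(t)+\Bigl(\textstyle\sum_{m=r}^{B}S_m(t)\Bigr)\bigl(\lambda^I_{i-\taus}(t)-\lambda^I_i(t)\bigr),\qquad \Gamma(t)=\textstyle\sum_{m=r}^{B}\bigl(\lambda^I_{m-r}(t)-\lambda^S_m(t)\bigr)S_m(t),
\]
so the $i$-dependence is concentrated in $\lambda^I_{i-\taus}-\lambda^I_i$; note $\psi_i$ is continuous since the costates are absolutely continuous and the states are continuous by Theorem~\ref{thm:constraints}. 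PMP then forces $u_i(t)=1$ whenever $\phi_i(t)>0$ and $u_i(t)=0$ whenever $\phi_i(t)<0$.

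If $I_i\equiv0$ on $[0,T]$ then $\phi_i\equiv0$ and, as the statement notes, $u_i$ affects neither states nor payoff --- these are the excluded cases. For the remaining levels I would run a downward induction on $i=B,B-1,\dots,\taus$ that simultaneously proves the ``moreover'' clause (either $I_i\equiv0$ or $I_i>0$ on $(0,T]$) and the reverse-step structure. For the ``moreover'' clause: if $I_i(0)>0$, Theorem~\ref{thm:constraints} gives $I_i>0$ on $(0,T]$ at once; if $I_i(0)=0$ but $I_i\not\equiv0$, the nonnegative inflow $\beta S_{i+r}\sum_{j\geq\taus}u_jI_j+\beta u_{i+\taus}I_{i+\taus}\sum_{m\geq r}S_m$ driving $\dot I_i$ must be positive somewhere; using that $S$ is non-increasing (so a susceptible level that starts at $0$ stays $0$), that $\sum_{m\geq r}S_m>0$ throughout (the standing assumption and Theorem~\ref{thm:constraints}), and the inductive reverse-step structure at the strictly higher levels $i+r$ and $i+\taus$ (which makes the relevant controls either equal $1$ on some $(0,\varepsilon)$ or identically zero), one checks that this inflow is either identically zero --- contradicting $I_i\not\equiv0$ --- or positive on some $(0,\varepsilon)$, whence $I_i>0$ there and, by Theorem~\ref{thm:constraints}, on all of $(0,T]$.

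Given the ``moreover'' clause, on $(0,T]$ we have $\operatorname{sign}\phi_i=\operatorname{sign}\psi_i$, so the reverse-step claim reduces to showing that $\psi_i$ has at most one zero on $(0,T]$ and is positive to the left of it. For this I would differentiate $\psi_i$ via the state and costate ODEs --- exploiting that the $i$-dependent part is only $\lambda^I_{i-\taus}-\lambda^I_i$, so one mainly needs $\dot\Gamma$ and $\tfrac{d}{dt}\bigl[(\textstyle\sum_m S_m)(\lambda^I_{i-\taus}-\lambda^I_i)\bigr]$ --- and show that at \emph{any} instant $\bar t$ with $\psi_i(\bar t)=0$ one has $\dot\psi_i(\bar t^{+})<0$ and $\dot\psi_i(\bar t^{-})<0$: substituting the relation $\psi_i(\bar t)=0$ to eliminate one costate combination and invoking the nonnegativity/normalization of the states should collapse $\dot\psi_i(\bar t)$ to a manifestly negative quantity. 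This ``strictly decreasing through every zero'' property simultaneously rules out singular sub-arcs where $\psi_i\equiv0$ and, with continuity of $\psi_i$ and finiteness of the set of control discontinuities (Lemma~\ref{dotI} bounding the one-sided derivatives at the kinks), shows $\psi_i$ changes sign at most once, from $+$ to $-$; this yields $t_i\in[0,T]$ with $u_i=1$ on $(0,t_i)$ and $u_i=0$ on $(t_i,T]$, the endpoints being handled via the transversality conditions.

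The main obstacle is the sign computation for $\dot\psi_i$ at its zeros: after differentiating, the expression has many terms, and one must use the constraint $\psi_i(\bar t)=0$ together with the admissibility bounds to show it is negative, \emph{uniformly} in the two problem variants --- which differ only in the terminal costate conditions and in the free-terminal-time relation --- and verify that this survives at the finitely many control-discontinuity points and at $t=0,T$. Getting the transversality conditions and the sign of $\mu$ to cooperate with this inequality is the crux; the rest is bookkeeping of the kind already enabled by Theorem~\ref{thm:constraints} and Lemma~\ref{dotI}.
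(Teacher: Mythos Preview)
Your overall strategy --- PMP, the factorization $\phi_i=\beta I_i\psi_i$, and showing $\dot\psi_i<0$ at every zero --- is exactly what the paper does, and your observation that the $i$-dependence of $\psi_i$ sits only in $\lambda^I_{i-\taus}-\lambda^I_i$ is correct. But the \emph{simultaneous downward induction} you propose has a circularity. When you differentiate $\psi_i$, the $\dot\Gamma$ piece (via $\dot\lambda^I_{m-r}$, which by the adjoint equations carries a factor $u_{m-r}$) produces terms of the form $u_k\cdot(\text{costate combination})$ for \emph{every} $k\in\{\taus,\dots,B-r\}$, and the $\dot\lambda^I_{i-\taus}$ part adds a $u_{i-\taus}$ term. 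To bound each such term you must recognize it as $\phi_k u_k/(\beta I_k)\ge0$, which requires either $I_k(t)>0$ or $u_k(t)=0$ --- information you do not have for $k<i$ in a downward induction. The same problem hits your ``moreover'' step: the inflow $\beta S_{i+r}\sum_{j\ge\taus}u_jI_j$ involves the full sum over all $j$, not only the higher levels $i+r,\ i+\taus$ you cite, so the case ``inflow vanishes on an initial interval'' cannot be closed using only the inductive hypothesis at higher levels.

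The paper fixes this by decoupling the two parts and inserting one extra device. It first harmlessly modifies the given optimal control by setting $u_k\equiv0$ on $\{t:I_k(t)=0\}$ (states and payoff are unchanged). With that convention the sign computation for $\dot\varphi_i$ goes through for \emph{all} $i$ at once: the awkward $u_k$ terms either vanish (where $I_k=0$) or are $\varphi_k u_k/I_k\ge0$, yielding $\dot\varphi_i\le\beta I_i\bigl(\mathcal H-\lambda_E[\sum_{j\ge\taus}I_j+\sum_{j\ge r}S_j]\bigr)$. Negativity of this bound is then obtained from three ingredients you do not mention: autonomy gives $\mathcal H(t)\equiv\mathcal H(T)=\lambda_E\sum_{j\ge\taus}I_j(T)$; one proves $\lambda_E>0$ (ruling out the abnormal multiplier via the nontriviality condition, and in Problem~2 via the free-terminal-time relation $\mathcal H(T)=f'(T)$); and $\sum_{j\ge\taus}I_j+\sum_{j\ge r}S_j$ is shown to be nonincreasing with $\sum_{j\ge r}S_j(T)>0$. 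Only \emph{after} the reverse-step structure is established on $(\tau_k,T]$ for every $k$ does the paper run a downward induction to prove $\tau_i\in\{0,T\}$; at that stage ``$u_k(t')=0\Rightarrow u_k\equiv0$ for $t\ge t'$'' is available for all $k$, which is exactly what closes the case analysis on the inflow. So your plan is right in spirit, but the two halves have to be done in the opposite order, and the ``set $u_k=0$ where $I_k=0$'' modification is what lets the sign analysis succeed independently of level.
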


\begin{figure}[htb]
 \centering
\includegraphics[width=0.95\columnwidth]{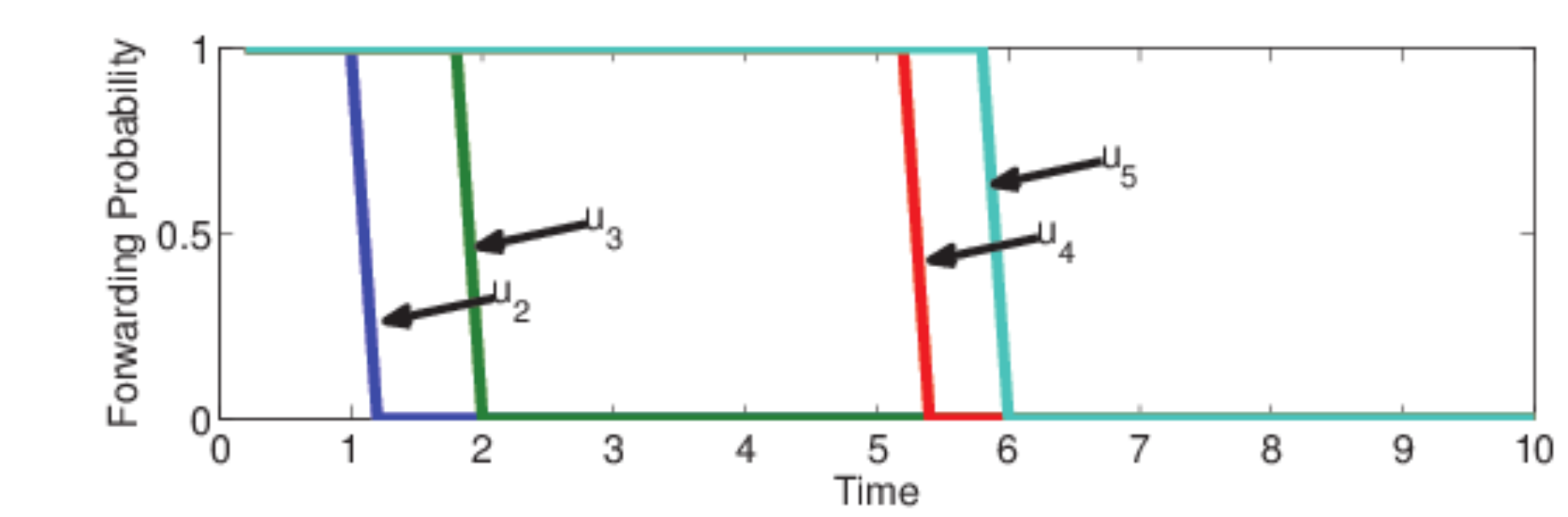}
\subfigure{{\scriptsize (a) convex terminal-time penalties}}
\hide{\subfigure{{\scriptsize  (a) Fixed terminal time. \hide{The mandated delivery probability was 90\%. The initial distribution is $\mathbf{I_0}=(0,0,0,0,0,0.05)$ and $\mathbf{S_0}=(0,0,0,0.55,0.3,0.1)$, and the terminal-time  penalties are
 $a_i= (B-i)^2$.}}}}
\hide{\label{fig:example}
\end{figure}
\begin{figure}[htb]
 \centering
\includegraphics[scale=0.5]{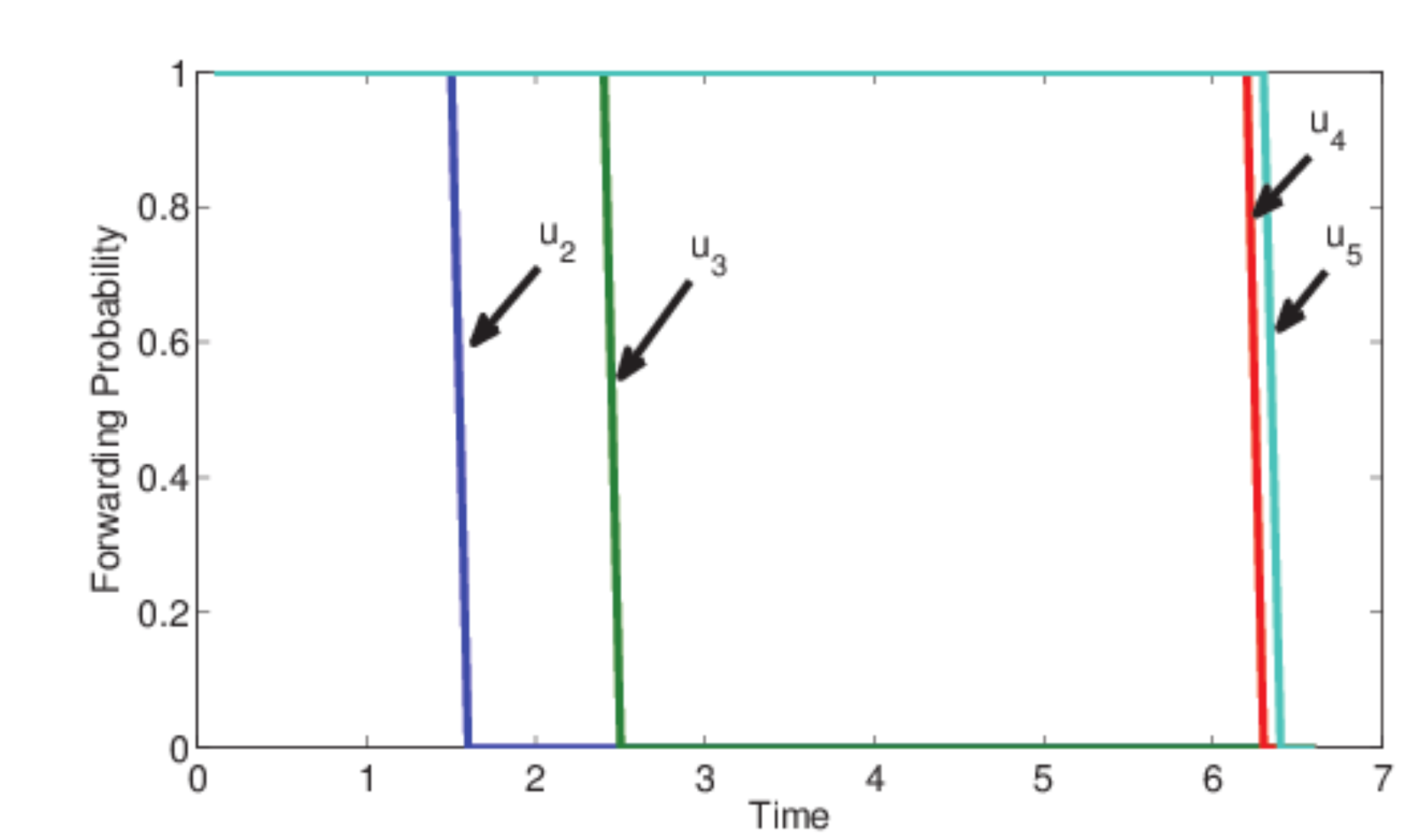}
\subfigure{{\scriptsize  (b) Optimal stopping time. \hide{The mandated delivery probability was 95\%. The initial distribution is $\mathbf{I_0}=(0,0,0,0,0,0.1)$ and $\mathbf{S_0}=(0,0,0,0.5,0.3,0.1)$, and the battery penalties are
 $a_i= (B-i)^2$, with $f(T)=T^2$.}}}}
\includegraphics[width=0.95\columnwidth]{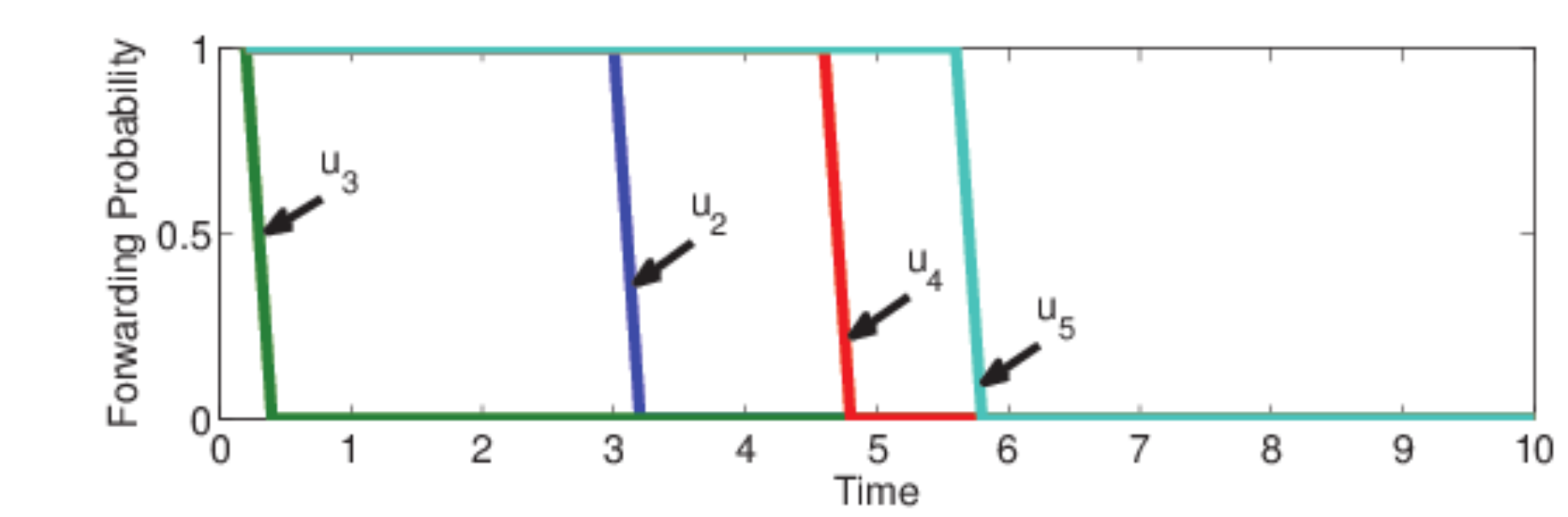}
\subfigure{{\scriptsize (b) non-convex terminal-time penalties}}
\caption{Illustrative examples for Theorems~\ref{Thm:Gen_structure} and~\ref{Thm:Order} for the fixed terminal time problem.\hide{(a) fixed terminal time and (b) optimal stopping time  problems.} The controls are plotted for a system with parameters:
$B=5$, $r=1$, $\taus=2$, $\beta=\beta_0=2$, $T=10$, and $\mathbf{S_0}=(0,0,0,0.55,0.3,0.1)$, with the mandated probability of delivery being 90\%. In (a), the terminal time penalty sequence was $a_i= (B-i)^2$ and $\mathbf{I_0}=(0,0,0,0,0,0.05)$, while in (b) the terminal time penalty sequence was $a_0=4.4$, $a_1=4.2$, $a_2=4$, $a_3=1.2$, $a_4=1.1$, $a_5=1$ (i.e., the $\{a_i\}$ sequence was neither convex nor concave) and $\mathbf{I_0}=(0,0,0,0,0.025,0.025)$.}\hide{. In (a),, and the initial states were $\mathbf{I_0}=(0,0,0,0,0.025,0.025)$ and $\mathbf{S_0}=(0,0,0,0.55,0.3,0.1)$\hide{. In (a)}, $T=10$, $\mathbf{I_0}=(0,0,0,0,0,0.05)$ and $\mathbf{S_0}=(0,0,0,0.55,0.3,0.1)$, with the mandated probability of delivery being 90\%.\hide{ In (b), $f(T)=T^2$,  $\mathbf{I_0}=(0,0,0,0,0,0.1)$ and $\mathbf{S_0}=(0,0,0,0.5,0.3,0.1)$, with the mandated probability of delivery being 95\%.}}
\label{fig:example}
\end{figure}

\hide{We illustrated Theorems~\ref{Thm:Gen_structure} and~\ref{Thm:Order} for the fixed terminal time case in Fig.~\ref{fig:example}. Now, we show the same for the optimal stopping time problem (Fig.~\ref{fig:examplestop}). As we can see, Theorems~\ref{Thm:Gen_structure} and~\ref{Thm:Order} also hold for this example.

Furthermore, Fig.~\ref{fig:nonconvexstop} shows that the convexity of the final state penalties is also a necessary assumption for Theorem~\ref{Thm:Order} to hold, as it was shown in Fig.~\ref{fig:nonconvex} for the fixed terminal time case. Here, we see that a non-convex sequence of penalties leads to a non-monotonic sequence $\{t_i\}$.}

Given any optimal control $\bu$, we define a set $\mathcal{Z}(\bu)$ such that $\mathcal{Z}(\bu) = \{i: \taus \leq i \leq B, I_i(t) > 0, \forall t \in (0, T]\}.$
{The above theorem implies that the population of the infectives is zero throughout
for any index  outside $\mathcal{Z}(\bu)$ (i.e.,
 if $i \not\in \mathcal{Z}(\bu), I_i(t) = 0$ for all $t \in [0, T]$), and we therefore characterise the optimal control only for the indices that are in $\mathcal{Z}(\bu).$}
Also, for each $i \in \mathcal{Z}(\bu)$, $t_i$ is the threshold time associated with the optimal control $u_i$. Intuitively, we would expect each optimal control to be a non-increasing function of time, since  if a control is increasing over an interval, flipping that part of the control in time would result in
 earlier propagation of the message and a higher throughput with the same final state energies. The theorem, however, goes beyond this
intuition in  that it establishes that
  optimal controls    are at their maximum value up to certain threshold times  and then drop abruptly to zero (Fig.~\ref{fig:example}-(a)).  For the fixed terminal time problem, the optimal controls can therefore be  represented as a vector of $B-\taus+1$  threshold times corresponding to different energy levels. This vector can be calculated through an optimization in  the search space of $[0,T]^{B-\taus+1}.$  For the optimal stopping time problem, there is an additional degree of freedom; the stopping time $T$ itself. Note that   $T\in [0,T_0]$, where $T_0$ satisfies \eqref{path_constraint} with equality if all controls are always zero. This is because no optimal control can have $T>T_0$, as in that case both the energy cost and the penalty associated with terminal time will exceed  that of the all-zero controls case. Thus,  the optimal stopping time and the thresholds can be calculated through an optimization in  the space of $\{(T, \underline{t}): 0 \leq T \leq T_0,~ \underline{t} \in[0,T]^{B-\taus+1}\}$.{\hide{The thresholds (and the stopping time for the optimal stopping time problem)  may be computed once at the source node of the message and added to the message as a small overhead. Each node that receives the message simply retrieves the threshold times and forwards the message if its age is less than the threshold entry corresponding to its residual energy level. The execution of the policy at each node is therefore simple  and based only on local information.}} The one-time calculation of the threshold levels (and the optimal stopping time as appropriate) at the origin can be done by estimating the \hide{current distribution of energy in the network which only relies on knowledge of}
 the fractions of nodes with each energy level irrespective of their identities. This data can then be added to the message as a small overhead. 
{Therefore, optimal message forwarding has the following structure:

\begin{algorithm}
\caption{Source Node}
\label{alg1}
\begin{algorithmic}[1]
\State {\bf Given:} $\mathbf{I}_0:=(I_{00},\ldots,I_{0B})$.
\State Estimate the distribution of energy among nodes.
\State Find the best set of thresholds $\{t_i\}$ (and optimal stopping time $T$ in the optimal stopping time problem).
\State Append the header, which contains the destination, $T$, and $\{t_i\}$, to the message. 
\State Create an initial distribution of messages such that for $j=0,\ldots,B$, infectives of energy level $j$ constitute a fraction $I_{0j}$ of the whole population.
\end{algorithmic}
\end{algorithm}
\begin{algorithm}
\caption{Infective Nodes}
\label{alg2}
\begin{algorithmic}[1]
\State On receipt of the message, extract destination, thresholds $\{t_i\}$, and stopping time $T$ from the header.
\State Measure own residual energy $i$.
\While {$i \geq \taus$ {\bf and} $t\leq T$}
\If{node $n$ encountered} \State query its state [low cost].
\If{{$n$ = \{destination\}}} 
\If{$n$ has not received the message yet}
\State{transmit the message.}
\EndIf
\State exit.
\ElsIf{$n$ = \{$S$ with energy $j\geq r$\}  {\bf and} $t<t_i$} 
\State forward message.
 \State $i \leftarrow (i-\taus)$.
\EndIf
\EndIf
\EndWhile  
\end{algorithmic}
\end{algorithm}
}
  Intuitively, it appears that the threshold-times will
be non-decreasing functions of the energy levels, since lower levels of residual
energy are penalized more and
 the energy consumed in each transmission
and reception is the same irrespective of  the energy levels of
the nodes. The optimal controls depicted in  Fig.~\ref{fig:example}-(a) suggest the same:
$ t_2 < t_3 < t_4 < t_5.$  We now confirm the above intuition in the case that the terminal-time penalty
sequence $\{a_i\}$ satisfies certain properties:

\begin{Theorem}\label{Thm:Order}
 Assume that the sequence $\{a_i\}$ in~\eqref{objective_single} is strictly convex.\footnote{A sequence $\{a_i\}$ is strictly convex if the difference between the
penalties associated with consecutive energy levels
increases with a decrease in energy levels (mathematically, for each {\hide{$\taus + 2 \leq i \leq B$}$2 \leq i \leq B$},  {$a_{i-1} - a_i < a_{i-2} - a_{i-1}$}). A sequence $\{a_i\}$ is strictly concave if the difference between the
penalties associated with consecutive energy levels
decreases with an increase in energy levels (mathematically, for each {\hide{$\taus + 2 \leq i \leq B$}$2 \leq i \leq B$},  {$a_{i-1} - a_i > a_{i-2} - a_{i-1}$}). } Then, for any optimal control $\bu$, the sequence $\{t_i\}$ for $i \in \mathcal{Z}(\bu)$ is non-decreasing in $i$.
\end{Theorem}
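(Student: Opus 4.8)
The plan is to run the Pontryagin apparatus already developed for Theorem~\ref{Thm:Gen_structure}. From that proof we have adjoint variables $\lambda^S_i,\lambda^I_i$, a constant multiplier $\mu\ge 0$ for the throughput constraint~\eqref{path_constraint}, the transversality conditions $\lambda^S_i(T)=\lambda^I_i(T)=-a_i$, and the fact that the switching function of $u_i$ has (whenever $I_i>0$) the sign of $\phi_i:=\psi_i\,\sigma+C$, where $\psi_i:=\lambda^I_{i-s}-\lambda^I_i$, $\sigma:=\sum_{j=r}^{B}S_j\ge 0$, and $C$ does not depend on the index $i$. Hence $u_i\phi_i=\phi_i^+:=\max\{\phi_i,0\}$ a.e., $\dot\lambda^I_i=-\beta u_i\phi_i-\mu$ for $i\ge s$, and $\dot\lambda^I_i=0$ for $i<s$; for the degenerate indices $i\notin\mathcal Z(\bu)$, which influence neither the states nor the objective, we fix the same rule $u_i=\mathbf 1[\phi_i>0]$ so that every object below is well defined.

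First I would reduce the ordering of the thresholds to monotonicity of the adjoints. Because $\sigma\ge 0$, the chain $\psi_s(t)\le\psi_{s+1}(t)\le\cdots\le\psi_B(t)$ for every $t$ would give $\phi_i(t)\le\phi_j(t)$ for $i<j$, hence $\{t:\phi_j(t)\le 0\}\subseteq\{t:\phi_i(t)\le 0\}$; since by Theorem~\ref{Thm:Gen_structure} each $\phi_i$ is positive on $(0,t_i)$ and negative on $(t_i,T]$, this forces $t_i\le t_j$ for all $i<j$ in $\mathcal Z(\bu)$, which is exactly the claim. So it suffices to prove $W_k(t):=\psi_{k+1}(t)-\psi_k(t)\ge 0$ for all $s\le k\le B-1$ and all $t\in[0,T]$. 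At $t=T$, writing $d_m:=a_m-a_{m+1}$, strict convexity of $\{a_i\}$ makes $\{d_m\}$ strictly decreasing, and $\psi_i(T)=a_i-a_{i-s}$ gives $W_k(T)=d_{k-s}-d_k>0$ \emph{strictly}; this strict terminal inequality is the sole place the convexity hypothesis is used.

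The crux is a differential inequality for $W_k$. Differentiating and using the adjoint equations, the $\mu$'s cancel and, off the finitely many control-switch instants, $\dot\psi_i=\beta(\phi_i^+-\phi_{i-s}^+)$, with only minor extra nonpositive terms when a shifted index $i-s$ falls below $s$ (where the corresponding adjoint is constant). Thus $\dot W_k=\beta\big[(\phi_{k+1}^+-\phi_k^+)-(\phi_{k+1-s}^+-\phi_{k-s}^+)\big]$ up to such harmless corrections. Now the key elementary observation: $x\mapsto x^+$ is nondecreasing and $1$-Lipschitz, so \emph{whenever} $W_m\ge 0$ for all $m$ we have $\phi_{k+1}-\phi_k=\sigma W_k\ge 0$, hence $0\le\phi_{k+1}^+-\phi_k^+\le\sigma W_k$ and $\phi_{k+1-s}^+-\phi_{k-s}^+\ge 0$, so $\dot W_k\le\beta\sigma W_k\le\beta W_k$ a.e.\ on any interval on which all $W_m$ are nonnegative. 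This is the step I expect to be the real obstacle: the tempting route of showing that $i\mapsto\lambda^I_i(t)$ stays concave as $t$ decreases from $T$ fails, because $(\cdot)^+$ does not preserve concavity and the bang-bang structure of Theorem~\ref{Thm:Gen_structure} injects exactly this nonlinearity into the adjoint dynamics; one has to pass to the first differences $W_k$ and lean only on the monotonicity and $1$-Lipschitz property of $(\cdot)^+$ together with the running hypothesis $W_m\ge 0$, which collapses a coupled, merely piecewise-smooth adjoint system into one scalar linear differential inequality per index.

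Finally, a first-time-of-failure argument run backward from $T$. Put $t^{*}=\inf\{t:\ W_m(\tau)\ge 0\ \text{for all }m\text{ and all }\tau\in[t,T]\}$. If $t^{*}>0$, then $W_m\ge 0$ on all of $[t^{*},T]$, so by continuity some index $m^{*}$ has $W_{m^{*}}(t^{*})=0$; the nonnegativity hypothesis of the previous step holds throughout $[t^{*},T]$, so $\dot W_{m^{*}}\le\beta W_{m^{*}}$ a.e.\ there ($W_{m^{*}}$ is Lipschitz, hence absolutely continuous, so the control switches cause no difficulty), and Gronwall forces $W_{m^{*}}\equiv 0$ on $[t^{*},T]$, contradicting $W_{m^{*}}(T)>0$. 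Hence $t^{*}=0$, all $W_m\ge 0$ on $[0,T]$, and the theorem follows. The same argument applies verbatim to the optimal-stopping-time problem, where the extra transversality condition only fixes $T$ and leaves the adjoint dynamics on $[0,T]$ unchanged; and the boundary indices $i<s$ (constant adjoints) and the tail segments where $\sigma=0$ (all controls coincide) only reinforce $\dot W_k\le\beta W_k$, so they require no separate treatment.
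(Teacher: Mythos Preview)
Your overall architecture---terminal strict inequality from convexity, a backward ``first time of failure'' argument, and a one-sided differential inequality---is the same as the paper's. The technical device you use is different and quite nice: instead of showing $\dot\psi_{i,k}\ge 0$ directly on $[\bar\sigma,T]$, you pass to consecutive differences $W_k=\psi_{k+1}-\psi_k$ and exploit only that $x\mapsto x^+$ is nondecreasing and $1$-Lipschitz to get $\dot W_k\le\beta\sigma W_k$, then Gronwall. When every index $s\le i\le B$ is non-degenerate (i.e., $\mathcal Z(\bu)=\{s,\dots,B\}$) and the multiplier is normal, your computation is correct and arguably cleaner than the paper's, since you never need to invoke Lemma~\ref{lem:varphi_zero_X} (the ``once $\varphi_i$ hits zero it stays negative'' fact).

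There are, however, two genuine gaps. First, you silently take $p_0=1$ when writing $\lambda^I_i(T)=-a_i$, so $W_k(T)>0$. The paper treats the abnormal case $\bar\lambda_0=0$ separately (then $\rho_i(T)=0$ for all $i$, whence all $t_i=T$ and the ordering is trivial); your Gronwall step collapses there because $W_k(T)=0$. This is easy to fix but must be said.

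The more serious gap is your handling of the degenerate indices $i\notin\mathcal Z(\bu)$. Your differential inequality $\dot W_k\le\beta\sigma W_k$ needs $u_m\phi_m=\phi_m^+$ for \emph{every} $m\ge s$, and you obtain this by ``fixing $u_i=\mathbf 1[\phi_i>0]$'' on the degenerate indices. But $\phi_i$ is built from the adjoints $\rho$, and the adjoint ODE $\dot\rho_i=-\beta u_i\phi_i-\mu$ depends on $u_i$: your prescription is self-referential. One can try to close this as a fixed point (solve the backward system with $\dot\rho_i=-\beta\phi_i^+-\mu$ on degenerate indices, then \emph{define} $u_i$ accordingly), but you must still verify that the \emph{non-degenerate} $u_i$---which you have kept equal to the original optimal control---maximize the Hamiltonian computed with these \emph{new} adjoints; otherwise the necessary conditions you rely on do not apply to the modified control. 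You have not argued this, and it is not automatic: changing $u_m$ on a degenerate index changes $\rho_m$, which feeds into $C=\sum_{j\ge r}(-\lambda_j+\rho_{j-r})S_j$ and hence into every $\phi_i$. Conversely, if you take the paper's clean choice $u_m\equiv 0$ on degenerate $m$, your bound can fail: for instance if $k-s$ is non-degenerate but $k+1-s$ is degenerate, the bracket $u_{k+1-s}\phi_{k+1-s}-u_{k-s}\phi_{k-s}=-\phi_{k-s}^+$ can be strictly negative, destroying $\dot W_k\le\beta\sigma W_k$.

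The paper sidesteps both issues by (i) working only with pairs $i,k\in\mathcal Z(\bu)$ rather than with the full chain of consecutive differences, and (ii) restricting the backward argument to $[\bar\sigma,T]\subset[\sigma_i,T]$, where Lemma~\ref{lem:varphi_zero_X} forces $\varphi_i\le 0$ and hence $u_i\varphi_i=0$; the remaining shifted index $k-s$ is then handled case-by-case (if $k-s\notin\mathcal Z(\bu)$ one uses $u_{k-s}\equiv 0$, and if $k-s\in\mathcal Z(\bu)$ one uses the running hypothesis to force $\varphi_{k-s}\le 0$). Your approach would go through if you either (a) rigorously justify the self-consistent choice of degenerate controls (including re-verifying Hamiltonian maximization for the non-degenerate coordinates under the new adjoints), or (b) abandon the full chain and, like the paper, compare only indices in $\mathcal Z(\bu)$ while using $u_m\equiv 0$ on degenerate $m$---but then you will need something like Lemma~\ref{lem:varphi_zero_X} to kill the $u_i\phi_i$ term, which is exactly the step you were hoping to avoid.
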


\hide{
 \begin{figure}[htb]
 \centering
\includegraphics[scale=0.45]{counter_example_1}
\hide{\subfigure{{\scriptsize (a) Fixed terminal time. \hide{In this example, the parameters were exactly the same as those used in Fig.~\ref{fig:example}, with the difference that the initial states were $\mathbf{I_0}=(0,0,0,0,0.025,0.025)$ and $\mathbf{S_0}=(0,0,0,0.55,0.3,0.1)$, and the terminal time penalties  were $a_0=4.4$, $a_1=4.2$, $a_2=4$, $a_3=1.2$, $a_4=1.1$, $a_5=1$.}}}}
\hide{\label{fig:nonconvex}
\end{figure}
 \begin{figure}[htb]
 \centering
\includegraphics[scale=0.5]{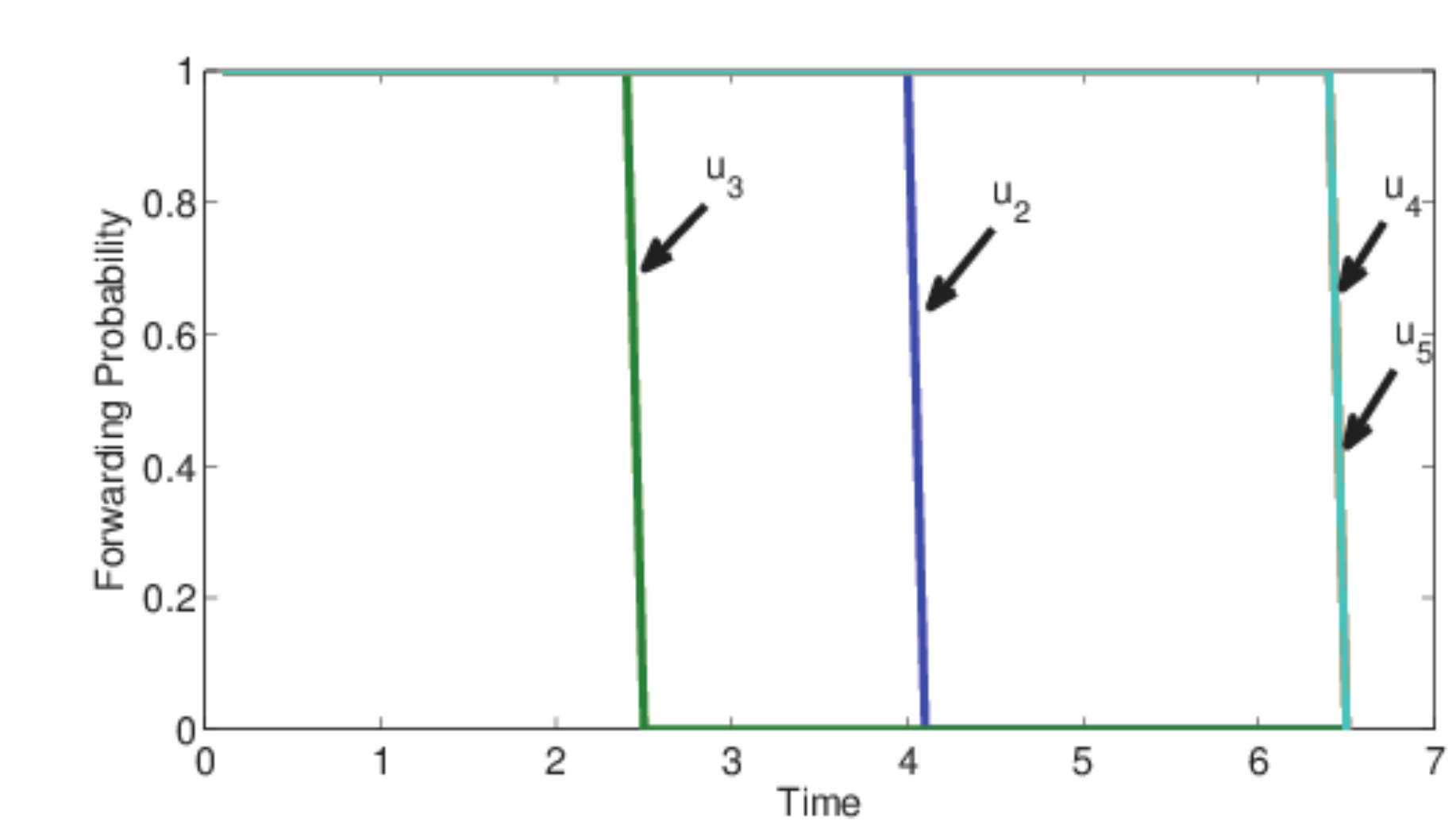}
\subfigure{{\scriptsize (b) Optimal stopping time. \hide{In this example, the parameters were exactly the same as those used in Fig.~\ref{fig:examplestop}, with the difference that the final state coefficients were $a_0=4.4$, $a_1=4.2$, $a_2=4$, $a_3=1.2$, $a_4=1.1$, $a_5=1$ (i.e., the $\{a_i\}$ sequence was neither convex nor concave).
The drop-off in final state costs between energy levels 3 and 2 motivates nodes in level 3 to be more conservative in propagating the message, leading to $t_2>t_3$ and the non-monotonicity of $\{t_i\}$.}}}}
\caption{In this example the parameters were the same as those used in Fig.~\ref{fig:example}, with the difference that {the terminal time penalty sequence was} $a_0=4.4$, $a_1=4.2$, $a_2=4$, $a_3=1.2$, $a_4=1.1$, $a_5=1$ (i.e., the $\{a_i\}$ sequence was neither convex nor concave)\hide{. In (a),}, and the initial states were $\mathbf{I_0}=(0,0,0,0,0.025,0.025)$ and $\mathbf{S_0}=(0,0,0,0.55,0.3,0.1)$\hide{ and in (b), they were $\mathbf{I_0}=(0,0,0,0,0,0.1)$ and $\mathbf{S_0}=(0,0,0,0.5,0.3,0.1)$.}}
\label{fig:nonconvex}
\end{figure}}

Fig.~\ref{fig:example}-(a) {illustrates} the threshold times for a strictly convex and decreasing
sequence of terminal penalties. The naive intuition provided before Theorem~\ref{Thm:Order} will however mislead us in general {---} we now present {examples that show when the strict convexity requirement of the terminal-time penalty sequence is not satisfied, the claim of the theorem may not hold.} One sample configuration is when we have a sharp reduction in penalty  between two consecutive final energy levels, with penalties on either side being close to each other, e.g., $a_0 \approx a_1\approx a_2 \gg a_3 \approx a_4 \approx a_5$ in Fig. \ref{fig:example}-(b).
The motivation for such a setting could be the case where the system is primarily interested in ensuring that it  retains  a certain minimum amount of energy (e.g., $3$ units in Fig.~\ref{fig:example}-(b)) at the terminal time: energy values above the requisite threshold (e.g., $4, 5$ in Fig.~\ref{fig:example}-(b)) acquire insignificant additional rewards and energy values
below the threshold (e.g., $0, 1, 2$ in Fig.~\ref{fig:example}-(b)) incur insignificant additional penalties, but the penalty at the threshold amount is substantially lower than that at the next lowest value.  Fig.~\ref{fig:example}-(b) reveals that Theorem~\ref{Thm:Order} need not  hold for such a setting, as nodes with energy values that are either higher or lower than $3$ would be incentivized to propagate the message (because of the low loss incurred for propagation in terms of final states), but those with exactly $3$ units of energy would
be extremely conservative, as there is a large penalty associated with any further propagation of the message. Thus, $t_3 < \min(t_2, t_4, t_5).$ The sequence of terminal-time penalties in  Fig.~\ref{fig:example}-(b) is neither convex nor concave.  But, Theorem~\ref{Thm:Order} does not hold for
 concave terminal-time penalties either (Table~\ref{table:anomaly}). {Therefore, the convexity of the terminal-time penalty sequence is integral to the result of Theorem~\ref{Thm:Order}. }

\begin{table}[bth]
\begin{center}
\begin{tabular}{ |c | c | c | }
   \hline
   & \multicolumn{2}{|c|}{Threshold Times of Controls }\\
   \hline
   & Energy Level 4& Energy Level 5\\ \hline \hline
   $\alpha=0.5$ & 5.75 & 1.75 \\
   $\alpha=1.5$ & 2.5 & 2.75 \\
   $\alpha=2$ & 2.5 & 2.75 \\
   \hline
\end{tabular}
\end{center}
\begin{center}
\caption{{\scriptsize An example for non-ordered {threshold times} of the
optimal controls for concave terminal time penalties in the settings of
Theorem~\ref{Thm:Order} for the fixed terminal time problem. The parameters were exactly the same as those used in Fig.~\ref{fig:example}-(a), with the difference that $a_i=(B-i)^\alpha$,   $\alpha$ is varied over the values $\{0.5,1.5,2\}$, $\mathbf{I_0}=(0,0,0,0,0,0.1)$ and $\mathbf{S_0}=(0,0,0,0.3,0.3,0.3)$\hide{ $I_{03}=0.2$, $S_{03}=0.3$, $I_{02}=0.1$, $S_{02}=0.2$,
$I_{01}=0.1$, $S_{01}=I_{00}=S_{00}=0$}.
For $\alpha=0.5$,  the terminal time penalties
become concave, and $t_4 > t_5.$   For $\alpha=\{1.5,2\}$,
the terminal time penalties  are strictly convex, and $t_4 < t_5$ as Theorem~\ref{Thm:Order} predicts.
}}\label{table:anomaly}
\end{center}
\end{table}

\vspace{-0.6in}
 \subsection{Proof of Theorem~\ref{Thm:Gen_structure}}
 \label{sec:proof1}
  We prove Theorem~\ref{Thm:Gen_structure} using tools from classical optimal control theory, specifically Pontryagin's Maximum Principle, which is stated in \S\ref{pontryaginprinciple}. We provide the full proof for the fixed terminal time problem~\eqref{objective_single} in \S\ref{subsec:Fixed_Terminal_time}, and specify the modifications for the optimal stopping time problem in \S\ref{subsubsec:changes_single_stopping}.

{\subsubsection{Pontryagin's Maximum Principle with Terminal Constraint}\label{pontryaginprinciple}

We start by stating the problem for a fixed terminal time $t_1$. Let $\bu^*$ be a piecewise continuous control solving\hide{the problem}:
\begin{align}\label{problem}
\text{maximize} &\int_{t_0}^{t_1} f_0({\bf x}(t), \bu(t),t) + S_1({\bf x}(t_1))\\
&\dot{\bf x}(t) = f({\bf x}(t), \bu(t), t),\quad
\bx(t_0) = \bx^0,\quad\bu \in \mathcal{U}, \notag\\ 
&x_i^1(t_1)=x_i^1 \qquad1\leq i \leq l,\notag\\
&x_i^1(t_1)\geq x_i^1 \qquad l+1\leq i \leq m,\notag\\
&x_i^1(t_1)\text{~free} ~\qquad i=m+1\leq i \leq n,\notag
\end{align}
and let ${\bf x}^*(t)$ be the associated optimal path. Define
 \begin{align}\label{hamiltonian}\ham({\bf x}(t), &\bu(t), {\bf p}(t), t):=\notag\\&p_0  f_0({\bf x}(t), \bu(t),t) + {\bf p}^T(t) f({\bf x}(t), \bu(t), t)
\end{align} to be the Hamiltonian, with ${\bf p}=\{p_i\}_{i=1}^n$.

\begin{Theorem}\label{pontryagin}\cite[p.182]{seierstad1987optimal}\hide{\cite[p.111]{grass2008}(\!\cite[p. 217]{stengel1994}, p. 217)}
\hide{Let defined on $[t_0,t_1]$, If $t_1$ is fixed, then}There exist a constant $p_0$ and a continuous and piecewise continuously differentiable vector function ${\bf p}(t)=(p_1(t),\ldots, p_n(t))$ such that for all $t \in [t_0,t_1]$, 
\begin{align}
 (p_0,p_1(t),\ldots,p_n(t))\neq \vec{0}\label{vec_neq_zero},
\\\ham({\bf x}^*, \bu^*, {\bf p}(t), t) \geq \ham({\bf x}^*, \bu, {\bf p}(t), t)\qquad\forall{\bu \in \mathcal{U}}\label{maximization}.
\end{align}

Except at the points of discontinuities of $\bu^*(t)$, for $i=1,\ldots,n$: $\dot{p}_i(t) = - \dfrac{\partial \ham({\bf x}^*, \bu^*, {\bf p}(t), t)}{\partial x_i}$.

Furthermore, $p_0=0$ or $p_0=1$, and, finally, the following transversality conditions are satisfied,
\begin{flalign}\label{transversality}
&p_i(t_1) \text{ no condition} \qquad\qquad\qquad\quad\qquad\quad  1\leq i \leq  l,\notag\\
p_i(t_1) &-p_0 \dfrac{\partial S_1(\bx^*(t_1))}{\partial x_i}\geq 0\notag\\&\qquad\qquad\quad\qquad(= 0 \text{ if $x^*_i(t_1)>x^1_i$}) \quad  l +1\leq i\leq m,\notag\\
p_i(t_1) &- p_0 \dfrac{\partial S_1(\bx^*(t_1))}{\partial x_i}=0 \qquad~\quad\qquad\quad~  m+1\leq i \leq  n.
\end{flalign}

\end{Theorem}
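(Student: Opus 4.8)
The plan is to prove the maximum principle by the classical needle‑variation and separating‑hyperplane argument, which is the route underlying \cite{seierstad1987optimal}. First I would augment the state by an extra coordinate $x_0$ defined by $\dot{x}_0 = f_0(\mathbf{x},\mathbf{u},t)$ with $x_0(t_0)=0$, so that the running cost is absorbed into the endpoint payoff and the objective \eqref{problem} becomes the Mayer‑type functional $x_0(t_1)+S_1(\mathbf{x}(t_1))$. This converts the problem into a terminal‑value optimization over the augmented trajectory $(x_0,\mathbf{x})$ in $\mathbb{R}^{n+1}$, with the same admissible class $\mathcal{U}$ and the same terminal constraints on the $\mathbf{x}$‑block.

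Second, fix an optimal control $\mathbf{u}^*$ with trajectory $\mathbf{x}^*$, and at each point of continuity $\tau\in(t_0,t_1)$ and each admissible value $v$ introduce a needle (spike) variation: the control equal to $v$ on $(\tau-\varepsilon,\tau]$ and to $\mathbf{u}^*$ elsewhere. A first‑order expansion shows that the perturbed trajectory satisfies, to order $\varepsilon$, the initial jump $\delta\mathbf{x}(\tau)=f(\mathbf{x}^*(\tau),v,\tau)-f(\mathbf{x}^*(\tau),\mathbf{u}^*(\tau),\tau)$, which is then propagated to $t_1$ by the variational equation $\dot{\delta\mathbf{x}}=f_{\mathbf{x}}(\mathbf{x}^*,\mathbf{u}^*,t)\,\delta\mathbf{x}$. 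Collecting all such endpoint displacements over $\tau$, over $v$, and over finitely many superposed needles generates a convex cone $K$ of attainable first‑order displacements of the augmented endpoint.

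Third comes the separation step. Optimality of $\mathbf{u}^*$ together with the terminal constraints means no displacement in $K$ can simultaneously increase $x_0(t_1)+S_1$, preserve the equalities $x_i(t_1)=x_i^1$ for $1\le i\le l$, and remain feasible for the inequalities $x_i(t_1)\ge x_i^1$ for $l+1\le i\le m$. Expressing this as the disjointness of $K$ from the open cone of improving feasible directions, a supporting‑hyperplane theorem yields a nonzero normal vector whose components define the terminal costate $\mathbf{p}(t_1)$ together with the scalar $p_0\ge 0$, which I normalize to $p_0\in\{0,1\}$. The sign of the normal on the inequality block gives exactly the complementary‑slackness transversality conditions in \eqref{transversality}, the free block gives $p_i(t_1)=p_0\,\partial S_1/\partial x_i$, and the fixed block leaves $p_i(t_1)$ unconstrained.

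Finally, I would define $\mathbf{p}(t)$ on all of $[t_0,t_1]$ as the solution of the adjoint system $\dot{p}_i=-\partial\ham/\partial x_i$ with this terminal value. Because the adjoint system is precisely the transpose of the variational equation, the pairing $\mathbf{p}(t)^T\delta\mathbf{x}(t)$ is constant along variations, so the separation inequality at $t_1$ transports back to $\tau$ and becomes $\ham(\mathbf{x}^*,\mathbf{u}^*,\mathbf{p},\tau)\ge\ham(\mathbf{x}^*,v,\mathbf{p},\tau)$; since $\tau$ and $v$ are arbitrary this is the pointwise maximization \eqref{maximization}, while nonvanishing of the terminal normal propagates by uniqueness for the linear adjoint ODE to the nontriviality \eqref{vec_neq_zero}. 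The main obstacle is the separation step: establishing that $K$ is genuinely convex and can be separated from the improving‑feasible cone while tracking the three distinct terminal‑constraint types at once — it is here that the mixed equality/inequality structure and the normalization of $p_0$ demand the most care, as does the uniform $o(\varepsilon)$ control of the needle remainder when $\mathbf{u}^*$ is only piecewise continuous.
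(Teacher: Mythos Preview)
The paper does not prove this theorem at all: Theorem~\ref{pontryagin} is simply quoted verbatim from \cite[p.~182]{seierstad1987optimal} as a black-box tool, with no argument given. So there is nothing in the paper to compare your proposal against.

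That said, your sketch is the standard needle-variation/separation route that underlies the result in the cited reference, and the outline is sound. The Mayer reduction, the spike-variation generating the attainable cone at $t_1$, the separation from the cone of improving feasible directions, and the backward transport via the adjoint system are all the right ingredients. Your identification of the delicate point is also accurate: the convexity of the attainable cone (which requires superposing finitely many needles and controlling the $o(\varepsilon)$ remainders uniformly), and the bookkeeping that matches the three terminal-constraint types to the three transversality cases in \eqref{transversality}, are exactly where the technical work lies. If you were to flesh this out, the one place to be especially careful is that for merely piecewise-continuous $\mathbf{u}^*$ the needle must be placed at a Lebesgue point of $t\mapsto f(\mathbf{x}^*(t),\mathbf{u}^*(t),t)$ so that the first-order expansion is valid; with only finitely many discontinuities (as assumed here) this is automatic away from those points.
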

Now, we state the analogous theorem when $t_1$ is not fixed in ~\eqref{problem}, and $S_1(\bx(t_1))$ is replaced with $S_1(\bx(t_1), t_1)$, allowing explicit dependence of the cost on the terminal time:
{
\begin{Theorem}\label{pontryagin1}~\cite[p.183]{seierstad1987optimal} 
 Let $(\bx^*(t), \bu^*(t), t_1^*)$ be an admissible triple solving~\eqref{problem} (with $S_1(\bx(t_1), t_1)$) with $t_1\in [T_1, T_2]$, $t_0\leq T_1<T_2$, $T_1, T_2$ fixed. Then the conclusions in Theorem~\ref{pontryagin} hold, with $S_1(\bx^*(t_1),t_1)$ replacing $S_1(\bx^*(t_1))$, and with the addition that 

\begin{align}\label{H(T)equals}
\ham({\bf x}^*, \bu^*, {\bf p}, t_1^*) + p_0 \dfrac{\partial S_1(\bx^*(t_1),t_1)}{\partial t}\begin{cases}
\leq 0 &\text{if~} t_1^*=T_1\\
= 0 &\text{if~} t_1^*\in(T_1, T_2)\\
\geq 0 &\text{if~} t_1^*=T_2
\end{cases}.
\end{align}

%
%
%
%
\end{Theorem}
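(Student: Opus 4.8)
The plan is to derive Theorem~\ref{pontryagin1} from the fixed-terminal-time principle (Theorem~\ref{pontryagin}) by a time-rescaling that converts the free terminal \emph{time} into a free terminal \emph{state}. I would fix the reference horizon to the optimal interval $[t_0, t_1^*]$ and introduce a new independent variable $s$ on this interval, together with a scalar ``time-dilation'' control $v(s) > 0$ and an auxiliary state $\tau(s)$ governed by $\tfrac{d\tau}{ds} = v$, $\tau(t_0) = t_0$. Setting $\by(s) := \bx(\tau(s))$, the chain rule gives $\tfrac{d\by}{ds} = v\, f(\by,\bu,\tau)$, and the objective in~\eqref{problem} (with the $t_1$-dependent salvage $S_1(\bx(t_1),t_1)$) becomes $\int_{t_0}^{t_1^*} v\, f_0(\by,\bu,\tau)\,ds + S_1\bigl(\by(t_1^*),\tau(t_1^*)\bigr)$. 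This is a fixed-horizon problem in the augmented state $(\by,\tau)$ and augmented control $(\bu,v)$, in which the original terminal-time constraint $t_1\in[T_1,T_2]$ reappears as a terminal-state constraint $T_1 \le \tau(t_1^*) \le T_2$, while $\by(t_1^*)$ inherits the component-wise constraints on $\bx(t_1)$.

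The observation that legitimises the reduction is that, since $v>0$, the map $t=\tau(s)$ is an orientation-preserving reparametrisation: un-transforming any admissible augmented quadruple yields an admissible original triple with \emph{identical} objective value, and conversely. Hence the augmented optimum corresponds exactly to the given $(\bx^*,\bu^*,t_1^*)$, realised by $v^*\equiv 1$, $\tau^*(s)=s$ (so $\tau^*(t_1^*)=t_1^*$) and $\by^*(s)=\bx^*(s)$. First I would apply Theorem~\ref{pontryagin} to this augmented problem. Its Hamiltonian is $\hat\ham = v\,\ham(\by,\bu,{\bf p},\tau) + p_{n+1}v = v\bigl(\ham + p_{n+1}\bigr)$, where ${\bf p}$ is the costate of $\by$ (which, along $v^*\equiv 1$, coincides with the costate of $\bx$ in Theorem~\ref{pontryagin1}) and $p_{n+1}$ is that of $\tau$. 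Evaluated along $v^*\equiv 1$ (where $s=\tau=t$), the adjoint equations and the maximisation in $\bu$ for $\hat\ham$ collapse to exactly the costate equations and~\eqref{maximization} of the original problem, and the transversality conditions~\eqref{transversality} for the $\by$-components are recovered verbatim because $\partial S_1/\partial x_i$ is unaffected by freezing $t_1=t_1^*$. This already yields all conclusions of Theorem~\ref{pontryagin}.

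It remains to produce~\eqref{H(T)equals}, and this is where the two new pieces of information enter. Because $\hat\ham$ is \emph{affine} in $v$ and its maximum over an admissible $v$-range containing $1$ in its interior is attained at $v^*=1$, the coefficient of $v$ must vanish, giving $\ham(\bx^*,\bu^*,{\bf p},t) + p_{n+1}(t)\equiv 0$ along the trajectory; in particular $p_{n+1}(t_1^*) = -\ham(\bx^*,\bu^*,{\bf p},t_1^*)$. The transversality condition of Theorem~\ref{pontryagin} for the auxiliary terminal state $\tau(t_1^*)$ then supplies the sign. Reading the two-sided bound $T_1 \le \tau(t_1^*) \le T_2$ as a one-sided lower constraint when $t_1^*=T_1$, as free when $t_1^*\in(T_1,T_2)$, and as a one-sided upper constraint when $t_1^*=T_2$, the template~\eqref{transversality} gives that $p_{n+1}(t_1^*) - p_0\,\partial S_1/\partial t$ is, respectively, $\ge 0$, $=0$, and $\le 0$. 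Substituting $p_{n+1}(t_1^*) = -\ham$ turns these into $\ham + p_0\,\partial S_1/\partial t$ being, respectively, $\le 0$, $=0$, and $\ge 0$, which is precisely~\eqref{H(T)equals}. Finally I would check that nontriviality descends: if $(p_0,{\bf p})\equiv\vec 0$, then $\ham\equiv 0$, forcing $p_{n+1}\equiv 0$ and contradicting the augmented version of~\eqref{vec_neq_zero}.

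The main obstacle I anticipate is bookkeeping rather than conceptual: I must verify that the time-dilation control preserves every regularity hypothesis of Theorem~\ref{pontryagin} (piecewise continuity of $(\bu,v)$, boundedness of the transformed data $vf_0$ and $vf$ on the compact $s$-range, and admissibility of the augmented triple), and---most delicately---that the translation of the single two-sided terminal bound on $\tau(t_1^*)$ into the one-sided transversality templates of~\eqref{transversality} produces the signs in the correct three cases. Getting these signs right, together with the affine-in-$v$ argument that pins $\ham + p_{n+1}$ to zero, is the crux; once these are in place, the three-way case analysis assembles directly into~\eqref{H(T)equals}.
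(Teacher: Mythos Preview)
The paper does not prove Theorem~\ref{pontryagin1}; it is stated as a known result cited from \cite[p.~183]{seierstad1987optimal} and then applied directly in \S\ref{subsubsec:changes_single_stopping}. There is therefore no ``paper's own proof'' to compare against.

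Your time-rescaling derivation is nonetheless a standard and essentially correct way to obtain the free-terminal-time principle from the fixed-terminal-time one, and is in fact the route many textbooks (including Seierstad--Syds{\ae}ter) take. The key steps---augmenting with $\dot\tau=v$, the affinity of the augmented Hamiltonian $\hat\ham=v(\ham+p_{n+1})$ in $v$ forcing $\ham+p_{n+1}\equiv 0$ at an interior optimum $v^*=1$, and reading off the transversality condition on $\tau(t_1^*)$---are all sound, and your sign computation for~\eqref{H(T)equals} is correct. The one place that needs tightening, which you already flag, is that Theorem~\ref{pontryagin} as stated in the paper only accommodates one-sided terminal constraints $x_i(t_1)\ge x_i^1$, whereas $T_1\le\tau(t_1^*)\le T_2$ is two-sided. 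To make your three-case analysis rigorous you must either invoke a slightly more general transversality template (two multipliers, one per inequality, with complementary slackness) or observe that at most one bound can bind, so that in each case the active constraint fits the one-sided template of~\eqref{transversality} after a sign flip for the upper bound. Once that bookkeeping is in place, the derivation goes through as you describe.
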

}
}

\subsubsection{Fixed Terminal Time Problem}\label{subsec:Fixed_Terminal_time}

{For every control $\tilde{\bu}$, we define $\tau_i(\bI(0),\bS(0),\tilde{\bu})\in[0,T]$ as follows: If $I_i(0)>0$, and therefore $I_i(t)>0$ for all $t>0$ due to Theorem~\ref{thm:constraints}, we define $\tau_i(\bI(0),\bS(0),\tilde{\bu})$ to be 0. Else, $\tau_i(\bI(0),\bS(0),\tilde{\bu})$ is the maximum $t$ for which $I_i(t)=0$.}  It follows from Theorem~\ref{thm:constraints} that $I_i(t) = 0$ for all $t \leq \tau_i(\bI(0),\bS(0),\tilde{\bu})$ {and all $i$ such that $I_i(0) = 0$}, and $I_i(t) > 0$ for all $\tau_i(\bI(0),\bS(0),\tilde{\bu}) < t \leq T$. We begin with the hypothesis that there exists at least one optimal control, say $\tilde{\bu}\in \mathcal{U}^*$, and  construct  a control $\bu$ that chooses $u_i(t):= 0$
for $t \leq \tau_i(\bI(0),\bS(0),\tilde{\bu})$ and $u_i(t):= \tilde{u}_i(t)$ for $t > \tau_i(\bI(0),\bS(0),\tilde{\bu}).$ Clearly, the states $\bS(t), \bI(t)$ corresponding to
$\tilde{\bu}$ also constitute the state functions for $\bu${, as the state equations only differ at $t=0$, a set of measure zero}. Thus, $\bu$ is also an optimal control, and
$\tau_i(\bI(0),\bS(0),\tilde{\bu}) = \tau_i(\bI(0),\bS(0),\bu)$ for each $i.$ Henceforth, for notational convenience, we will refer to $\tau_i(\bI(0),\bS(0),\tilde{\bu}), \tau_i(\bI(0),\bS(0),\bu)$ as $\tau_i.$ Note that the definition
of this control completely specifies the values of each $u_i$ in  $[0, \tau_i].$ We will prove
the following lemmas.
\begin{Lemma}\label{lem:reversestepfunction}
 For each $\taus\leq i\leq B$, if  $\tau_i < T$  there exists a $t_i \in [\tau_i, T]$ such that $u_i(t) = 1$ for $\tau_i < t < t_i$ and $u_i(t) = 0$ for $t > t_i.$
 \end{Lemma}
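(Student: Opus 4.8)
\textbf{Proof proposal for Lemma~\ref{lem:reversestepfunction}.}

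The plan is to apply Pontryagin's Maximum Principle (Theorem~\ref{pontryagin}) to the fixed terminal time problem, treating the throughput constraint~\eqref{path_constraint} as a terminal state constraint on an augmented state variable, and to read off the sign structure of the switching function associated with each control $u_i$. First I would augment the state with a variable $x_c(t) := \int_0^t \sum_{j=\taus}^B I_j(\xi)\,d\xi$, so that $\dot x_c = \sum_{j=\taus}^B I_j$ and the constraint~\eqref{path_constraint} becomes a terminal inequality $x_c(T) \geq -\ln(1-p)/\beta_0$ of the type $x_i^1(t_1)\geq x_i^1$ covered by Theorem~\ref{pontryagin}. The Hamiltonian is linear in each $u_i$ because, inspecting~\eqref{model:no_recharge}, every occurrence of $u_i$ multiplies a product of states; hence $\ham = \ham_0 + \sum_{i=\taus}^B u_i \varphi_i$ where the switching function $\varphi_i(t)$ collects the adjoint-weighted coefficients of $u_i I_i$. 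The maximization condition~\eqref{maximization} forces $u_i(t)=1$ when $\varphi_i(t)>0$, $u_i(t)=0$ when $\varphi_i(t)<0$, and leaves $u_i(t)$ undetermined on the zero set of $\varphi_i$.

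The heart of the argument, and the step I expect to be the main obstacle, is showing that for each $i$ with $\tau_i<T$ the switching function $\varphi_i$ changes sign at most once on $(\tau_i,T]$, and in the direction $+\to -$. I would factor $\varphi_i(t) = \beta\bigl(\sum_{j=r}^B S_j(t)\bigr)\psi_i(t)$, where $\sum_{j=r}^B S_j > 0$ on $[0,T]$ by Theorem~\ref{thm:constraints} and the nontriviality assumption, so the sign of $\varphi_i$ is the sign of $\psi_i$; here $\psi_i$ is a linear combination of adjoint variables of the form $p_{I_i} - p_{I_{i-\taus}} - (\text{something involving } p_{S} )$ reflecting the three transfer mechanisms into and out of $I_i$, plus the contribution $p_0 \cdot(\text{coefficient from } x_c)$ for the indices $i\geq\taus$. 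The transversality conditions~\eqref{transversality} pin down $p_{I_j}(T)$ and $p_{S_j}(T)$ in terms of $-p_0 a_j$ (with $p_{I_j}(T)=p_{S_j}(T)=-p_0 a_j$ when the corresponding terminal states are free), and I would argue $p_0=1$ by ruling out $p_0=0$: if $p_0=0$ the transversality conditions make all $p_j(T)=0$, and running the adjoint ODEs backward (they are linear and homogeneous in $\mathbf p$ when $p_0=0$, except for the $x_c$ adjoint which is constant) would force $\mathbf p\equiv\mathbf 0$, contradicting~\eqref{vec_neq_zero}. With $p_0=1$ fixed, I would then differentiate $\psi_i$ along the optimal trajectory and show that wherever $\psi_i(t)=0$ we have $\dot\psi_i(t)<0$ (or more carefully $\dot\psi_i(t^+)<0$, using Lemma~\ref{dotI} for boundedness of the relevant derivatives), which is the standard device for bang-bang uniqueness: a function that is strictly decreasing at each of its zeros has at most one zero, and crosses from positive to negative. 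This single-crossing computation — verifying the sign of $\dot\psi_i$ at a zero by substituting the state and adjoint equations — is the technical crux and will require careful bookkeeping of the coupling between energy level $i$ and levels $i-r$, $i-\taus$, $i+r$, $i+\taus$.

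Finally I would close the argument by translating the switching-function picture back into the statement of the lemma. On $(\tau_i, t_i)$, where $t_i := \sup\{t\in(\tau_i,T] : \psi_i(t)>0\}$ (with $t_i:=\tau_i$ if $\psi_i$ is never positive and $t_i:=T$ if $\psi_i>0$ throughout), we have $\psi_i>0$ so $u_i(t)=1$; on $(t_i, T]$ we have $\psi_i<0$ so $u_i(t)=0$. The only subtlety is the behaviour exactly at $t_i$ and on any interval where $\psi_i$ might vanish identically: the single-crossing property $\dot\psi_i<0$ at zeros rules out $\psi_i\equiv 0$ on a subinterval, so $\psi_i$ has an isolated zero at $t_i$ and the value $u_i(t_i)$ is irrelevant to the trajectory (a single point has measure zero), consistent with the piecewise-continuous control class $\mathcal U$. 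This yields the reverse-step structure $u_i(t)=1$ on $\tau_i<t<t_i$ and $u_i(t)=0$ on $t>t_i$, with $t_i\in[\tau_i,T]$, as claimed. The later Lemmas (and Theorem~\ref{Thm:Gen_structure}) presumably combine this with the observation that on $[0,\tau_i]$ the constructed control already sets $u_i\equiv 0$ and $I_i\equiv 0$, giving the global $[0,T]$ statement; for indices with $\tau_i=T$ (equivalently $I_i\equiv 0$) the control is irrelevant and excluded.
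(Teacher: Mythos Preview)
Your high-level plan---augment the state to encode the throughput constraint, apply Pontryagin's principle, exploit linearity in each $u_i$ to get a switching function $\varphi_i$, and then prove a single $+\to-$ crossing by showing $\dot\varphi_i<0$ at every zero---is exactly the paper's route. Two concrete points, however, would not go through as you have written them.

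First, the factorization $\varphi_i=\beta\bigl(\sum_{j=r}^B S_j\bigr)\psi_i$ is not available. From~\eqref{define:Hamiltonian:Gen} the switching function is
\[
\varphi_i=\beta I_i\sum_{j=r}^B\bigl(-\lambda_j+\rho_{j-r}+\rho_{i-\taus}-\rho_i\bigr)S_j,
\]
so the scalar factor that comes out is $I_i$ (this is precisely why the hypothesis $\tau_i<T$, i.e.\ $I_i>0$ on $(\tau_i,T]$, matters), and the remaining sum has $j$-dependent adjoint weights $-\lambda_j+\rho_{j-r}$ that cannot be pulled through $\sum_j S_j$. The paper therefore differentiates $\varphi_i$ itself, not a reduced $\psi_i$.

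Second, your exclusion of $p_0=0$ has a gap: the multiplier $\lambda_E$ attached to the terminal inequality on $x_c$ is \emph{not} forced to zero by transversality, and since $\dot\rho_i$ contains the inhomogeneous term $-\lambda_E$ (see~\eqref{eq:co-states:Gen}), running the adjoint system backward from $\lambda_j(T)=\rho_j(T)=0$ does not yield $\mathbf p\equiv 0$ when $\lambda_E\neq0$. The paper does not try to exclude $\bar\lambda_0=0$; instead it proves directly that $\lambda_E>0$ in both cases $\bar\lambda_0\in\{0,1\}$ (Lemma~\ref{lem:5}). The single-crossing then comes not from generic bookkeeping but from a specific mechanism: the problem is autonomous, so $\ham(t)\equiv\ham(T)=\lambda_E\sum_{k\geq\taus}I_k(T)$; at a zero of $\varphi_i$ one bounds $\dot\varphi_i^{\pm}\leq\beta I_i\bigl(\ham-\lambda_E[\sum_{j\geq\taus}I_j+\sum_{j\geq r}S_j]\bigr)$ after discarding terms of the form $\varphi_k u_k/I_k\geq0$; and this bound is strictly negative because $\lambda_E>0$ and $\sum_{j\geq\taus}I_j+\sum_{j\geq r}S_j$ is nonincreasing with $\sum_{j\geq r}S_j(T)>0$. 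That chain---Hamiltonian constancy, the sign of $\lambda_E$, and the state monotonicity---is the substance you would need to supply.
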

\begin{Lemma}\label{lem:tau}
For all $\taus\leq i\leq B,$  $\tau_i \in \{0,T\}$. 
\end{Lemma}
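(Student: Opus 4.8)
\textbf{Proof proposal for Lemma~\ref{lem:tau}.}

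The plan is to argue by contradiction: suppose there is some index $i$ with $\taus \leq i \leq B$ for which $0 < \tau_i < T$. By the construction of the control $\bu$ and the definition of $\tau_i$, we have $I_i(t) = 0$ on $[0,\tau_i]$, $I_i(\tau_i) = 0$, and $I_i(t) > 0$ for $t \in (\tau_i, T]$; moreover $u_i \equiv 0$ on $[0,\tau_i]$. The idea is that for $I_i$ to become positive exactly at $\tau_i$, it must be ``fed'' by one of the three source mechanisms in \eqref{model:no_recharge}, i.e., by some $S_{i+r} > 0$ contacting transmitting infectives $\sum_{j \ge \taus} u_j I_j > 0$, or by some transmitting $I_{i+\taus} > 0$ (with $u_{i+\taus} > 0$) contacting susceptibles. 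The first step is therefore to inspect the relevant $\dot I_i$ equation (whichever of \eqref{no_rech:I_hi_margin}--\eqref{no_rech:I_low_margin} applies to this $i$), note that at $t = \tau_i$ the outflow term $-\beta u_i I_i \sum_j S_j$ vanishes because $I_i(\tau_i) = 0$, and conclude that $\dot I_i(\tau_i^+) > 0$ forces at least one of the inflow terms to be positive on an interval just to the right of $\tau_i$. This means either $S_{i+r}(t) > 0$ and $\sum_{j=\taus}^B u_j(t) I_j(t) > 0$ there, or $u_{i+\taus}(t) I_{i+\taus}(t) > 0$ and $\sum_{j=r}^B S_j(t) > 0$ there.

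The second step is to push this backwards in time to obtain a contradiction with $\tau_i$ being the \emph{last} zero of $I_i$. The key observation, using Theorem~\ref{thm:constraints}: if $S_{i+r}(t_0) > 0$ for some $t_0$, then $S_{i+r}(t) > 0$ for all $t$ (susceptibles are nonincreasing, and once positive stay positive — in fact positive for \emph{all} $t$ including $t=0$, since $S_j(t') = 0$ whenever $S_j(0) = 0$); similarly if $I_{i+\taus}(t_0) > 0$ then $I_{i+\taus}(t) > 0$ for all $t \in (0,T]$, and hence $\tau_{i+\taus} = 0$ or the same backward argument applies to $i + \taus$. So whichever inflow term is active just after $\tau_i$ is in fact active on all of $(0, \tau_i)$ as well, \emph{provided} the corresponding transmitting population is nonzero there. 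This is where I expect the real work: I need to show that the ``switched-on'' inflow term cannot have been silenced on $(0,\tau_i)$ purely by the controls being zero there. For the $S_{i+r}$-mechanism this requires $\sum_{j=\taus}^B u_j(t) I_j(t) > 0$ on $(0,\tau_i)$; for the $I_{i+\taus}$-mechanism it requires $u_{i+\taus}(t) > 0$ on $(0,\tau_i)$. The cleanest route is probably an induction on $i$ (say, downward from $B$, or handle energy levels in a suitable order): the top energy levels $i > B - \taus$ can only be fed by mechanism corresponding to a strictly higher energy level, which doesn't exist, so $\tau_i \in \{0, T\}$ is forced directly; then for lower $i$, any inflow at $\tau_i^+$ comes from an index $i' \in \{i+r, i+\taus\}$ that is strictly larger, for which we already know (inductive hypothesis, after noting the relevant control/population structure) that $I_{i'}$ or $S_{i'}$ is positive throughout $(0,T]$. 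Combined with the fact that on $(0, \tau_i)$ the control $u_i$ is $0$ but this does not affect the \emph{inflow} into $I_i$ (inflow depends on $u_{i+\taus}$ and on $u_j$ for the sum, not on $u_i$), we should get that $\dot I_i > 0$ somewhere in every left-neighbourhood issue — more precisely, that $I_i$ would have had to be positive before $\tau_i$, contradicting the definition of $\tau_i$ as the maximal zero.

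The main obstacle, then, is bookkeeping the interplay between the three feeder mechanisms and the fact that $u_i \equiv 0$ on $[0,\tau_i]$ by construction: I must be careful that setting $u_i = 0$ early does not accidentally also shut off the inflow to $I_i$ (it does not, since inflow to level $i$ is governed by $u_{i+\taus}$ and the global transmitting mass, never by $u_i$), and that the ``transmitting mass'' $\sum_{j=\taus}^B u_j I_j$ appearing in the $S_{i+r}$-inflow term is itself positive on $(0,\tau_i)$ — which should follow from the standing assumption (made right after Theorem~\ref{thm:constraints}) that there exist $i \ge r$, $j \ge \taus$ with $S_i(0) > 0$, $I_j(0) > 0$, together with the reverse-step structure of Lemma~\ref{lem:reversestepfunction} giving $u_j \equiv 1$ on $(\tau_j, t_j)$ for such $j$. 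Once these positivity facts are assembled, the contradiction is immediate: $I_i$ is strictly increasing on a left-neighbourhood of... — actually, cleanly: $\dot I_i(t) \ge \beta \, S_{i+r}(t) \sum_{j=\taus}^B u_j(t) I_j(t) - \beta u_i(t) I_i(t) \sum_{j=r}^B S_j(t)$, and on $[0,\tau_i]$ the last term is $0$ (either $u_i = 0$ or $I_i = 0$), so $I_i$ is nondecreasing on $[0,\tau_i]$ with $I_i(0) = 0$; if the inflow is positive anywhere on $(0,\tau_i)$ then $I_i(\tau_i) > 0$, a contradiction, and if the inflow is zero on all of $(0,\tau_i)$ but positive at $\tau_i^+$ we instead contradict continuity / maximality by showing it was already positive on an interval inside $(0,\tau_i)$ via the inductive positivity of the feeder terms. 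Hence no such intermediate $\tau_i$ exists, so $\tau_i \in \{0,T\}$.
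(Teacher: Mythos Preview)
Your proposal correctly handles the high-energy cases and the inflow from $I_{i+\taus}$ (mechanism~3), where the downward induction applies cleanly since $i+\taus > i$. The gap is in your treatment of mechanism~2, the inflow $\beta S_{i+r}\sum_{j\ge\taus} u_j I_j$. You claim the global transmitting mass $\sum_j u_j I_j$ is positive on $(0,\tau_i)$ because some $j_0$ has $I_{j_0}(0)>0$ and Lemma~\ref{lem:reversestepfunction} gives $u_{j_0}\equiv 1$ on $(0,t_{j_0})$. But nothing forces $t_{j_0}>0$: the optimal control may set $t_{j_0}=0$ for every initially-positive level, in which case the transmitting mass is genuinely zero on all of $(0,\tau_i]$ and yet becomes positive just after $\tau_i$. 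Your downward induction does not rescue this, since the sum in mechanism~2 ranges over \emph{all} $k\ge\taus$, not just $k>i$; the $k$ that actually ignites the inflow at $\tau_i^+$ may satisfy $\tau_k=\tau_i$ with $k<i$, outside the inductive hypothesis. So ``push positivity of the feeder terms backwards'' simply fails here.

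The paper handles exactly this degenerate case (its case~(ii)) by arguing in the \emph{opposite} direction from what you propose: instead of pushing positivity of the inflow backwards past $\tau_i$, it pushes zero-ness of the inflow \emph{forward}. If $\sum_k u_k I_k\equiv 0$ on $(0,\tau_i]$, then every $k$ with $\tau_k<\tau_i$ must (by the reverse-step structure of Lemma~\ref{lem:reversestepfunction}) have $u_k\equiv 0$ on all of $(\tau_k,T]$; one then checks directly that the constant trajectory $(\bS,\bI)\equiv(\bS(0),\bI(0))$ satisfies \eqref{model:no_recharge} on $[\tau_i,T]$ as well, since in the right-hand sides every surviving $u_k(t)$ is paired with $I_k(0)=0$. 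Uniqueness (Theorem~\ref{thm:constraints}) then forces $I_i\equiv 0$ on $[0,T]$, so $\tau_i=T$, the desired contradiction. Your outline would need an argument of this second type --- showing the system stays frozen after $\tau_i$, not that it was already moving before --- to close the gap.
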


  If $\tau_i = 0$ for some $i\geq \taus$,  $\tilde{u}_i(t) = u_i(t)$, and $I_i(t) > 0$,  for all $t \in (0, T]$. If $\tau_i=T$, $I_i(t) = 0$ for all $t \in [0, T].$ So the theorem follows from these lemmas, which we prove next.

\paragraph{Proof of   Lemma~\ref{lem:reversestepfunction}}
The lemma clearly holds if $\bu \equiv 0$ (with $t_i = \tau_i$ for all $i \geq \taus$); we therefore consider the case that
$\bu \not\equiv 0$.\footnote{Note that $\bu \equiv 0$ in $(\tau_i, T]$ does not imply $\tau_i=T$.} We proceed in the following steps:

{\bf 1})  Applying standard results from optimal control theory,  we show that
 each optimal control $u_i$ assumes the maximum value ($1$) when a   \emph{switching function} (denoted $\varphi_i$) is positive and the minimum value ($0$)  when
the switching function is negative. However, standard optimal control results do not specify
the nature of the optimal control when the corresponding switching function is at $0$ or the durations for which the switching function is positive, zero, or negative.
 The next step answers these questions using specifics of the problem.

{\bf 2})\hide{\begin{Lemma}\label{lem:varphi_zero_X}
Let $\bu \not\equiv 0.$
For all $i \geq  \taus$, if $\varphi_i (t')=0$ for $t'\in(\tau_i,T)$, then $\varphi_i (t)<0$ for all $t>t'$. Also, if $\varphi_i(T)=0$,  $\varphi_i(t)>0$ for $t \in (\tau_i, T).$
\end{Lemma}}
The switching functions turn out to be  continuous functions of time.
{We want to show that for each $i \geq \taus$, there exists $t_i \in [\tau_i, T]$ such that
the relevant switching function ($\varphi_i$) is positive for $t \in (\tau_i, t_i)$,  negative for $t \in (t_i, T]$, and equal to zero at $t_i$ only if $t_i\in (\tau_i,T)$.  } Lemma~\ref{lem:reversestepfunction} now follows from the relation between the optimal control and the switching function obtained in the first step.\footnote{Note that we still do not know the value of $u_i$ at  the time epoch  $t_i$ at which the corresponding switching function $\varphi_i$ {\hide{is}may be} zero.  This is not{, however,} a serious deficiency since the value of the optimal control in any set of measure zero does not affect the state evolution.}

\noindent{\bf Step 1} Consider the system in~\eqref{model:no_recharge} and the
objective function in~\eqref{objective_single}.
To make the analysis more tractable, we introduce the following new state variable:
{$
 \dot{E}:=\sum_{i=\taus}^B I_i$, with $E(0):=0
$.}

Therefore, our {throughput} constraint~\eqref{path_constraint} simply becomes:
$E(T)\geq  -\ln (1-p)/\beta_0.$ 

{ To facilitate an appeal to Theorem~\ref{pontryagin}, we take $\bx^T = (E,\bS^T,\bI^T)$, $\bu = \bu$, $p_0=\bar{\lambda}_0$, ${\bf p}= (\lambda_E, {\boldsymbol{\lambda}}, {\boldsymbol\rho})$, $l=0$, $m=1$, $x_1^1=  -\ln (1-p)/\beta_0$, $f_0 \equiv 0$, $t_0=0$, $t_1 = T$, and $S_1(\bx^*(t_1)) = R$, the optimization objective. In this case, $\{f_i\}_{i=1}^{2N+3}$ are given by the $\dot{E}$ equation above and by \eqref{model:no_recharge}.}

Using these replacements, the Hamiltonian~\eqref{hamiltonian} 
 becomes
\begin{equation}\label{define:Hamiltonian:Gen}
 \begin{split}
\ham= -\sum_{i=r}^B[\beta \lambda_iS_i \sum_{j=\taus}^{B}u_jI_j]
  +\sum_{i=r}^{B}[\beta \rho_{i-r}S_{i} \sum_{j=\taus}^{B}u_jI_j]\\
  +\sum_{i=\taus}^{B}[\beta u_{i}\rho_{i-\taus} I_{i} \sum_{j=r}^{B}S_j]
  -\sum_{i=\taus}^B[\beta u_i\rho_iI_{i} \sum_{j=r}^{B}S_j ]+ \lambda_E\sum_{i=\taus}^B I_i
\end{split}
\end{equation}
where, at the points of continuity of the controls, the absolutely continuous co-state functions $\lambda_i$, $\rho_i$ and $\lambda_E$ satisfy
\begin{alignat}{2}
 &\dot{\lambda}_i =-\frac{\partial\ham}{\partial S_i}=\beta\lambda_i\sum_{j=\taus}^Bu_jI_j-\beta\rho_{i-r}\sum_{j=\taus}^Bu_jI_j \quad~~& \notag\\
&\qquad-\beta\sum_{j=\taus}^{B}u_{j}\rho_{j-\taus}I_{j}+\beta\sum_{j=\taus}^Bu_j\rho_jI_j & (r\leq i\leq B)\notag\\
 &\dot{\lambda}_i=-\frac{\partial\ham}{\partial S_i}=0& (i<r)&\notag\\
&\dot{\rho}_i=-\frac{\partial\ham}{\partial I_i}=\beta u_i\sum_{j=r}^B\lambda_jS_j+\beta u_i\rho_i\sum_{j=r}^BS_j&\notag\\
&\qquad-\lambda_E-\beta u_i\sum_{j=r}^{B}\rho_{j-r}S_{j}-\beta u_{i}\rho_{i-\taus}\sum_{j=r}^BS_j& (\taus\leq i\leq B)\notag\\
&\dot{\rho}_i=-\frac{\partial\ham}{\partial I_i}=0 & (i<\taus)\notag\\
&\dot{\lambda}_E= -\frac{\partial \ham}{\partial E}=0 & \label{eq:co-states:Gen}
\end{alignat}
with the final constraints:
\begin{equation}\label{co_st_finals:Gen}
 \begin{split}
 &\lambda_{i}(T)=-\bar{\lambda}_0 a_i, \quad \rho_i(T)=- \bar{\lambda}_0 a_i, \quad \forall i=0,\ldots,B\\
&\lambda_E(T)\geq 0,\quad {\lambda_E(T) \left[E(T)+ \ln(1-p)/\beta_0\right]=0},
\end{split}
\end{equation}
and $\bar{\lambda}_0\geq 0$.

\hide{We now appeal to \emph{Pontryagin's maximum principle with terminal constraints:}}

We formally define the switching functions $\varphi_i$ as follows:
\begin{align} \label{eq:varphi_simpler:Gen}
\varphi_i:=\frac{\partial \mathcal{H}}{\partial u_i}=
\beta I_i\Bigg[\sum_{j=r}^B\left(-\lambda_j+\rho_{j-r}+\rho_{i-\taus}-\rho_i\right)S_j\Bigg],\notag\\\qquad\qquad\qquad\qquad\qquad(\taus\leq i\leq B).
\end{align}

Note that $\varphi_i$ is a continuous function of time for each $\taus\leq i\leq B.$ Also, we have:
{\begin{align}\label{hamsimpler}
\ham=\lambda_E\sum_{i=\taus}^B I_i + \sum_{i=\taus}^B \varphi_i u_i.
\end{align}}
From Theorem~\ref{pontryagin},  maximizing the Hamiltonian~\eqref{maximization} yields
\begin{align}\label{optimal_u_i:Gen}
 u_i(t)=
\begin{cases}1 \ \ \text{for}\ \ \varphi_i(t)>0\qquad \\
 0 \ \ \text{for}\  \    \varphi_i(t)<0.
\end{cases}
\end{align}
Furthermore, $\varphi_i(t)u_i(t) \geq 0$ for each $\taus\leq i\leq B$ and all $t \in [0, T]$;  otherwise
the value of the Hamiltonian can be increased at $t$ by choosing $u_i(t) = 0$.

Equations (\ref{eq:varphi_simpler:Gen}, \ref{optimal_u_i:Gen}) reveal an accessible intuition about the logic behind the decision process: at any given time, by choosing a non-zero $u_i$,  infectives with energy level $i\geq \taus$ forward the message to susceptibles of {any energy level $j\geq r$ and turn into infectives with $i-\taus$ energy units, with the susceptibles turning into infectives of energy level $j-r$.}  The optimal control determines whether such an action is \emph{beneficial}, taking into account the advantages (positive terms) and disadvantages (negative terms).


\noindent{\bf Step 2} {To establish this claim, we prove the following lemma:} \begin{Lemma}\label{lem:varphi_zero_X}
Let $\bu \not\equiv 0.$
For all $i \geq  \taus$, if $\varphi_i (t')=0$ for $t'\in(\tau_i,T)$, then $\varphi_i (t)<0$ for all $t>t'$. Also, if $\varphi_i(T)=0$,  $\varphi_i(t)>0$ for $t \in (\tau_i, T).$
\end{Lemma}
 For any $i\geq\taus$, we show that for any $t \in (\tau_i,T)$ at which $\varphi_i(t)=0$,
$\dot{\varphi}_i(t^+) < 0$ and $\dot{\varphi}_i(t^-) < 0$.\footnote{$x(a^+)= \lim_{t\downarrow a}x(t)$, $x(a^-)= \lim_{t\uparrow a}x(t)$.} Furthermore, we show that if $\varphi_i(T)=0$,
$\dot{\varphi}_i(T^-) < 0$. We state and prove a property of real-valued functions which we will use in proving Lemma~\ref{lem:varphi_zero_X} from the above.

\begin{Property}\label{property2}
If $g(x)$ is a continuous and piecewise differentiable function over $[a,b]$ such that $g(a)=g(b)$ while $g(x)\neq g(a)$ for all $x$ in $(a,b)$, $\frac{dg}{dx}(a^+)$ and $\frac{dg}{dx}(b^-)$ cannot be negative simultaneously.
\end{Property}

\begin{proof}
We denote the value of $g(a)$ and $g(b)$ by $L$. If  $\frac{dg}{dx}(a^+)<0$, there exists $\epsilon>0$ such that $g(x)<L$ for all $x \in (a,a+\epsilon)$, and if $\frac{dg}{dx}(b^-)<0$, there exists $\alpha>0$ such that $g(x)>L$ for all $x \in (b-\alpha,b)$. Now $g(a~+~\frac{\epsilon}{2})<L$ and $g(b-\frac{\alpha}{2})>L$; thus, due to the continuity of $g(t)$, the intermediate value theorem states that there must exist a $y \in (a+\frac{\epsilon}{2},b-\frac{\alpha}{2})$ such that $g(y)=L$. This contradicts $g(x)\neq g(a)$ for $x \in (a,b)$. The property follows.
\end{proof}

If $\varphi_i(t) = 0$ and $\dot{\varphi}_i(t^+) < 0$ for $t<T$, we have $\varphi_i(t+\Delta{t})= \varphi_i(t)+\int_{t}^{t+\Delta{t}}  \dot{\varphi}_i(x) \, \mathrm{d} x=\int_{t}^{t+\Delta{t}}  \dot{\varphi}_i(x) \, \mathrm{d} x$, which proves the existence of an interval $(t,t+\epsilon]$ over which $\varphi_i$ is negative. If $t+\epsilon\geq T$, then the claim holds, otherwise there must exist a $t'$, $t<t'\leq T$ such that $\varphi_i({t'})=0$ and $\varphi(\bar{t})\neq0$ for $t<\bar{t}<t'$ (from the continuity of $\varphi_i(t)$). Note that because $\varphi_i({t'})=0$, we have $\dot{\varphi_i}({t'^-})<0$. This contradicts Property~\ref{property2}, thereby completing the proof of the first part of the lemma.
For the second part, note that if $\varphi_i(T)=0$ and $\dot{\varphi}_i(t^-)<0$, there exists an interval $(T-\epsilon, T)$ over which $\varphi_i$ is {\hide{negative}positive}. If $T-\epsilon\leq \tau_i$, then the claim holds, otherwise there must exist a {$t' \in (\tau_i, T)$} such that $\varphi_i(t')=0$ and $\varphi(\bar{t})\neq0$ for $t'<\bar{t}< T$ (from the continuity of $\varphi_i(t)$). Note that because $\varphi_i({t'})=0$, as we show we have $\dot{\varphi_i}({t'^+})<0$. This contradicts Property~\ref{property2}, thereby completing the proof of the second part of the lemma.


We now seek to upper bound $\dot{\varphi}_i(t^+)$ and  $\dot{\varphi}_i(t^-)$ for $t \in ( \tau_i, T)$  at which $\varphi_i(t)=0$,
and subsequently prove that the upper bound is negative. For $t=T$, we only consider the left hand limit of the derivative. Keeping in mind that $I_i(t) > 0$ for $t > \tau_i$,  
 at any $t > \tau_i$ at which $\bu$ is continuous, \small
\begin{align*}
\dot{\varphi}_i=& \dot{I}_i \frac{\varphi_i}{I_i}-\varphi_i\beta\sum_{j=\taus}^Bu_jI_j
+\beta I_i \sum_{j=r}^B (-\dot{\lambda}_j+\dot{\rho}_{j-r}+\dot{\rho}_{i-\taus}-\dot{\rho}_i)S_j.
\end{align*}
\normalsize

From the expressions for the time derivative of the co-states in~\eqref{eq:co-states:Gen} combined with the expression for the switching functions in~\eqref{eq:varphi_simpler:Gen}, and using (from~\eqref{define:Hamiltonian:Gen}) that
$ \sum_{j=r}^B - \dot{\lambda}_j(t)S_j(t)=\ham(t)-\lambda_E(t)\sum_{j=\taus}^BI_{j}(t),$ we can write:
\begin{align*}
 \dot{\varphi}_i&=\beta I_i  \left(\ham(t)-\lambda_E\sum_{j=\taus}^BI_j - \lambda_E\sum_{j=r}^BS_j\right)\\
 &\hspace{-0.05in}+ \dot{I}_i \frac{\varphi_i}{I_i}-\varphi_i\beta\sum_{j=\taus}^Bu_jI_j+\varphi_{i}u_i\beta\sum_{j=r}^B S_j \\&\hspace{-0.05in}-\beta^2I_i\sum_{j=r}^B S_j u_{j-r} (\sum_{k=r}^B\left[-\lambda_k+\rho_{k-r}+\rho_{j-r-\taus}-\rho_{j-r}\right]S_k)\\
&\hspace{-0.05in}-\beta^2I_i (\sum_{j=r}^B S_j) u_{i-\taus}  (\sum_{k=r}^B\left[-\lambda_k+\rho_{k-r}+\rho_{i-2\taus}-\rho_{i-\taus}\right]S_k).
\end{align*}

Now, consider a $t \in (\tau_i, T)$  at which {$\varphi_i(t)=0$.} We show that  the right and left-hand limits of all terms in the second line are zero at $t$:

{ From the continuity of  $I_i$ {and since $t>\tau_i$}, $I_i(t)>0$. Thus $I_i(t')$ is positive and bounded away from $0$
{for $t'$ in a neighborhood of $t.$} Furthermore, Lemma~\ref{dotI} shows that $|\dot{I}_i(t^+)|$ and $|\dot{I}_i(t^-)|$ exist and are bounded for all $t\in(0,T)$. Thus, from the continuity of $\varphi_i$ \hide{and the state and co-state functions }at $t$ and since $\varphi_i(t)=0$,
\hide{and since the controls are bounded, }$\dot{I}_i(t^+) \frac{\varphi_i(t^+)}{I_i(t^+)}$ and $\dot{I}_i(t^-) \frac{\varphi_i(t^-)}{I_i(t^-)}$
equal zero. Due to Theorem~\ref{thm:constraints}, since the states and controls are bounded and since $\varphi_i(t)=0$, the right hand and left hand limits at $t$ of the second and third terms in the second line are also zero.  We now argue that the right hand and left hand limits of lines 3 and 4 are non-positive.} Starting with line $3$, this is because for $j\geq r$,
\small
  \[ I_{j-r} \left( u_{j-r} \sum_{k=r}^B\left[-\lambda_k+\rho_{k-r}+\rho_{j-r-\taus}-\rho_{j-r}\right]S_k \right) = \varphi_{j-r}u_{j-r}. \]
\normalsize
The right hand side is non-negative at each $t$, as argued after~\eqref{optimal_u_i:Gen}. For $t>\tau_{j-r}$,  $I_{j-r}(t) > 0$. Thus {for all such $t$, \hide{at each $t>\tau_{j-r}$}}
  \[\left( u_{j-r} \sum_{k=r}^B\left[-\lambda_k+\rho_{k-r}+\rho_{j-r-\taus}-\rho_{j-r}\right]S_k \right) \geq 0.\]
   For {$0<t \leq \tau_{j-r}$}, $u_{j-r}(t) = 0.$ Thus, at all {$t>0,$} the above inequality holds.\hide{\[\left( u_{j-r} \sum_{k=r}^B\left[-\lambda_k+\rho_{k-r}+\rho_{j-r-\taus}-\rho_{j-r}\right]S_k \right) \geq 0.\]}
 
 {Now, since $\bI$, $\bS$
are continuous and $\bu$ has right and left hand limits at each t,  the right and left hand limits of the LHS above exist; such limits are clearly non-negative at each t.} The same arguments apply for line $4$ as well (except that $i-\taus$ must be considered instead of $j-r$, with $i \geq \taus$).  It follows that  at any $t > \tau_i$  at which $\varphi_i(t)=0$,
  \begin{align*}
  \dot{\varphi}_i(t^+)\leq&\beta I_i(t^+)  \left(\ham(t^+)-\lambda_E[\sum_{j=\taus}^BI_j(t^+) + \sum_{j=r}^BS_j(t^+)]\right),
\end{align*}
\begin{align*}
 \dot{\varphi}_i(t^-)\leq&\beta I_i(t^-)  \left(\ham(t^-)-\lambda_E[\sum_{j=\taus}^BI_j(t^-) + \sum_{j=r}^BS_j(t^-)]\right).
 \end{align*}
 Using the same arguments it may also be shown that the latter inequality  holds at $t=T$ if $\varphi_i(T) = 0.$

The lemma now follows once we prove (in Appendix-\ref{appendix_lemma_5}):
\begin{Lemma}\label{lem:5} If $\bu \not\equiv 0$,  then
 for all $t\in(0,T)$, we have:
\begin{align}
 \ham(t^-)-\lambda_E(t^-)\left[\sum_{j=\taus}^BI_j(t^-) - \sum_{j=r}^BS_j(t^-)\right]<0.\label{eq:statement}
\end{align}
\begin{align}
 \ham(t^+)-\lambda_E(t^+)\left[\sum_{j=\taus}^BI_j(t^+) - \sum_{j=r}^BS_j(t^+)\right]<0.\label{eq:statement1}
\end{align}
Furthermore, \eqref{eq:statement} applies for $t=T$.
\end{Lemma}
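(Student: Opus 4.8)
The plan is to reduce \eqref{eq:statement}--\eqref{eq:statement1} to a statement about one monotone quantity, leveraging that the Hamiltonian is constant along the optimal path. Since $\lambda_E$ is constant (as $\dot{\lambda}_E=0$) and non-negative, $\bS,\bI$ are continuous, and $\ham$ is continuous even across the jumps of $\bu$, the one-sided-limit decorations are immaterial here, and using $\ham=\lambda_E\sum_{i=\taus}^BI_i+\sum_{i=\taus}^B\varphi_iu_i$ from \eqref{hamsimpler} the claim is equivalent to showing, for every $t\in(0,T]$,
\[
 \sum_{i=\taus}^B\varphi_i(t)\,u_i(t)\;<\;\lambda_E\sum_{j=r}^BS_j(t).
\]

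First, because the dynamics \eqref{model:no_recharge} and the running cost ($f_0\equiv0$) carry no explicit time dependence, a standard corollary of Pontryagin's principle gives $\ham(t)\equiv\ham(T)$. I would evaluate this at $t=T$: substituting the transversality values $\lambda_i(T)=\rho_i(T)=-\bar{\lambda}_0 a_i$ from \eqref{co_st_finals:Gen} into \eqref{eq:varphi_simpler:Gen} and using that $\{a_i\}$ is decreasing yields $\varphi_i(T)\le0$ for each $i$; since also $\varphi_iu_i\ge0$ (shown after \eqref{optimal_u_i:Gen}), we get $\varphi_i(T)u_i(T)=0$, hence $\ham(T)=\lambda_E\sum_{j=\taus}^BI_j(T)$ and so $\ham\equiv\lambda_E\sum_{j=\taus}^BI_j(T)$. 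Equating this with $\ham(t)=\lambda_E\sum_{j=\taus}^BI_j(t)+\sum_i\varphi_i(t)u_i(t)$ gives $\sum_i\varphi_i(t)u_i(t)=\lambda_E\bigl(\sum_{j=\taus}^BI_j(T)-\sum_{j=\taus}^BI_j(t)\bigr)$, so the displayed inequality becomes
\[
 \lambda_E\Bigl(\textstyle\sum_{j=\taus}^BI_j(T)-\sum_{j=\taus}^BI_j(t)-\sum_{j=r}^BS_j(t)\Bigr)<0.
\]

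To control the bracket I would introduce $\Phi(t):=\sum_{j=\taus}^BI_j(t)+\sum_{j=r}^BS_j(t)$ and check, by a direct computation from \eqref{model:no_recharge}, that
\[
 \dot{\Phi}=-\beta\Bigl(\textstyle\sum_{j=r}^BS_j\Bigr)\Bigl(\textstyle\sum_{i=\taus}^{2\taus-1}u_iI_i\Bigr)-\beta\Bigl(\textstyle\sum_{k=r}^{\taus+r-1}S_k\Bigr)\Bigl(\textstyle\sum_{j=\taus}^Bu_jI_j\Bigr)\;\le\;0,
\]
so $\Phi$ is non-increasing. Hence $\sum_{j=\taus}^BI_j(T)-\sum_{j=\taus}^BI_j(t)-\sum_{j=r}^BS_j(t)=\Phi(T)-\Phi(t)-\sum_{j=r}^BS_j(T)\le-\sum_{j=r}^BS_j(T)$, and since the standing assumption guarantees some $S_i(0)>0$ with $i\ge r$, Theorem~\ref{thm:constraints} gives $S_i(T)>0$ and thus $\sum_{j=r}^BS_j(T)>0$; the bracket is therefore strictly negative. (The case $t=T$ is immediate from the reduction, since $\sum_i\varphi_i(T)u_i(T)=0<\lambda_E\sum_{j=r}^BS_j(T)$.)

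The main obstacle is the last ingredient, $\lambda_E>0$ --- without it the product above is only $\le0$. I would argue by contradiction: if $\lambda_E=0$ then $\ham\equiv\ham(T)=\lambda_E\sum_jI_j(T)=0$, so $\sum_i\varphi_iu_i\equiv0$; as each term is non-negative this forces $\varphi_iu_i\equiv0$ and hence $\varphi_i\le0$ everywhere (if $\varphi_i(t)>0$ then $u_i(t)=1$, giving $\varphi_iu_i>0$). In the abnormal case $\bar{\lambda}_0=0$ the co-state system \eqref{eq:co-states:Gen} then becomes linear and homogeneous in $(\boldsymbol{\lambda},\boldsymbol{\rho})$ with zero terminal data, forcing $(\boldsymbol{\lambda},\boldsymbol{\rho})\equiv\mathbf{0}$ and contradicting \eqref{vec_neq_zero}. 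In the normal case $\bar{\lambda}_0=1$, substituting $\varphi_iu_i\equiv0$ and $\lambda_E=0$ into the $\dot{\rho}_i$ equations shows $\boldsymbol{\rho}\equiv-(a_0,\dots,a_B)$ is constant, and I would combine this with $\varphi_i\le0$ and the activeness of the throughput constraint under any optimal $\bu\not\equiv0$ (which follows because $t\mapsto\sum_ia_i(S_i(t)+I_i(t))$ is non-decreasing when $\{a_i\}$ is decreasing, so the zero control attains the maximal $R$) to reach a contradiction with $\bu$ being optimal and not identically zero. Carrying this last step through carefully --- and disposing of the degenerate configurations of $\{a_i\}$ --- is where the real work of the lemma lies.
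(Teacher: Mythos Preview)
Your reduction is exactly the paper's: use autonomy to get $\ham\equiv\ham(T)$, evaluate $\ham(T)=\lambda_E\sum_{j\ge\taus}I_j(T)$ via $\varphi_i(T)u_i(T)=0$, introduce $\Phi=\sum_{j\ge\taus}I_j+\sum_{j\ge r}S_j$ and show $\dot\Phi\le0$, and conclude the bracket is $\le-\sum_{j\ge r}S_j(T)<0$. Your handling of the abnormal case $\bar{\lambda}_0=0$ (linear homogeneous costate system with zero terminal data, contradicting \eqref{vec_neq_zero}) also matches.

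The gap is your normal-case argument for $\lambda_E>0$. Two points. First, ``$\boldsymbol{\rho}$ constant'' follows from $\dot{\rho}_i=-u_i\varphi_i/I_i-\lambda_E$ only where $I_i>0$; you need the specific control used in the proof of Lemma~\ref{lem:reversestepfunction} (namely $u_i=0$ on $[0,\tau_i]$) to cover $I_i=0$. Second, your proposed contradiction via ``activeness of the throughput constraint'' does not close: complementary slackness allows $\lambda_E=0$ with an \emph{active} constraint, and the observation that $\bu\equiv0$ maximizes $R$ among feasible controls only tells you $\bu\equiv0$ is \emph{also} optimal, not that $\bu\not\equiv0$ is impossible. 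You would still need to derive a contradiction from the Pontryagin system itself, and your outline stops short of that.

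The paper closes this differently and more directly. In the normal case it first shows (Lemma~\ref{lem:initialinterval}) that $\varphi_j(T)<0$, so every $u_j$ vanishes on a terminal interval; let $t'$ be the last time some control is nonzero. On $(t',T]$ all $u_j=0$, hence \emph{all} costates are constant there and equal $-a_j$ at $t'$. Then $\Omega_{j,k}:=\lambda_j-\rho_{j-r}-\rho_{k-\taus}+\rho_k$ equals $-a_j+a_{j-r}+a_{k-\taus}-a_k>0$ at $t'$, and by continuity $\Omega_{j,k}>0$ on some $(t'-\epsilon,t')$. Since $\ham=-\beta\sum_{j,k}\Omega_{j,k}u_kI_kS_j$ and there is a point $\bar t\in(t'-\epsilon,t')$ with $u_l(\bar t)>0$, $I_l(\bar t)>0$, and $S_k(\bar t)>0$ for some $k\ge r$, one term is strictly positive while all are nonnegative, forcing $\ham(\bar t)<0$---contradicting $\ham\equiv0$. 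This avoids any appeal to constraint activeness.
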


\paragraph{Proof of Lemma~\ref{lem:tau}}
We start by creating another  control  $\bar{\bu}$ from $\bu$ such that for every $i\geq \taus$, for every $t\leq \tau_i$, $\bar{u}_i(t):=1$, and for every $t>\tau_i$, $\bar{u}_i(t):={u}_i(t)$. We prove by contradiction that  $\tau_i(\bI(0), \bS(0), \bar{\bu})\in \{0, T\}$ for each $i\geq \taus.$ Since $\bar{u}_i\not\equiv u_i$ only in $[0, \tau_i]$ and $I_i(t) = 0$ for {$t \in (0, \tau_i]$} when $\bu$ is used, {the state equations can only differ at a solitary point $t=0$, and therefore} both controls result in the same state evolutions. Thus, for each $i\geq \taus,$ $\tau_i(\bI(0), \bS(0), \bar{\bu})
= \tau_i(\bI(0), \bS(0), \bu)$, and $\tau_i(\bI(0), \bS(0), \bar{\bu})$ may be denoted as $\tau_i$ as well. The lemma therefore follows.

{\hide{Assuming}For the contradiction argument, assume} that the control is $\bar{\bu}$ and that $\tau_i\in(0,T)$ for some $i\geq \taus$. Our proof relies on the fact that  if $\bar{u}_i(t') = 0$ at some $t' \in (0, T)$, then
 $\bar{u}_i(t) = 0$ for $t > t'$, which follows from    Lemma~\ref{lem:reversestepfunction} and the definition of $\bar{\bu}$.
   We break the proof into three parts:

{\bf Case 1:  $i>B-r$}

Here, for $t \in [0, T]$  \eqref{no_rech:I_hi_margin} leads to:
$
I_i(t) = I_i(0) e^{-\beta\int_{0}^{t} \bar{u}_i(t'')\sum_{j=r}^B S_j(t'') \, dt''}.
$
Since $I_i(t) = 0$ for $t \in [0, \tau_i]$, $I_i(0) = 0$. Thus, $I_i(t) = 0$ for all $t \in [0, T].$  So $\tau_i=T$ which contradicts our assumption that $\tau_i\in(0,T).$

{\bf Case 2: $B-\taus<i\leq B-r$}

For $t\in[0,\tau_i]$, since $I_i(t) = 0$ for $t \leq \tau_i$, \eqref{no_rech:I_midhi_margin} becomes $\dot{I}_i = \beta S_{i+r} \sum_{j=\taus}^{B}\bar{u}_j I_j =0$ in this interval. Now, since all elements in $\beta S_{i+r} \sum_{j=\taus}^{B}\bar{u}_j I_j$ are non-negative, we must either have (i) $S_{i+r}(t) = 0$ for some $t \in [0, \tau_i]$, or (ii) for all $\taus \leq k\leq B$, $\bar{u}_k(t) I_k(t)= 0$ for all $t \in[0, \tau_i]$.

(i) In the first case, from two appeals to Theorem~\ref{thm:constraints},  $S_{i+r}(0)=0$ and therefore $S_{i+r}(t)=0$ for all $t \in [0,T]$. So in $[\tau_i,T]$, \eqref{no_rech:I_midhi_margin} becomes $\dot{I}_i = - \beta \bar{u}_i I_i \sum_{j = r}^B S_j$, leading to \begin{align}\label{zeroness}
I_i(t) = I_i(\tau_i) e^{-\beta\int_{\tau_i}^{t} \bar{u}_i(t'')\sum_{j=r}^B S_j(t'')\, dt''}.
\end{align}
Since $I_i(\tau_i) = 0$, $I_i(t) = 0$ for all $t \in [\tau_i,T]$. Therefore $\tau_i=T$
which  contradicts our assumption that $\tau_i\in(0,T).$

(ii) In this case, from \eqref{no_rech:S_gen} to \eqref{no_rech:I_low_margin}, it follows that for all $k\geq 0$, $\dot{I}_k=0$ and $\dot{S}_k=0$ in $[0, \tau_i]$, leading to $\bI(t)=\bI(0)$ and $\bS(t)=\bS(0)$ for $t \in [0, \tau_i]$. Also, since $I_k(t) > 0$ for all $t > \tau_k$,  we know that for all $k\geq \taus$ such that $\tau_k<\tau_i$,  $I_k(t)>0$ for $t \in (\tau_k, \tau_i]$ and therefore $\bar{u}_k(t)=0$ for $t \in (\tau_k, \tau_i]${\hide{, leading to}. This leads to} $\bar{u}_k(t)=0$ for $t\geq \tau_i$ (since  Lemma~\ref{lem:reversestepfunction} {and the definition of $\bar{\bu}$} show that if $\bar{u}_k(t') = 0$ at some $t' \in (\tau_k, T)$, then
 $\bar{u}_k(t) = 0$ for $t > t'$). Especially notice that for all $k\geq \taus$ such that $I_k(0)>0$, $\tau_k=0$ and this would apply. Thus, for each $k$, either $I_k(0) = 0$ or $\bar{u}_k(t) = 0$ for all $t \geq \tau_i$, and hence $I_k(0)\bar{u}_k(t) = 0$ for all $t \geq \tau_i$.   Looking at the interval $[\tau_i, T]$, we prove that $\bS\equiv \bS(0)$ and $\bI \equiv \bI(0)$ constitute solutions to the system of differential equations \eqref{model:no_recharge} in this interval. Replacing these functions {and $\bar{\bu}$} into the RHS of equations \eqref{model:no_recharge}, all terms will be zero (since $I_k(0)\bar{u}_k(t) = 0$ for all $k\geq \taus$, $t \geq \tau_i$), leading to $\dot{I}_k=0$ and $\dot{S}_k=0$ for all $k\geq \taus$, which in turn leads to $(\bS(t), \bI(t)) = (\bS(\tau_i), \bI(\tau_i)) = (\bS(0), \bI(0))$ for all $t \in [\tau_i, T].$  Thus, $\bS\equiv \bS(0)$ and $\bI \equiv \bI(0)$ are the unique solutions  to the system of differential equations \eqref{model:no_recharge} in
  $[0,T]$, wherein uniqueness follows from Theorem~\ref{thm:constraints}. So  $\tau_k \in \{0, T\}$ for these state solutions; a contradiction.

{\bf Case 3: $\taus\leq i\leq B-\taus$}

We prove this case using induction on $i.$ In the induction case, we will consider $i$ such that $\tau_{l} \in \{0, T\}$ for all $l$ such that $i < l \leq B.$
From the arguments for the previous cases, we know that $i=B-\taus$ satisfies the above criterion and therefore constitutes our base case. We only present the proof for the induction case as that for the base case is identical.
 For $t\in[0,\tau_i]$, since $I_i(t) = 0$, \eqref{no_rech:I_gen} becomes $
\dot{I}_i = \beta S_{i+r} \sum_{j=\taus}^{B}\bar{u}_j I_j+ \beta \bar{u}_{i+s} I_{i+s} \sum_{j=r}^B S_j=0.
$
Now, since {both} of these terms\hide{in $\beta S_{i+r} \sum_{j=\taus}^{B}\bar{u}_j I_j+\beta \bar{u}_{i+s} I_{i+s} \sum_{j=r}^B S_j$} are non-negative, each must be equal to zero in $[0, \tau_i].$ 
  As there exists $k\geq r$ such that $S_k(0) > 0$,  there will exist $k\geq r$ such that $S_k(t) > 0$  for all $t\in [0, \tau_i]$ (due to Theorem~\ref{thm:constraints}). Also from the same theorem, we have $S_m(t) \geq 0$ for all $m$. Thus, $\sum_{j=r}^BS_j(t) > 0$ for all $t\in [0, \tau_i]$, and hence the second term is zero contingent on $\bar{u}_{i+s}(t) I_{i+s}(t)=0$ for all $t$ in this interval. So we must either have (I) $S_{i+r}(t)= 0$ {for some $t$} and $\bar{u}_{i+s}(t) I_{i+s}(t)=0$ for all $t$ in this interval, or (II) for all $\taus \leq k\leq B$, $\bar{u}_k(t) I_k(t)= 0$ over this interval. Note that the condition on (II) is exactly the same as in (ii) of Case 2, and following the same argument it may be shown that $\tau_k \in \{0, T\}$ for each $k\geq \taus$ in this case. So we focus on (I):

In (I), again with two appeals to Theorem~\ref{thm:constraints}, we see that $S_{i+r}(0)=0$ and therefore $S_{i+r}(t)=0$ for all $t \in [0,T]$. Thus, for all $t \in [0,T]$, $ \dot{I}_i = -\beta \bar{u}_i I_{i} \sum_{j=r}^{B}S_j
    + \beta \bar{u}_{i+s} I_{i+s} \sum_{j=r}^B S_j$. If $\tau_{i+\taus}<\tau_i$, $I_{i+\taus}(t)>0$ for all $t \in (\tau_{i+\taus}, \tau_i]$  and therefore $\bar{u}_{i+\taus}(t)=0$ for $t \in (\tau_{i+\taus}, \tau_i]$, leading to $\bar{u}_{i+\taus}(t)=0$ for $t\geq \tau_i$. So again, we have \eqref{zeroness} and therefore $\tau_i=T$, a contradiction. If $\tau_{i+\taus}>\tau_i$, on the other hand, for $t \in [\tau_i, \tau_{i+\taus}]$, \eqref{no_rech:I_gen} becomes $ \dot{I}_i = -\beta \bar{u}_i I_{i} \sum_{j=r}^{B}S_j$, again leading to \eqref{zeroness} and thus $I_i(t)=0$ for all $t \in [\tau_i,\tau_{i+s}]$, a contradiction.
Thus, we are left with $\tau_i=\tau_{i+\taus}$. But, since $i < i + \taus \leq B$, $\tau_{i+\taus} \in \{0, T\}.$
Thus, $\tau_i \in \{0, T\}$, which  contradicts our  assumption that $0 < \tau_i < T.$   This completes our proof.

 \subsubsection{Optimal Stopping Time Problem}\label{subsubsec:changes_single_stopping}

{ Using Theorem~\ref{pontryagin1} (with $S_1(\bx^*(t_1),t_1) = R$), the proof differs from the fixed terminal time case only in the arguments used to establish {$\bar{\lambda}_0 =1$ and $\lambda_E > 0$ \hide{in Lemma~\ref{lem:initialinterval}  }in the proof of Lemma~\ref{lem:5} in Appendix~\ref{appendix_lemma_5}. Note that we need separate arguments since the problem is no longer autonomous.}  
Equation~\eqref{vec_neq_zero} along with $\bar{\lambda}_0\geq 0$ leads to $\bar{\lambda}_0=1$, because $\bar{\lambda}_0=0$ would imply: \newline
(i)  $\lambda_{i}(T)= \rho_i(T)=0, \quad \forall i=0,\ldots,B$, \newline
(ii) $\ham(T)=\lambda_E(T)\sum_{i=\taus}^BI_i(T)=\bar{\lambda}_0f'(T)=0$.
The first equality in (ii) comes from replacing $\lambda_i(T)=\rho_i(T)=0$ for all $i$ into~\eqref{define:Hamiltonian:Gen}, and the second from~\eqref{H(T)equals}. Now, there exists a $j\geq \taus$ such that $I_j(0)>0$, and due to Theorem~\ref{thm:constraints},  $I_j(T) > 0$, and  $I_m(T) \geq 0$ for all $m$. Thus, $\sum_{i=\taus}^BI_i(T) > 0$,
leading to $\lambda_E(T)=0$. This, combined with $\bar{\lambda}_0=0$ and (i), {contradicts \eqref{vec_neq_zero}} at $t=T$.}

Thus, henceforth we consider $\bar{\lambda}_0=1$. {As in Lemma~\ref{lem:initialinterval}, it can be shown that $u_i(T)=0$ for all $i\geq \taus$.
So we again have $
 H(T)=\lambda_E(T)\sum_{i=\taus}^BI_i(T),
$
and, from~\eqref{H(T)equals}, $ 
f'(T)=\lambda_E(T)\sum_{i=\taus}^BI_i(T)$.
Since $
f'(T)>0$ and $\sum_{i=\taus}^BI_i(T) > 0$, $\lambda_E(T)>0$.
The rest of the proof is identical to that for the fixed terminal time case.

\subsection{Proof of Theorem~\ref{Thm:Order}}
\label{sec:proof2}
We present the proof without explicitly mentioning which version of the optimal control problem
(fixed terminal time or optimal stopping time) we are considering since the proof is identical.
We will use Lemma~\ref{lem:varphi_zero_X}, \eqref{eq:varphi_simpler:Gen}, \eqref{optimal_u_i:Gen}, 
 and the values of $\lambda_i(T), \rho_i(T)$ from \eqref{co_st_finals:Gen} which hold for both versions.

We will prove this theorem for an optimal control $\bu$ such that $u_i \equiv 0$ for all  $i \not\in \mathcal{Z}(\bu).$ It is sufficient to consider only such optimal controls because for any optimal control $\tilde{\bu}$ we can  construct a control $\bu$ such that $u_i(t):= 0$ for $i \not\in \mathcal{Z}(\tilde{\bu})$ and $u_i:= \tilde{u}_i$ for $i \in  \mathcal{Z}(\tilde{\bu}).$  Since $\bu$
leads to the same {\hide {process}state evolutions} as $\tilde{\bu}$,   $\bu$ is  optimal,  $\mathcal{Z}(\tilde{\bu})
= \mathcal{Z}(\bu)$, and both controls have identical threshold times for $i \in  \mathcal{Z}(\tilde{\bu}) = \mathcal{Z}(\bu).$ The theorem therefore follows for  $\tilde{\bu}$ {\hide{from that}if it is proven} for  $\bu.$

The result clearly holds if $\bu \equiv 0$ as then ${t_i = t_j=0}$ for all $i, j \in \mathcal{Z}(\bu).$
We therefore assume that $\bu \not\equiv 0$. It suffices to show that
if $\varphi_i(t)=0$ for some $t>0$ and for  $i \in \mathcal{Z}(\bu)$, we have
$\varphi_k(t)\leq 0$ for any $ k< i$ {where we have }$k \in \mathcal{Z}(\bu)$. From the definition of  $\mathcal{Z}(\bu)$,
$\tau_i = \tau_k = 0.$ Then,
from Lemma~\ref{lem:varphi_zero_X} and~\eqref{optimal_u_i:Gen},
the threshold time
for $u_k$
will precede that of $u_i.$

To prove the above, we examine two cases: (1) $\bar{\lambda}_0=0$ and (2) $\bar{\lambda}_0=1$. In case (1), $\rho_i(T)=\lambda_i(T)=0$ for all $i$, leading to $\varphi_i(T)=0$ for all $i\geq \taus$ from \eqref{eq:varphi_simpler:Gen}. From  Lemma~\ref{lem:varphi_zero_X}, this means that $\varphi_i(t)>0$ for all $0 < t<T$ and all $i\in \mathcal{Z}(\bu)$ (note that $\tau_i = 0$ if $i\in \mathcal{Z}(\bu)$). Therefore, from \eqref{optimal_u_i:Gen}, $u_i(t) = 1$ for all $t \in (0, T)$; thus $t_i=T$ for all $i \in \mathcal{Z}(\bu)$. Thus, henceforth we focus on the case where $\bar{\lambda}_0=1$.


Consider {an} $i \in \mathcal{Z}(\bu)$ {and} a time $\sigma_i>0$ such that $\varphi_i(\sigma_i)=0$.
From~\eqref{eq:varphi_simpler:Gen} we have: $
 \varphi_i(\sigma_i)=\beta I_i\left(\sum_{j=r}^B\left(-\lambda_j+\rho_{j-r}+\rho_{i-\taus}-\rho_i\right)S_j\right)\bigg|_{t=\sigma_i} = 0.
$
Note that $I_i(\sigma_i)>0$ (since $i \in \mathcal{Z}(\bu)$, $\sigma_i > 0$); thus, at $t=\sigma_i$,
\begin{align*}
 \sum_{j=r}^B\left(-\lambda_j+\rho_{j-r}\right)S_j = - \sum_{j=r}^B\left(\rho_{i-\taus}-\rho_i\right)S_j.
\end{align*}
Using the above and~\eqref{eq:varphi_simpler:Gen},
it turns out that for all $k \in \mathcal{Z}(\bu)$,
$
\varphi_k(\sigma_i)= {\hide{\beta I_k  \left(-\rho_{i-\taus}+\rho_i+\rho_{k-\taus}-\rho_k\right)
\sum_{j=r}^BS_j\notag\\ &:=} \beta I_k\psi_{i,k}(\sigma_i)\sum_{j=r}^BS_j,}
$
where $\psi_{i,k},$  for $\taus\leq k< i$, is defined as:
\begin{align*}
\psi_{i,k}(\sigma_i):=
-\rho_{i-\taus}+\rho_i+\rho_{k-\taus}-\rho_k.
\end{align*}
 We know that {$\sum_{j=r}^BS_j(\sigma_i)\geq 0$,  $I_k (\sigma_i)> 0$}  (from Theorem~\ref{thm:constraints}). The theorem now follows from the following lemma:
\begin{Lemma}\label{lem:psi_at_sigma}
 For any {$k< i$ such that $i, k \in \mathcal{Z}(\bu)$} and for $\sigma_i>0$ such that $\varphi_i(\sigma_i)=0$, we have $\psi_{i,k}(\sigma_i)\leq 0$.
\end{Lemma}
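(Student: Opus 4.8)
The plan is to track the quantity $\psi_{i,k}(t) = -\rho_{i-\taus}(t)+\rho_i(t)+\rho_{k-\taus}(t)-\rho_k(t)$ over the interval $[\sigma_i, T]$ and show it cannot be positive at $\sigma_i$. First I would evaluate $\psi_{i,k}$ at the terminal time using the final conditions $\rho_m(T) = -\bar\lambda_0 a_m = -a_m$ (since we are in the $\bar\lambda_0 = 1$ case): this gives $\psi_{i,k}(T) = -(-a_{i-\taus}) + (-a_i) + (-a_{k-\taus}) - (-a_k) = (a_{i-\taus} - a_i) - (a_{k-\taus} - a_k)$. Because $k < i$ and $\taus$ is the common decrement, strict convexity of $\{a_m\}$ — which says the decrements $a_{m-1}-a_m$ are strictly decreasing in $m$, hence $a_{m-\taus}-a_m$ is strictly decreasing in $m$ — yields $\psi_{i,k}(T) < 0$ (more precisely $\le 0$, with strictness when both indices are in range; I would also handle the boundary cases where $i-\taus$ or $k-\taus$ falls below $\taus$, using that $\dot\rho_m \equiv 0$ for $m < \taus$ and the definition of the $a_m$ there). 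So at $T$ the quantity is non-positive; the goal is to propagate this backward to $\sigma_i$.

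The key mechanism is to compute $\dot\psi_{i,k}$ from the co-state equations~\eqref{eq:co-states:Gen}, and to show that whenever $\psi_{i,k}(t) = 0$ at some $t \in (\sigma_i, T)$ (or a one-sided version at $t$), its derivative has the right sign — analogously to the argument for $\dot\varphi_i$ in Step 2 of the proof of Theorem~\ref{Thm:Gen_structure}. Specifically I expect that at a time where $\psi_{i,k} = 0$, many terms in $\dot\psi_{i,k}$ cancel (using $\varphi_i(\sigma_i)=0$, the reverse-step structure of the optimal control from Lemma~\ref{lem:reversestepfunction}, and $\varphi_m u_m \ge 0$), leaving a sign-definite residual; combined with Property~\ref{property2} this forbids $\psi_{i,k}$ from crossing zero from below as we move left from $T$. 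The cleanest route is probably: suppose for contradiction $\psi_{i,k}(\sigma_i) > 0$; since $\psi_{i,k}(T) \le 0$ and $\psi_{i,k}$ is continuous, there is a largest $t^\ast \in [\sigma_i, T)$ with $\psi_{i,k}(t^\ast) = 0$ and $\psi_{i,k}(t) < 0$ for $t \in (t^\ast, T]$ (or $\psi_{i,k}(T)=0$ and it is positive just left of $T$); then show $\dot\psi_{i,k}(t^{\ast+}) < 0$ (resp.\ $\dot\psi_{i,k}(T^-) < 0$), contradicting the sign of $\psi_{i,k}$ immediately to the right of $t^\ast$, exactly the Property~\ref{property2} trick used earlier.

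The main obstacle I anticipate is the derivative computation: $\dot\psi_{i,k}$ involves $\dot\rho_{i-\taus}, \dot\rho_i, \dot\rho_{k-\taus}, \dot\rho_k$, and each $\dot\rho_m$ is a sum of several products of co-states, controls, and $S_j$'s. I would need to show that at a zero of $\psi_{i,k}$ the "bad" terms either cancel pairwise across the four indices or are individually sign-controlled via the switching-function nonnegativity $\varphi_m u_m \ge 0$ and the threshold structure $u_m(t) \in \{0,1\}$ with a single downward jump. The bookkeeping near the energy-index boundaries (when $i-\taus$, $k-\taus$, $i-2\taus$, etc., dip below $\taus$, where the co-state derivatives vanish) is where the argument is most delicate and where I would spend the most care; I expect the structure to mirror the $\dot\varphi_i$ estimate so that ultimately $\dot\psi_{i,k}$ at a zero reduces to a negative multiple of the same $\ham - \lambda_E[\sum I_j - \sum S_j]$-type expression controlled by Lemma~\ref{lem:5}, or to a manifestly non-positive combination, closing the argument.
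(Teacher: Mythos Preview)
Your overall architecture---terminal value via convexity, backward contradiction, derivative sign---matches the paper, but your expected mechanism for $\dot\psi_{i,k}$ is off in two ways, and one of them is a genuine gap.

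First, $\dot\psi_{i,k}$ does not reduce to a Lemma~\ref{lem:5}-type $\ham-\lambda_E[\cdot]$ expression. Computing it from~\eqref{eq:co-states:Gen} produces terms of the form $\pm\varphi_m u_m/I_m$ for $m\in\{i,k,i-\taus,k-\taus\}$ plus a $\lambda_E$ contribution, and the paper controls these directly: $\varphi_k u_k \ge 0$, $\varphi_i u_i = 0$ on $[\sigma_i,T]$ (since $\varphi_i\le 0$ there by Lemma~\ref{lem:varphi_zero_X}), and $\lambda_E>0$. This gives $\dot\psi_{i,k}\ge 0$ throughout the interval---the opposite sign from what you predict---and the contradiction comes by integration: $0=\psi_{i,k}(\bar\sigma)\le\psi_{i,k}(T)<0$, not via Property~\ref{property2}.

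Second, and this is where your plan would actually break: for $k\ge 2\taus$, $\dot\psi_{i,k}$ contains a term proportional to $-\varphi_{k-\taus}u_{k-\taus}/I_{k-\taus}$, whose sign is a~priori wrong. To make it vanish you need $\varphi_{k-\taus}\le 0$ on the interval, and that in turn requires $\psi_{i,k-\taus}\le 0$ there (combined with $\varphi_i\le 0$ and~\eqref{eq:varphi_simpler:Gen}). So the argument for $\psi_{i,k}$ is coupled to $\psi_{i,k-\taus}$. The paper resolves this by defining the contradiction time $\bar\sigma$ as the first time, going backward from $T$, at which \emph{any} $\psi_{i,l}$ with $l<i$, $l\in\mathcal{Z}(\bu)$ hits zero; then all of them---in particular $\psi_{i,k-\taus}$---are $\le 0$ on $[\bar\sigma,T]$, supplying the missing input. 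Tracking only the single $\psi_{i,k}$, as you propose, does not furnish this ingredient.
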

{\hide{The proof of the above lemma does not follow from standard optimal control theory, and is therefore one of our theoretical contributions.}} 

\begin{proof}
\hide{The case of $\taus\leq k<2\taus$ follows similarly.}
At $t=T$, following~\eqref{co_st_finals:Gen}, we have:
\begin{align*}
\psi_{i,k}(T)=&-\rho_{i-\taus}(T)+\rho_i(T)+\rho_{k-\taus}(T)-\rho_k(T) \\=&
[a_{i-\taus}-a_i-(a_{k-\taus}-a_k)],
\end{align*}
which due to the  properties assumed for $a_i$ ($a_i$ decreasing and strictly convex in $i$), yields $\psi_{i,k}(T)< 0$.
This also holds on a sub-interval of nonzero length that extends to $t=T$, owing to the continuity of $\psi_{i,k}$.
We now prove the lemma by contradiction: 
going back in time from $t=T$ towards $t=\sigma_i$, suppose a $\psi_{i,k}$ becomes non-negative at time $\bar\sigma>\sigma_i$ for some {$ k< i$, $k \in \mathcal{Z}(\bu)$}. That is, for at least one such $k$ we have:
{\begin{align}
&(-\rho_{i-\taus}+\rho_i+\rho_{l-\taus}-\rho_l)< 0\notag
 \\&\qquad\qquad\text{ $\forall  l<i,~ l \in \mathcal{Z}(\bu)$ , $\forall t$} ~~\sigma_i<\bar\sigma< t\leq T; \label{rho_order}
\end{align}
and at $t=\bar\sigma$,
\begin{align}
\begin{cases}\label{rho_sigma}
(-\rho_{i-\taus}+\rho_i+\rho_{k-\taus}-\rho_k)=0\\
(-\rho_{i-\taus}+\rho_i+\rho_{l-\taus}-\rho_{l})\leq0, \forall \ l<i, l \in \mathcal{Z}(\bu).
\end{cases}
\hide{ \text{ at }
t=\bar\sigma.}
\end{align}}
We show that the time derivative of $\psi_{i,k}$ 
is non-negative over the points of continuity of the controls in the interval $[\bar\sigma,T]$. Note that this, plus the continuity of $\psi_{i,k}$, leads to a contradiction with the existence of $\bar\sigma$ and hence proves the lemma, since:
 $\psi_{i,k}(\bar\sigma)=\psi_{i,k}(T)-\int_{t=\bar\sigma}^T\!\dot{\psi}_{i,k}(\nu)\,d\nu\leq \psi_{i,k}(T)< 0$.
We now investigate $\dot{\psi}_{i,k}$ over the points of continuity of the controls in $[\bar\sigma,T]$.\footnote{{Note that since $i, k \in  \mathcal{Z}(\bu)$, $I_i(t)>0$ and $I_k(t)>0$ for all $t>0$.}}
{\hide{For {$\taus \leq k<2\taus \leq i$ such that $k \in \mathcal{Z}(\bu)$} it follows that:
\begin{align}\label{psi_dot2}
 \dot{\psi}_{i,k}=&
\beta u_{i-\taus} (\sum_{m=r}^B\left[-\lambda_m+\rho_{m-r}+\rho_{i-2\taus}-\rho_{i-\taus}\right]S_m)\notag\\&
+\lambda_E-\frac{\varphi_{i}u_{i}}{I_{i}}+\frac{\varphi_{k}u_k}{I_k} 
\end{align}
and f}For $\taus \leq k < i<2\taus$ such that $k \in \mathcal{Z}(\bu)$:
\begin{align}\label{psi_dot3}
 \dot{\psi}_{i,k}=&-\frac{\varphi_{i}u_{i}}{I_{i}}+\frac{\varphi_{k}u_k}{I_k},
\end{align}
and for {$\taus \leq k<2\taus \leq i$ such that $k \in \mathcal{Z}(\bu)$} it follows that:
\begin{align}\label{psi_dot2}
 \dot{\psi}_{i,k}=&
\beta u_{i-\taus} (\sum_{m=r}^B\left[-\lambda_m+\rho_{m-r}+\rho_{i-2\taus}-\rho_{i-\taus}\right]S_m)\notag\\&
+\lambda_E-\frac{\varphi_{i}u_{i}}{I_{i}}+\frac{\varphi_{k}u_k}{I_k}. 
\end{align}
}
The RHS of (\ref{psi_dot3}-\ref{psi_dot2}) is non-negative because:
\begin{itemize}
\item[(A)] $\dfrac{\varphi_{k}u_k}{I_k}$ is non-negative due to~\eqref{optimal_u_i:Gen} for all $k\geq \taus$,
\item[(B)] $u_{i-\taus} (\sum_{m=r}^B\left[-\lambda_m+\rho_{m-r}+\rho_{i-2\taus}-\rho_{i-\taus}\right]S_m)$
 is  non-negative for $i\geq 2\taus$. To see this, note that for $i$ such that $I_{i-s}(t)> 0$ for $t>0$  this term is equal to $\dfrac{\varphi_{i-s}u_{i-s}}{ I_{i-s}}$ which is non-negative, again as imposed by the optimizations in~\eqref{optimal_u_i:Gen}; else {$(i-s) \not\in \mathcal{Z}(\bu)$} and $ u_{i-s}\equiv0$;
\item[(C)] $\varphi_{i}(t)u_{i}(t)=0$ for $t \geq \sigma_i$. To see this note that  $\varphi_{i}(\sigma_i) = 0$.
 For $t > \sigma_i$,  from Lemma~\ref{lem:varphi_zero_X}, we have $\varphi_i(t) < 0$, which together with \eqref{optimal_u_i:Gen} leads to $u_{i}(t)=0$,
  \item[(D)] $\lambda_E= \lambda_E (T)>0$, as established after \eqref{eq:the_first_expr} for the
  fixed terminal time problem  and in \S\ref{subsubsec:changes_single_stopping} for the optimal stopping time problem. 
 \end{itemize}

 For $i>k\geq 2\taus$ we have:
\begin{align}\label{psi_dot}
 \dot{\psi}_{i,k}&=\beta u_{i-\taus} (\sum_{m=r}^B\left[-\lambda_m+\rho_{m-r}+\rho_{i-2\taus}-\rho_{i-\taus}\right]S_m) \notag \\
&\hspace{-0.1in}- \beta u_{k-\taus} (\sum_{m=r}^B\left[-\lambda_m+\rho_{m-r}+\rho_{k-2\taus}
-\rho_{k-\taus}\right]S_m)\notag\\
&-\frac{\varphi_{i}u_{i}}{I_{i}}+\frac{\varphi_{k}u_k}{I_k}\notag\\
&\hspace{-0.25in}\geq 
- \beta u_{k-\taus} (\sum_{m=r}^B\left[-\lambda_m+\rho_{m-r}+\rho_{k-2\taus} -\rho_{k-\taus}\right]S_m).
\end{align}
The above inequality follows from (A), (B), (C) above.
Now we show that the RHS in the last line  is zero over the interval of
$[\bar\sigma,T]$, completing the argument. If  $k-s \not\in \mathcal{Z}(\bu)$,   then $u_{k-s}\equiv0$. Else, $I
_{k-\taus}(t)>0$ for all $t>0$, and the RHS   equals $\dfrac{\varphi_{k-\taus}u_{k-\taus}}{I_{k-\taus}}$. We now show that $\varphi_{k-\taus}(t)\leq0$ for all $t \in [\bar\sigma,T]$; thus  \eqref{optimal_u_i:Gen} leads to $\varphi_{k-\taus}(t)u_{k-\taus}(t)=0$, for all $t \in [\bar\sigma,T]$. The result follows.



From~\eqref{eq:varphi_simpler:Gen}, we have:
\begin{align*}
\begin{cases}
  \varphi_i&= \beta I_i\left(\sum_{j=r}^B\left(-\lambda_j+\rho_{j-r}+\rho_{i-\taus}-\rho_i\right)S_j\right) \\
\varphi_{k-\taus}&=
\beta I_{k-\taus}\left(\sum_{j=r}^B\left(-\lambda_j+\rho_{j-r}+\rho_{k-2\taus}-\rho_{k-\taus}\right)S_j\right)
\end{cases}
\end{align*}
Now, since $I_i(t)>0$ for $t>0$, $\varphi_i(t)\leq 0$ leads to:
$
 \sum_{j=r}^B\left(-\lambda_j+\rho_{j-r}+\rho_{i-\taus}-\rho_i\right)S_j\leq0.
$
From~(\ref{rho_order}, \ref {rho_sigma}) and for $k'=k-\taus<i$, we have $\rho_{k-2\taus}-\rho_{k-\taus}\leq  \rho_{i-\taus}-\rho_i$ over the interval of $[\bar\sigma,T]$. Hence we now have:
$
\sum_{j=r}^B\left(-\lambda_j+\rho_{j-r}+\rho_{k-2\taus}-\rho_{k-\taus}\right)S_j\leq0,
$
which together with $I_{k-\taus}(t)\geq0$ for $t>0$ results in $\varphi_{k-\taus}(t)\leq0$.

This concludes the lemma, and hence the theorem.
\end{proof}

\section{Numerical Investigations}\label{sec:Numericals}

\hide{We start by assessing the validity of our modeling assumptions.} \hide{We have analyzed the system in the mean-field deterministic regime which models state evolution using a system of differential equations  \eqref{model:no_recharge}. Such models have been shown to be acceptable approximations both analytically and empirically for large and fast-moving mobile wireless networks~\cite{khouzani2012optimal}. \hide{In our technical report~\cite{techreport}, we independently investigate the accuracy of our model using simulations, and  use this model for  subsequent evaluations. }Next, we compare our optimal policy to the}{Numerous heuristic policies have been proposed for message passing in DTNs in prior literature~\cite{vahdat2000epidemic, zhang2007performance,lindgren2003probabilistic, de2009nectar,banerjee2010design,spyropoulos2005spray,wang2006dft, lu2010energy,nelson2009encounter,de2010optimal,singh2011optimal, singh2010delay,balasubramanian2007dtn,altman2010optimal,neglia2006optimal}. Many of these heuristics are simpler to implement than our optimal control as they employ controls that either do not depend on residual energy levels or do not change with time. 
{ We start by experimentally validating the mean-field deterministic model we used (\S\ref{subsec:model}) and quantifying the benefit of our optimal policy relative to some of these heuristics (\S\ref{subsec:heuristics}).}\hide{, we quantify the \hide{relative }benefit of our proposed optimal policy over {\hide{that of }}these heuristics\hide{ (and some that we propose)}.}
{Next, we investigate the sensitivity of our optimal control to errors in clock synchronization {and residual energy {determination\hide{estimation}}} among nodes (\S\ref{subsec:robustness}).}
Finally, in \S\ref{subsec:lifetime}, we investigate the sending of multiple messages over successive time intervals empirically,  and assess the performance of a natural generalization of our policy (which is optimal for the transmission of a {\em single} message) relative to that of the mentioned heuristics.}

We focus on the fixed terminal time problem and  derive the optimal controls using the GPOPS software \cite{rao2010algorithm, benson2006direct, garg2011direct, garg2010unified, garg2011pseudospectral} with INTLAB \cite{Ru99a}.
 Unless otherwise stated, our  system used parameters: $B=5$, $\taus=2$, and $r=1$ (note that $\taus\geq r$, as demanded by our system model), $\mathbf{S_0}=(0,0,0,0.3,0.3,0.35)$, and
 $a_i= (B-i)^2$.  Note that $\beta T$ denotes the average number of contacts of each node in the system in the time interval $[0,T]$. 
Thus, as expected we observed that changing $\beta$ and $T$ had very similar effects on the costs and the drop-off points of the optimal controls. We further assumed that $\beta=\beta_0$ (i.e., the rate of contact between any two nodes is the same as the rate of contact of the destination and any given node). {\hide{We assessed each policy based on its {``Energy Cost''} $\sum_{i=\taus}^B a_i (S_i(T)+ I_i(T))$. Since this value also depends on the initial penalty function value,  $\sum_{i=\taus}^B a_i (S_i(0)+ I_i(0))$, which is the same for all policies,}}We compared policies based on the difference between $\sum_{i=\taus}^B a_i (S_i(T)+ I_i(T))$ and $\sum_{i=\taus}^B a_i (S_i(0)+ I_i(0))$ (which, as the initial penalty function value, is the same for all policies)\hide{they attain} for each policy,  which we call the ``Unbiased  Energy Cost''.

{\subsection{Validation of the mean-field deterministic model}
\label{subsec:model}
We noted in \S\ref{subsec:model_dynamics}  that assuming exponential  contact among  nodes leads to the system dynamics~\eqref{model:no_recharge} (the mean-field deterministic regime) in the limit that the number of nodes, $N$, approach $\infty$.
 We therefore assess  the applicability of \eqref{model:no_recharge} for exponential contact processes  and large, but finite $N$ (\S\ref{subsubsec:exponential}).  
 Subsequently we assess the validity of \eqref{model:no_recharge}  for a specific truncated power-law contact process that was experimentally observed for human mobility at INFOCOM 2005~\cite{hui2005pocket} (\S\ref{subsec:model2}). Under this model, nodes do not mix homogenously, as those that have met in the past are more likely to meet in the future, {and their convergence to ODEs like ours has not been established.}

 For each contact process, we  simulated 100 runs of the evolution of the states  with  forwarding probabilities provided by the optimal control for the fixed terminal time problem and {state equations} \eqref{model:no_recharge}.    We compared the average state evolutions and {unbiased energy costs} of these cases with those obtained from  \eqref{model:no_recharge}  under the same control.  We describe the results below.

\subsubsection{ Exponential Contact Process}
\label{subsubsec:exponential}
{\hide{With}For a system with $N=160$ nodes, $\mathbf{I_0}=(0,0,0,0.0125,\allowbreak0.0125,0.025)$,   $\beta=2$, and $T=5$,} leading to  an average of 10 meetings per node, Fig.~\ref{fig:comparison_costs} and Fig.~\ref{fig:comparison_states_exp} reveal that the results obtained from the simulation of the exponential contact process and  \eqref{model:no_recharge} are similar, as expected.

 \begin{figure}[htb]
 \centering
\includegraphics[scale=0.45]{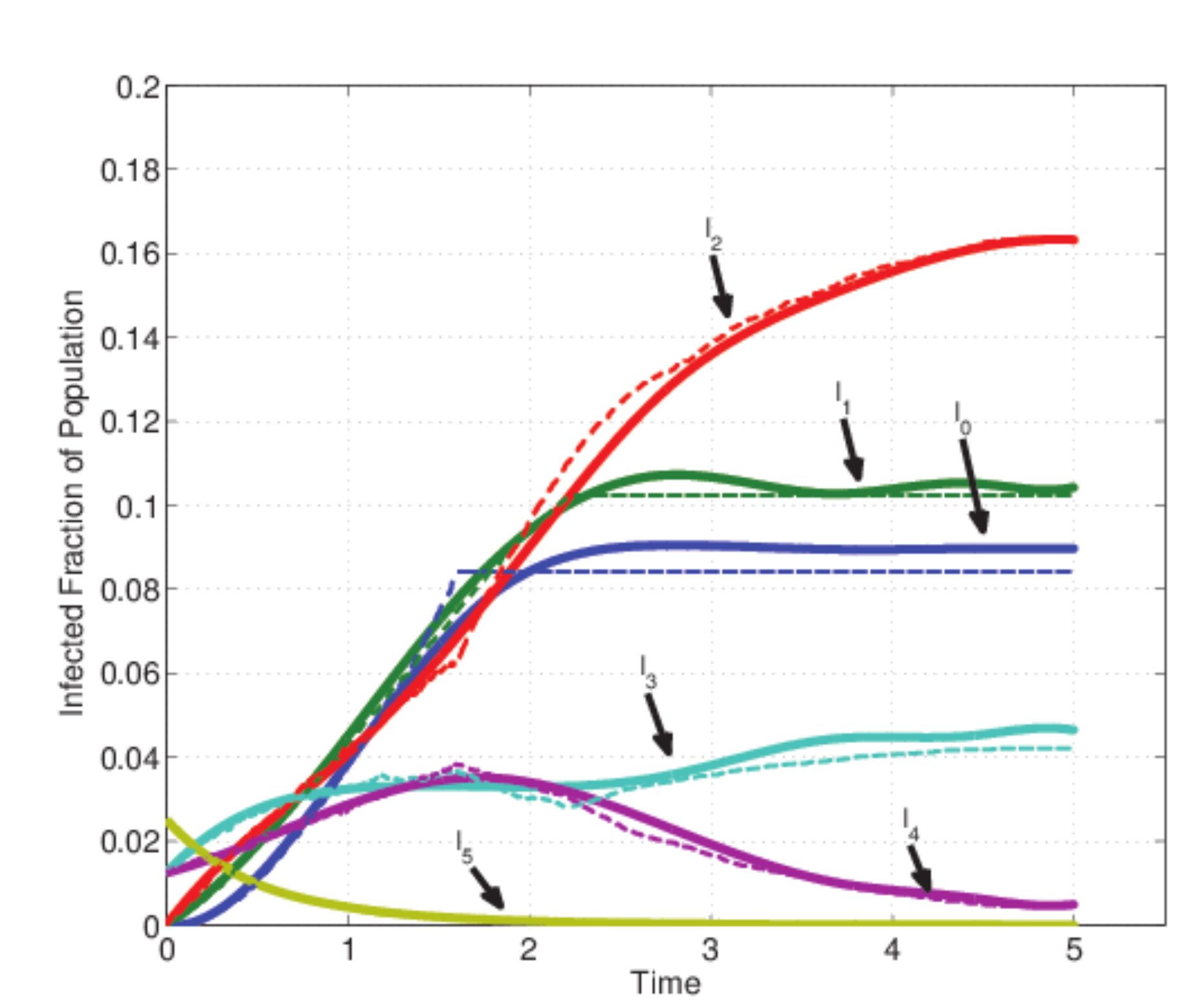}
\caption{{\scriptsize Comparison of the state processes for the mean-field {deterministic} regime (dashed lines) and  simulated exponential contact process. We consider a mandated probability of delivery of 80\%.}}
\label{fig:comparison_costs}
\end{figure}
 \begin{figure}[htb]
 \centering
\includegraphics[scale=0.45]{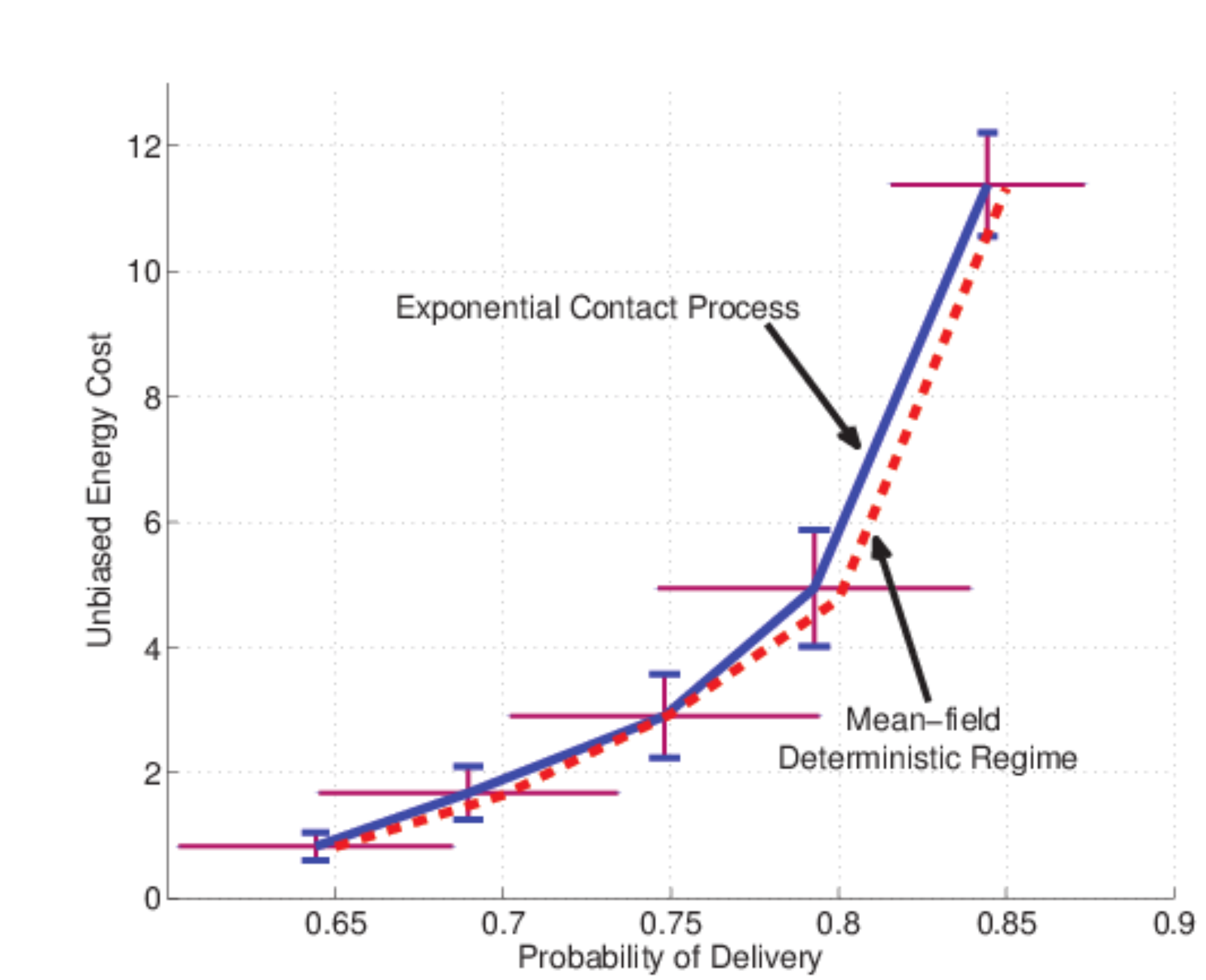}
\caption{{\scriptsize Comparing the costs of the mean-field  deterministic   regime (dashed line) and simulated exponential process as a function of the mandated probability of delivery. The error bars represent the standard deviations of the statistical simulations.}}
\label{fig:comparison_states_exp}
\end{figure}
{
\subsubsection{Truncated Power Law Contact Process}\label{subsec:model2}
We consider the truncated power-law contact process observed in~\cite{hui2005pocket} for a network with $N=41$ nodes and $\alpha=0.4$. The power-law process  was truncated in that the contact times are restricted to be  between $2$ minutes and $24$ hours. We use $\beta = 4.46$ in our differential equations \eqref{model:no_recharge}  so that $1/\beta$ equals the
expected inter-contact time between any pair
of nodes under this distribution. Also, $\mathbf{I_0}=(0,0,0,0,0.025,0.025)$. Even though $N$ is small and the contact process is not memoryless, Fig.~\ref{fig:comparison_states_pwr} shows that the states derived from this simulation and \eqref{model:no_recharge} follow the same trends, but there is a gap, which is to be expected because this contact model does not have the homogeneity of the exponential case, and the number of nodes is small ($N=41$, since the experimental data in~\cite{hui2005pocket} was obtained for this $N$). Fig.~\ref{fig:comparison_pwr} show that the costs in this model are, however, quite close to those derived from our equations, suggesting the robustness of energy cost to the change in contact process. \hide{obtained from the truncated power law process  are reasonably close to that of the deterministic model and follow the same general trends.}

 \begin{figure}[htb]
 \centering
\includegraphics[scale=0.45]{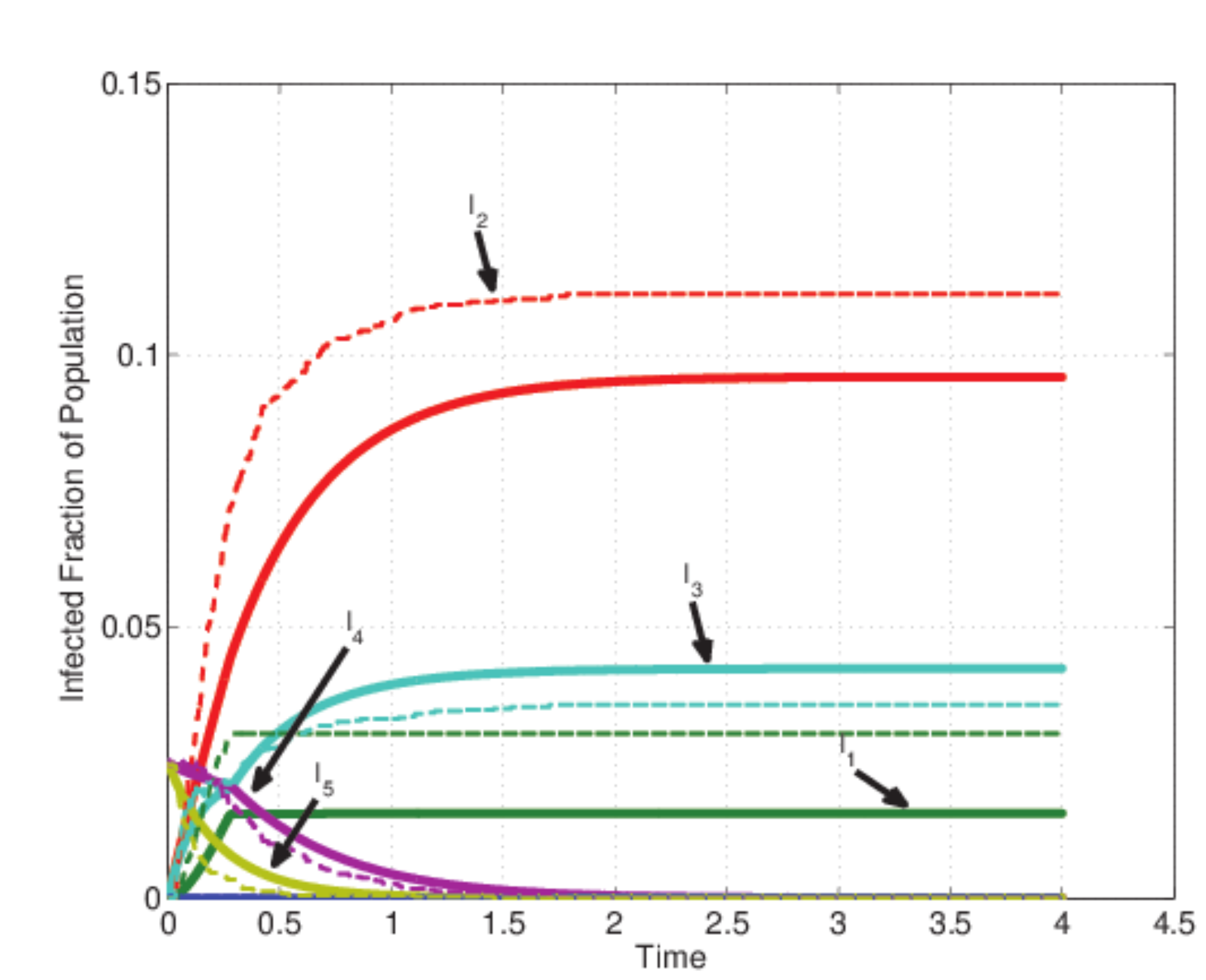}
\caption{{\scriptsize Comparison of the evolution of the infection in a mean-field deterministic regime (dashed lines) and the power-law contact process
observed in \cite{hui2005pocket}. We use a mandated probability of delivery of 90\%.
 }}
\label{fig:comparison_states_pwr}
\end{figure}

 \begin{figure}[htb]
\centering
\includegraphics[scale=0.45]{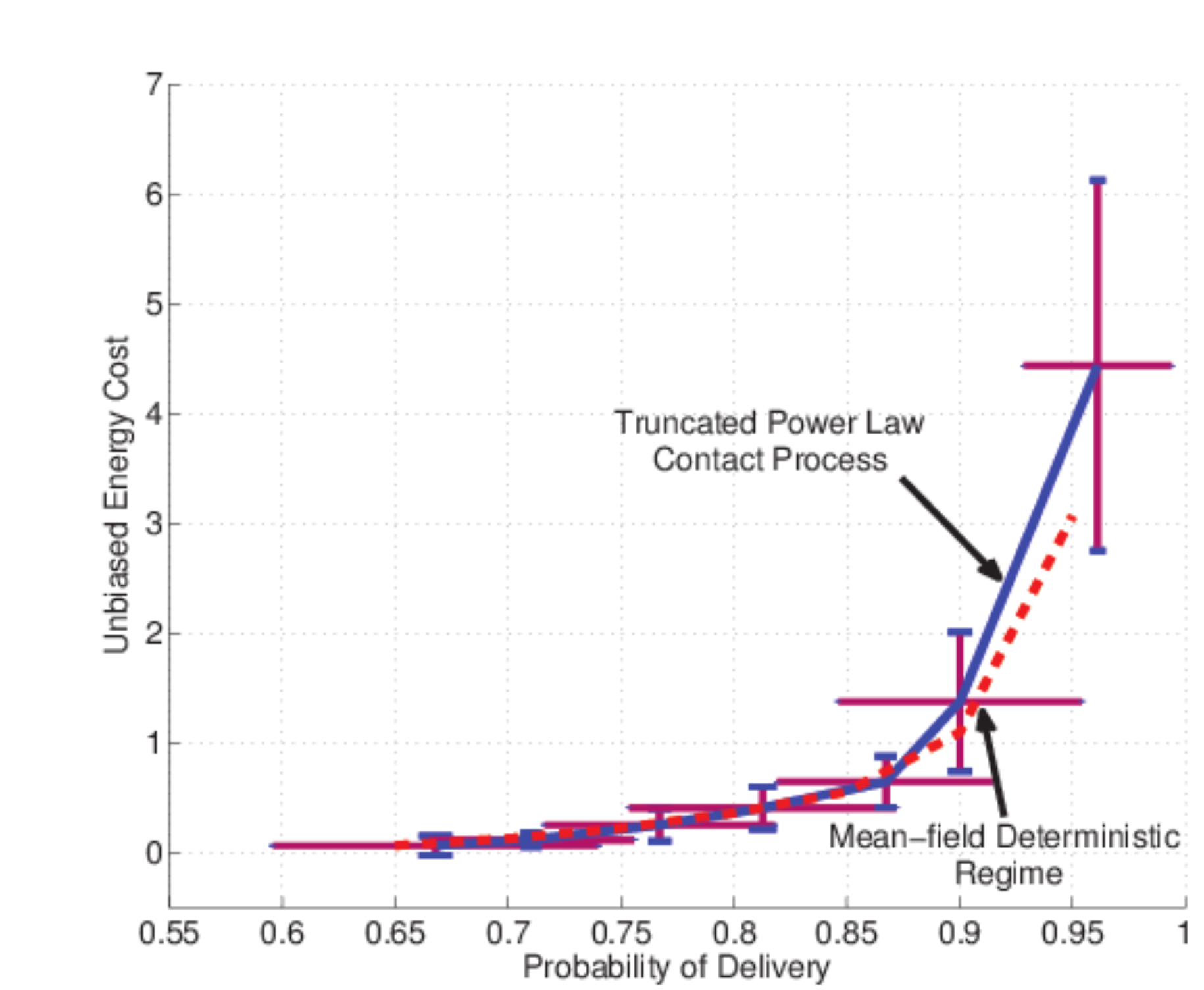}
\caption{{\scriptsize Here, the cost under the mean-field deterministic regime (dashed line) is compared to that of a truncated power law contact process (solid line) as a function of the mandated probability of delivery. The error bars represent the standard deviations of the costs and probabilities of delivery.}}
\label{fig:comparison_pwr}
\end{figure}
}}
\subsection{Performance advantage of optimal control over heuristics}
\label{subsec:heuristics}

\subsubsection{Description of Heuristics}
We propose two classes of heuristic policies, and describe sub-classes that correspond to policies in prior literature. In all classes and sub-classes, we define the {\em best} policy to be that which minimizes the unbiased energy  cost{\hide{  (i.e., $\sum_{i=\taus}^B \left(S_i(T) + I_i(T)\right)$}} subject to satisfying the throughput constraint \eqref{path_constraint}.

{\bf I}- Static Across Energy Levels: Policies that choose a one jump (from a fixed value in $[0,1]$ to zero) control that is the same for all energy levels. In these policies, nodes do not need to know their residual energy level. {\hide{This information is readily available in most current hardware but might be less accessible and less exact in legacy versions.}}The best policy in this class is selected through a search over the range $[0,T]\times[0,1]$, which is less than that of the optimal control ($[0,T]^{B-s+1}$).

{\bf II}- Static Across Time: Policies that force all controls to be at a fixed value (potentially different for each energy level) throughout $[0,T]$. These policies are inherently robust to errors in clock synchronization,  and the best policy in this class can be determined through a search over  the range $[0,1]^{B-s+1}$, which is similar to that of the optimal control.

Policies where controls depend on residual energy levels, e.g., those in (II), have not been proposed in existing literature.
Several sub-classes of (I) have been proposed, however:\footnote{Sub-classes inherit constraints of classes from which they are descended.}

\indent 1) Probability Threshold (also known as optimized flooding): Policies whose controls  drop from 1 to 0 when the probability of message delivery in $[0,T]$ surpasses a certain threshold (e.g., \cite{wang2006dft}).

\indent 2) Infection Threshold: Policies whose controls  drop from 1 to 0 when the total number of infected nodes with enough energy to transfer the message to the destination surpasses a certain threshold (e.g., \cite{neglia2006optimal}).

 \indent 3) Static Across Time and Energy Levels: Policies that force all energy levels to choose the same fixed control (between 0 and 1) throughout $ [0,T]$ (e.g., \cite{neglia2006optimal}).

\indent 4) One Control (also known as flooding, epidemic routing): The single policy that sets all controls to one. (Originally in \cite{vahdat2000epidemic}, also in \cite{de2010optimal} and \cite{wang2006dft}.)

\indent 5) Zero Control (also known as Spray and Wait, two-hop transmission, direct transmission): The single policy that sets all controls to zero. (Originally in \cite{spyropoulos2005spray}, also in \cite{de2010optimal} and \cite{wang2006dft}.)

The best policy in the Probability and Infection  Threshold classes can be determined through a search over
 $[0,T]$, and that in the Static Across Time and Energy Levels class through a search over $[0,1]$. {\hide{Note that the Zero and One Control policies and also those  in the Static Across Time and Energy Levels class  are inherently robust to errors in clock synchronization and do not  need  knowledge of the residual energy levels.  }}However, the Zero Control policy fails to attain  the mandated probability of delivery in settings that we consider (small to moderate values of  initial infection  and $T$), and is thus excluded from Fig.~\ref{fig:perfor2} and \ref{fig:diversity} presented below.

\subsubsection{Relative Performance}\label{subsec:performance}

In Fig.~\ref{fig:perfor2}, the costs associated with energy consumption for the optimal policy and also  the best policies in each of  the proposed classes are compared as $\beta$ is varied. We  use the name of the class/sub-class to refer to the best policy in that class/sub-class. The mandated probability of delivery is $90\%$, while  $\mathbf{I_0}=(0,0,0.0125,0.0125,0.0125,0.0125)$.
As the number of contacts increases, forwarding the message at every available opportunity becomes less desirable as it leads to massive energy consumption. The ``One Control'' policy, therefore, acts as a battery depletion attack on the nodes, using up all of their energy reserves and leading to significantly higher cost (over 30\% worse than the second worst heuristic), and therefore it is left out of the figure for illustrative purposes.\hide{ This is the main flaw of simple epidemic routing~\cite{vahdat2000epidemic} that spurred the development of the other heuristics in the time since its proposition. }\hide{It is interesting to note that as the number of contacts increases, the One Control policy is the only one that shows a deterioration in performance, as an increased number of contacts translates to an increase in energy use for this heuristic with no additional gain. The other heuristics can  curb their transmissions as required; thus, intuitively,  {as an increase in $\beta_0=\beta$ leads to more contacts with the destination and thus less need for forwarding the message to intermediate nodes in the first place}, their energy consumption decreases.
\begin{figure}[htb]
 \centering
\includegraphics[scale=0.45]{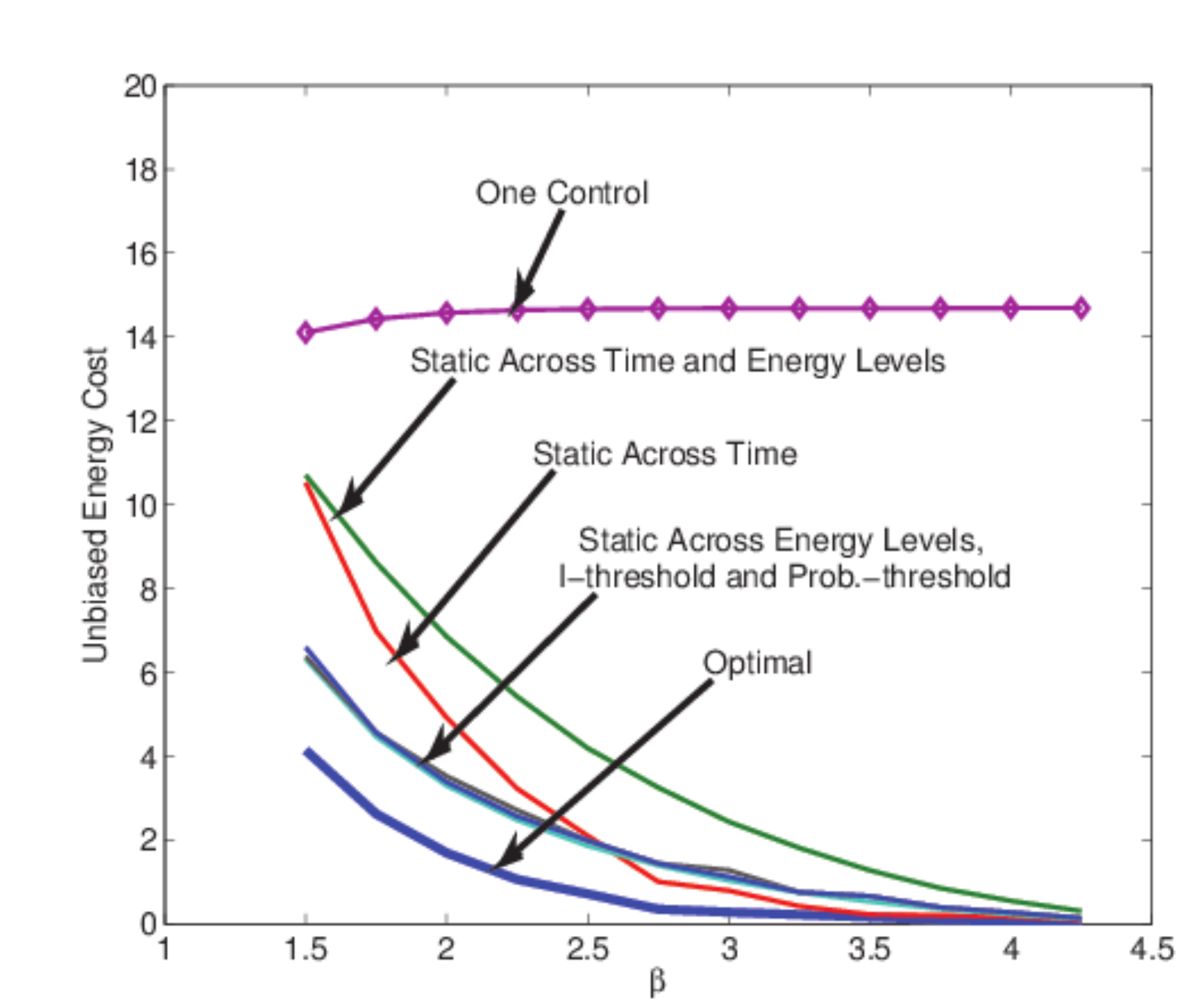}
\caption{{\scriptsize Performance of the optimal and heuristic controls.  The performances of the ``Static Across Energy Levels'', ``Infection Threshold'', and ``Probability Threshold'' policies are very close, and they are indicated with a single arrow.}}
\label{fig:perfor1}
\end{figure}

To get a better view of the performance of the other competing policies, we remove ``One Control'' and compare the rest (Fig.~\ref{fig:perfor2}). Here,} We see that the optimal policy significantly outperforms the best of the rest of the heuristic for low and moderate values of $\beta$ (for $\beta \leq 2.5$), e.g.,  the performance difference is  $50\%$ for $\beta \approx 2$. We also see that the Static Across Energy Levels
 and Static in Time heuristics respectively outperform  all  other heuristics for low and high values of $\beta$. As contacts ($\beta T$) increase, the flexibility to adapt the control in accordance with the residual energy of the nodes provided by Static in Time  turns out to be beneficial, as the mandated probability of delivery can be achieved by utilizing higher energy nodes. {{In fact, Static in Time
 performs close to the optimal for large values of $\beta$.}} In summary, the improvement in performance attained by the optimal control over simpler heuristics justifies its utilization of time-dependent and residual-energy-dependent decisions except for relatively large values of $\beta$ where there is less need to spread the message due to more frequent meetings with the destination. In this case, near-optimal performance can be achieved by choosing controls based only on residual energy and not  time, as is the case for Static in Time. {{Such choices  may be used instead of the optimal policy for more robustness to clock synchronization errors{, an issue we visit next}.}}
\begin{figure}[htb]
 \centering
\includegraphics[scale=0.45]{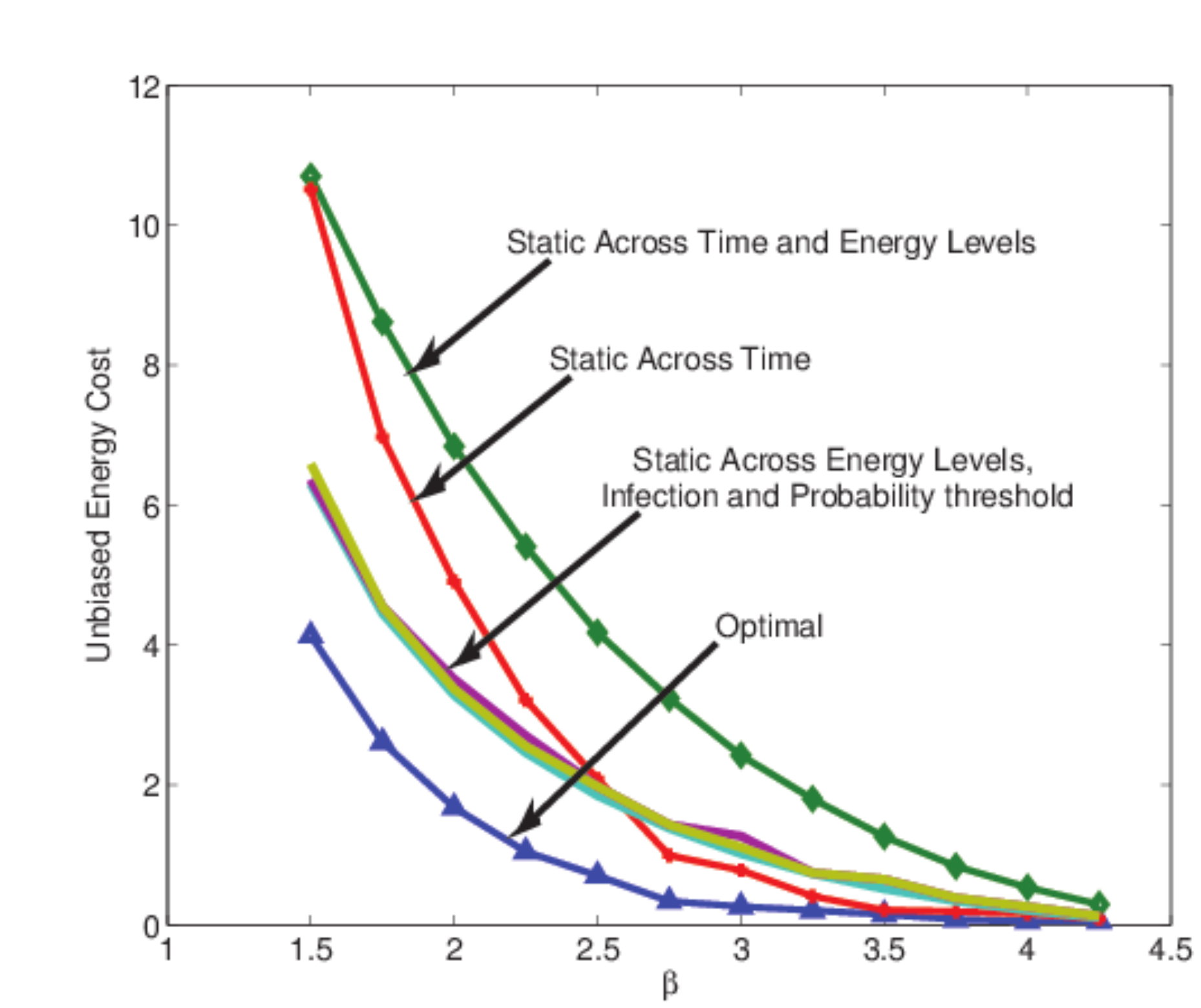}
\caption{{\scriptsize Performance of the optimal and  heuristic controls. The performances of the ``Static Across Energy Levels'', ``Infection Threshold'', and ``Probability Threshold'' policies are very close, and they are indicated with a single arrow.}}
\label{fig:perfor2}
\end{figure}
\subsection{Sensitivity of the optimal control to synchronization { and residual energy {determination}\hide{estimation} errors}}
\label{subsec:robustness}
{We will consider a system with $N=500$ nodes,  $\bI_0=(0,0,0,0.0125, 0.0125, 0.025)$, $T=5$, mandated probability of delivery  75\% and $\beta=2$ {simulated over 200 runs.}}
\hide{\subsubsection{Synchronization Errors}\label{subsec:synch}}
 \begin{figure}[htb]
 \centering
\includegraphics[scale=0.45]{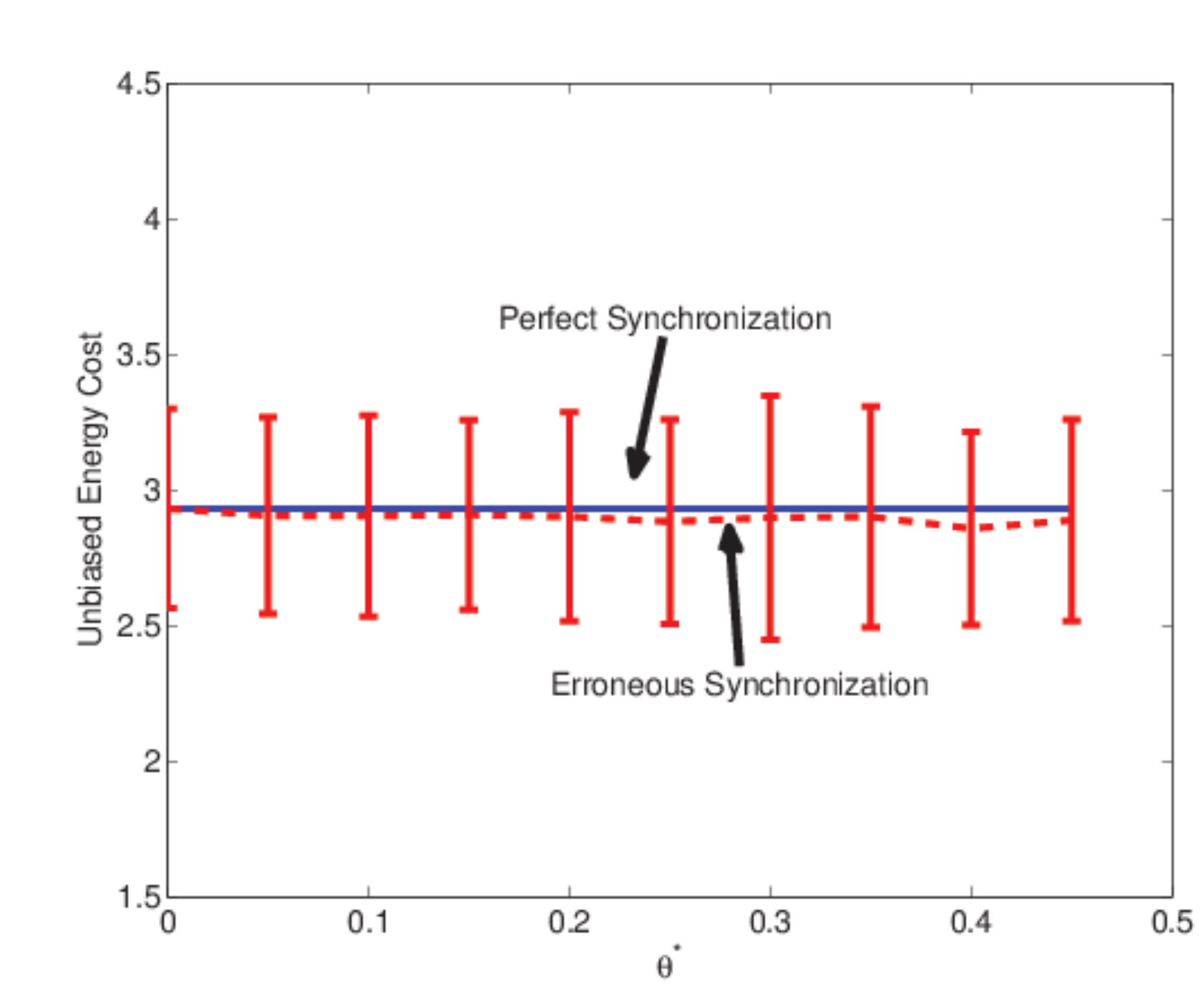}

\subfigure{{\scriptsize (a)  Unbiased energy cost}}

\includegraphics[scale=0.45]{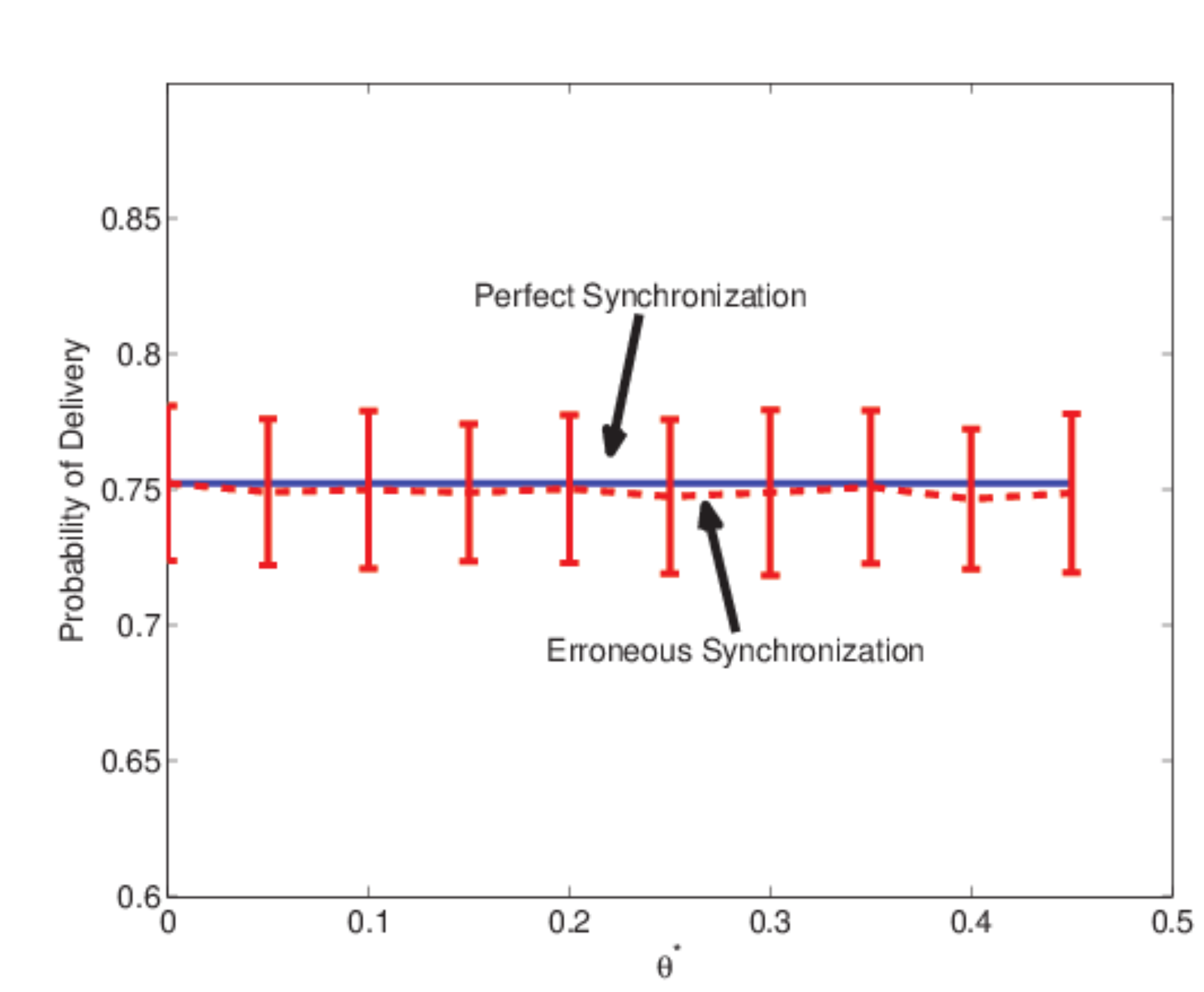}

\subfigure{{\scriptsize (b)  Probability of delivery}}
\caption{{\scriptsize {Comparison of the performance of the optimal policy when we have perfect synchronization (solid line) and an implementation with synchronization errors, in terms of both unbiased energy cost and probability of message delivery. $\theta^*$ is the range of the synchronization error for each node, and the error bars represent standard deviations.}
}}
\label{fig:drift}
\end{figure}
\subsubsection{Synchronization Errors}\label{subsec:synch}
 We allow each node to have  a clock synchronization error that manifests itself as a time-shift in implementing the control decisions.\footnote{In other words, if a node has  a time-shift of $\Delta$, while implementing the optimal control it uses a threshold time of $t_i + \Delta$ instead of $t_i$
when it has $i$ units of residual energy.} Thus, the optimal policy may incur a higher energy cost than the optimal value and provide a probability of delivery which is lower than the mandated value. We assess the extent of the deviations  considering node time-shifts  as mutually independent and uniformly distributed in $[-\theta^*,\theta^*]$; $\theta^*$ represents a measure of the magnitude of the synchronization errors. Fig.~\ref{fig:drift} reveals that  the network's performance is remarkably robust  in terms of both unbiased energy cost and probability of delivery (with maximum standard deviations of 0.5 for the unbiased energy cost and 0.03 for the probability of delivery) for   $\theta^*$ up to $10\%$ of the TTL $T$. This suggests that  the optimal policy does not have {a significant} operational drawback compared to the Static In Time heuristics that incur substantially higher energy costs except for large values of $\beta.$
{
\subsubsection{Energy Determination Errors}
Now we examine the case where each node is uncertain about its residual energy level, as may be the case for nodes with dated hardware. We assume each node under/over-estimates its residual energy level by one unit, each with probability $p^*$, independent of others. Specifically, if a node has $i$ units of energy, where $0 < i < B$, it estimates its energy availability as $i-1$, $i$ and $i+1$ with probabilities {$p^*$, $1-2p^*$ and $p^*$} respectively.\footnote{If a node has $B$ (respectively $0$) units of energy, it estimates its energy to be $B-1$ and $B$
(respectively $0$ and $1$) energy units with probabilities $p^*$ and $1-p^*$ (respectively $1-p^*$ and $p^*$).}  Fig.~\ref{fig:estvar} reveals that  the network's performance is  robust to such errors in terms of probability of message delivery, though the unbiased energy cost incurred increases slightly with $p^*$ (a change of less than $10\%$ for $p^*<0.15$). The maximum standard deviations of both cases are  similar to their analogs from \S\ref{subsec:synch}, confirming the previous observation. This suggests that  the optimal policy does not suffer  from any {significant} operational drawbacks as compared to the Static Across Energy Levels heuristics, which attain substantially higher energy costs.
 \begin{figure}[htb]
 \centering
\includegraphics[scale=0.45]{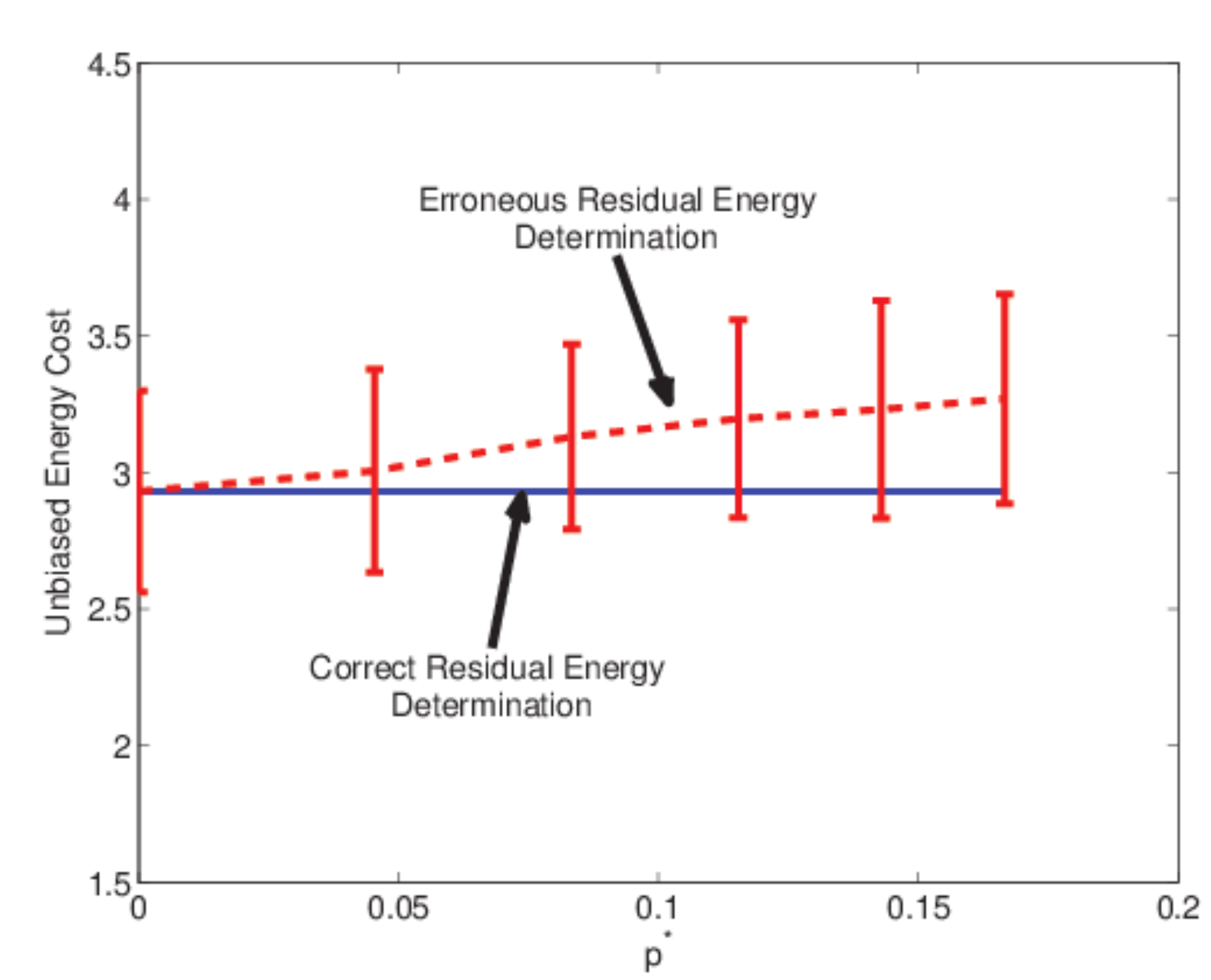}

\subfigure{{\scriptsize (a)  Unbiased energy cost}}

\includegraphics[scale=0.45]{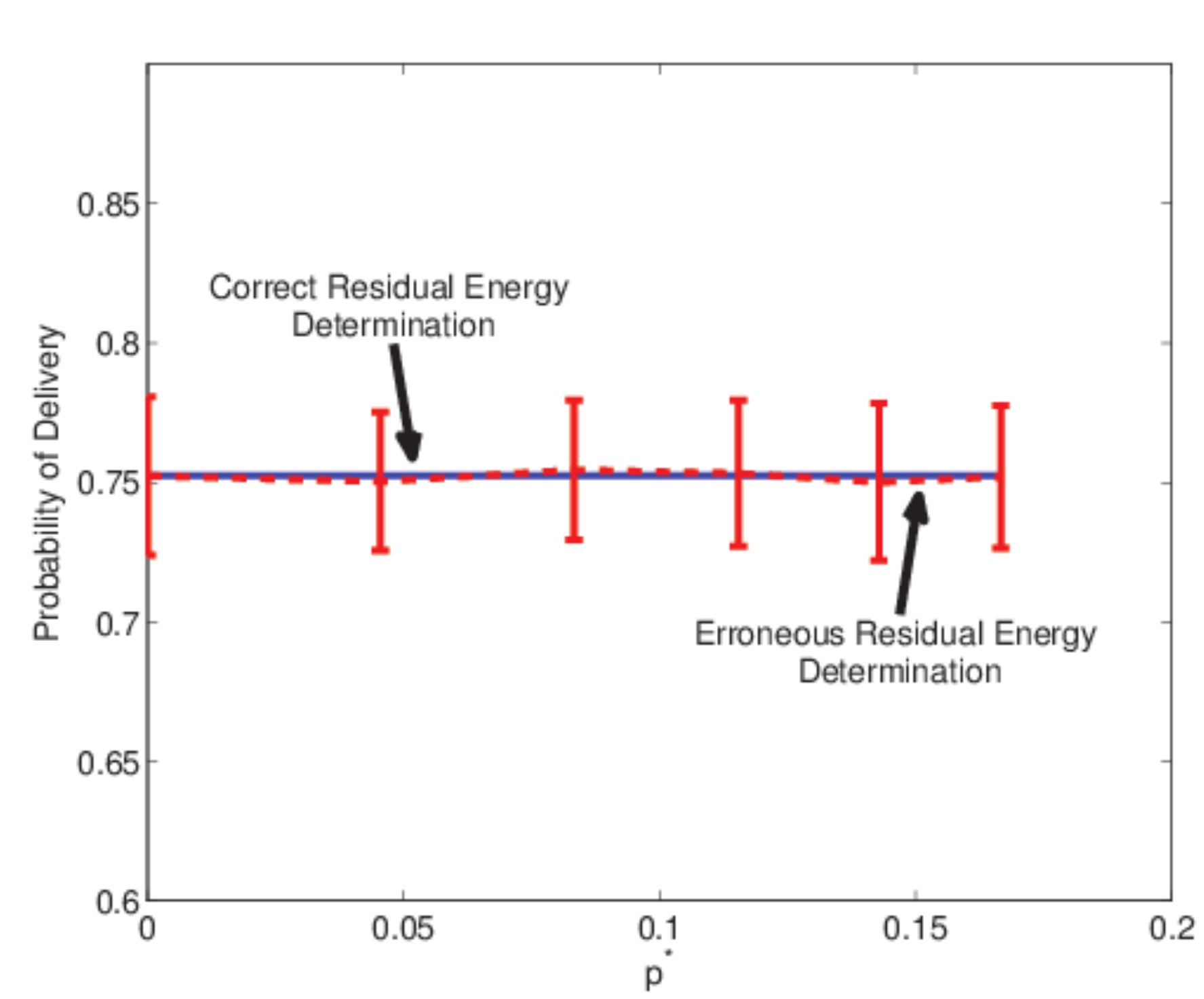}

\subfigure{{\scriptsize (b)  Probability of delivery}}
\caption{{\scriptsize {Comparison of the performance of the optimal policy when nodes have correct knowledge of their residual energy (solid line) with cases where each node can makes a one unit error in determining its residual energy level (with probability $p^*$)\hide{, in terms of both unbiased energy cost and probability of message delivery}.  Again, the error bars represent the standard deviations of each parameter.}
}}
\label{fig:estvar}
\end{figure}
}
\subsection{Multiple Message Transmission}\label{subsec:lifetime}
We now consider a scenario where the network seeks to successively transmit $M$ messages, where $M$ is a system parameter. Each message is associated with a TTL of $T$ and all nodes drop the message at the end of the TTL. The transmission of the $i$th message starts at the end of the TTL of the $(i-1)$-th message.
 The transmission of each message must satisfy the throughput requirement \eqref{path_constraint}.

We assume that at its initial time each message is uniformly spread to a fixed, say $\Upsilon$, fraction of  the  nodes that have at least $\taus + r$ units of energy. \hide{The spread is uniform among all energy levels
greater than or equal to $\taus + r$.  }Since each initial reception consumes $r$ units of energy,  the nodes that receive the initial copies of a message have enough (i.e., at least $\taus$ units of) energy  to subsequently forward the message after reception.\footnote{Here, $r + \taus = 3.$   Thus,  for example, if  $50\%$ of nodes have $4$ units of energy and $80\%$ of nodes have at least $3$ units of energy at the beginning of transmission of a message, and $\Upsilon = 0.01$,  then $1.25\%$ of the nodes with $4$ units of energy receive the initial copy of the message. So at the beginning of this transmission, $I_{3} = 0.00625$ and $S_4 = 0.04375.$  } \hide{This means that if for $k\geq1$, $\Upsilon_k:=(\upsilon^k_B,\ldots,\upsilon^k_0)$ is the spread of node energies {\em before} the $k$-th message (with the subscripts indicating energy level and the superscripts indicating the message), the message is spread among nodes in $\upsilon^k_i$, $i\geq\taus+r$. This causes the affected part of $\upsilon^k_{i}$ to turn into the initial infection $I^k_{i-r}((k-1)T)$, while the rest turns into $S^k_{i}((k-1)T)$. For $i<\taus+r$, $S^k_i((k-1)T)=\upsilon^k_i$. After the TTL of the message, all nodes drop the message, and for $\Upsilon_{k+1}$ (energy distribution before the next message) we have: $\upsilon^{k+1}_i:= S^k_i(kT)+ I^k_i(kT)$ for all $i$.\footnote{The fraction of initially infected nodes, i.e., $\sum_{i=\taus}^BI^k_i((k-1)T)$, is assumed to be fixed for all messages ($k\geq 1$).}   Because each  message consumes at least $r$ energy units from a fixed (non-zero) fraction of nodes, a finite number of messages  can be forwarded.} Once the network cannot guarantee the mandated probability of delivery for a message, we consider it to have been exhausted.

In these settings, we consider the natural generalization of our single transmission policies: the ``Myopic Optimal'' policy uses the one-step optimal for the transmission of each message, while others use the single-transmission best policy in their corresponding class (from \S\ref{subsec:heuristics}){\hide{for the transmission of each message.}}.  Our metric for comparing the performance of policies is the unbiased energy cost $\sum_{i=\taus}^B a_i \left(S_i(MT)+ I_i(MT)\right)-\sum_{i=\taus}^B a_i (S_i(0)+ I_i(0))$.
We only consider the cost of messages that can be forwarded to the destination before network exhaustion.

We plot the performance of each policy for $\{a_i\}$ that are quadratic functions of $B-i$ (Fig.~\ref{fig:diversity}), though similar results are seen for linear and exponential functions of $B-i$~\cite{techreport}. Here, $T=100$, $\beta=3, \Upsilon=0.001$, and the {mandated probability of delivery for each message} is 95\%.   Also, before the initial copies of the first message are distributed, all nodes have at least $3$ units of energy - $33\%$, $33\%$, and $34\%$ of the nodes have 3, 4, and 5 units of energy respectively. We see that the Myopic Optimal policy outperforms all the other policies that we consider in terms of the {unbiased} energy cost for each fixed number of transmissions and also the number of messages transmitted till exhaustion. Note that as $M$ increases, the difference between the unbiased energy costs of the Myopic Optimal policy and other policies  becomes substantial, e.g.,  the difference is around {$\hide{70-}90\%$} for around 10 transmissions. The number of messages forwarded to the destination till exhaustion by the Myopic Optimal policy is also slightly greater than that of the Static Across Time policy, and $60\%$ better than the best of the rest. 
 \begin{figure}[htb]
 \centering
\hide{\includegraphics[scale=0.45]{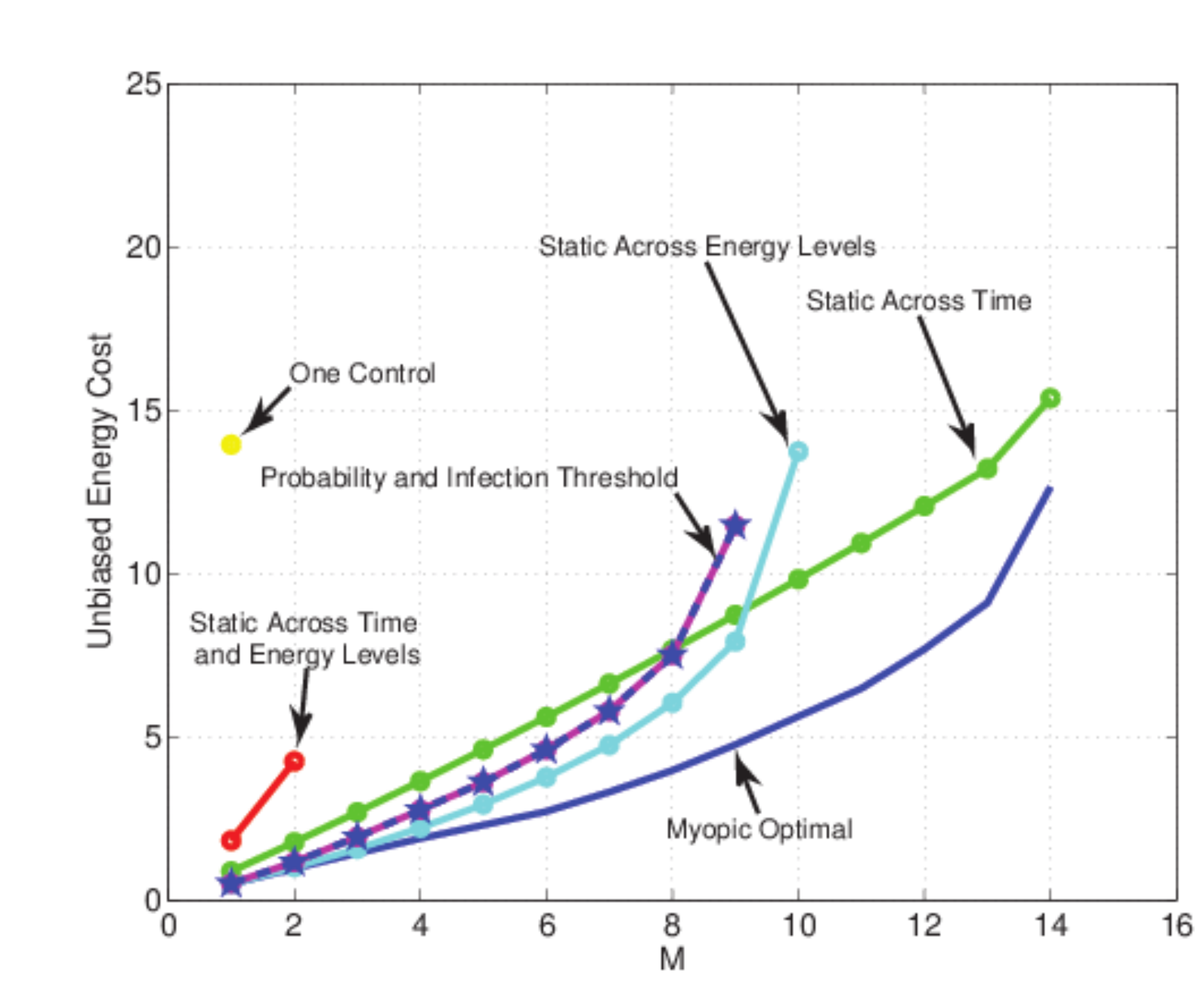}
\subfigure{{\scriptsize (a)  $a_i=e^{(B-i)}$}}}
\includegraphics[scale=0.45]{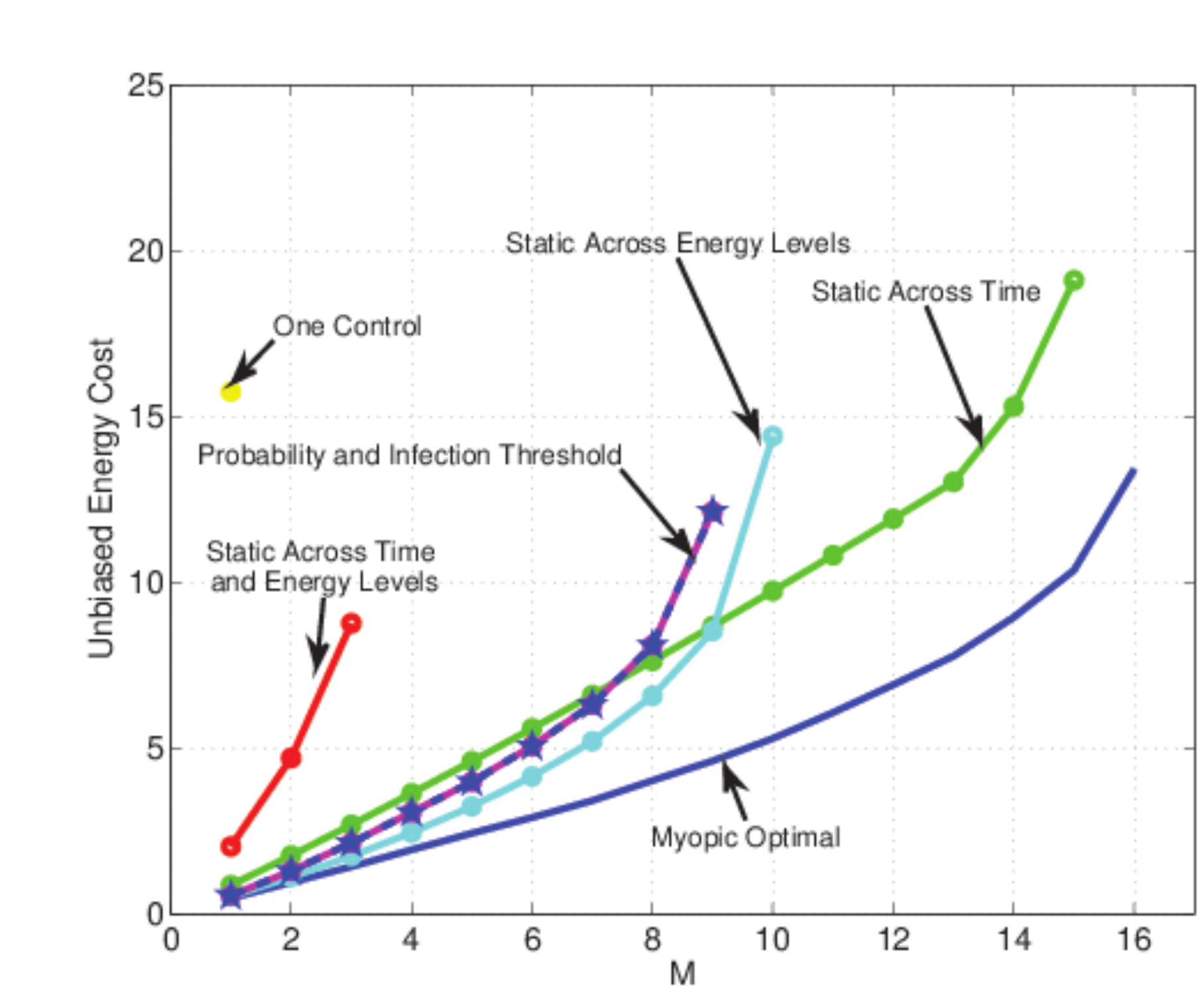}
\hide{\subfigure{{\scriptsize (b)  $a_i=(B-i)^2$}}}
\hide{\includegraphics[scale=0.45]{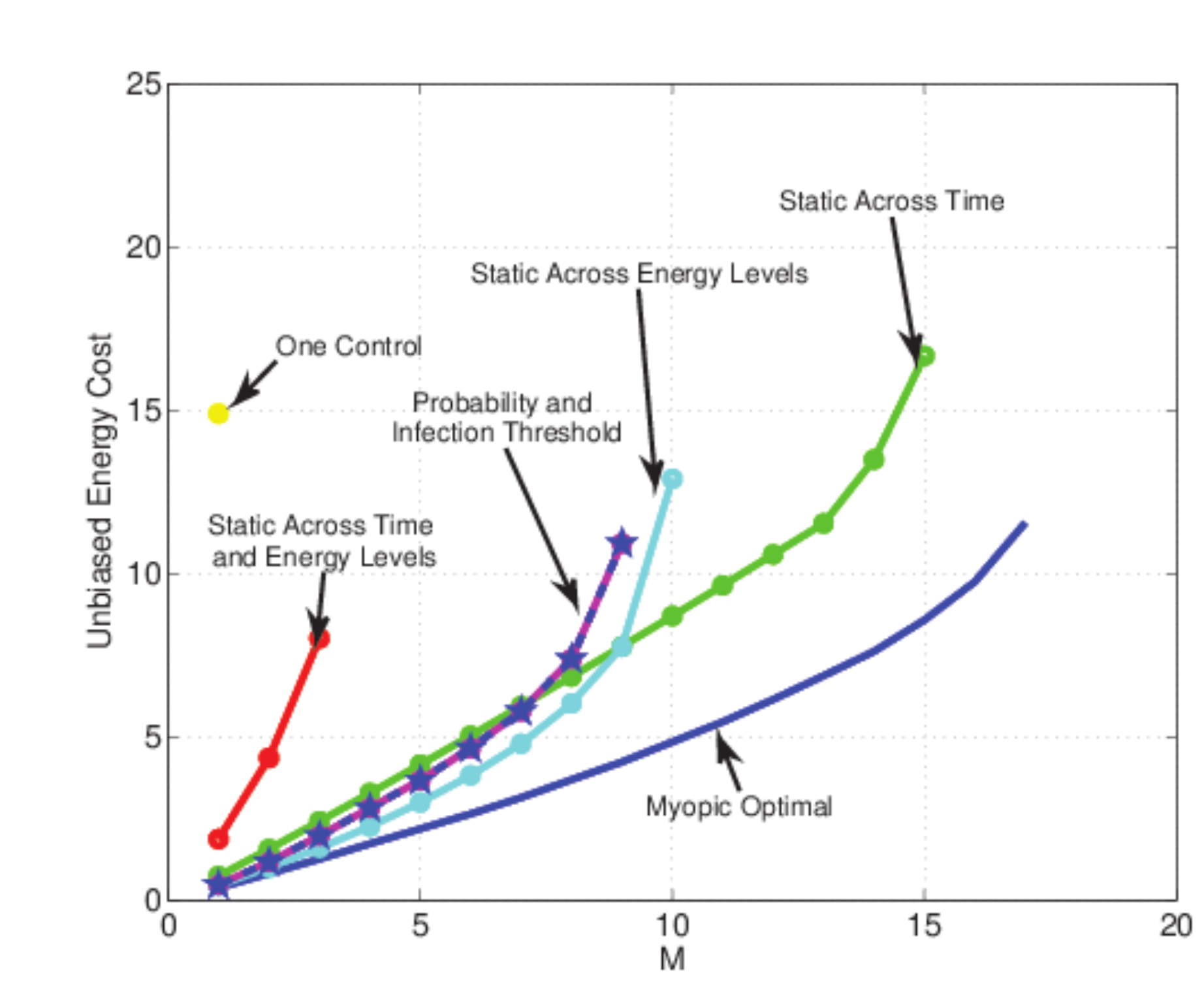}
\subfigure{{\scriptsize (c)  $a_i=(B-i)$}}}
\caption{{\scriptsize The figures plot the {unbiased energy cost\hide{ of the remaining energy at the delivery of a message}} as a function of {M, the number of messages transmitted,} for different policies. \hide{The parameters were:  $T=100$, $\beta=3$ and the mandated probability was 95\%, while the initial energy distribution was $(0,0,0,0.33,0.33,0.34)$ (before the distribution of the first message). }The battery penalties were
\hide{ (a) $a_i=e^{(B-i)}$, (b)} $a_i=(B-i)^2$\hide{, and (c) $a_i=(B-i)$ respectively}.\hide{, and each successive message was initially uniformly randomly spread to 0.1\% of the nodes that had enough energy to forward it.}
}}
\label{fig:diversity}
\end{figure}

\section{
Future Research}

\hide{We formulated  the problem of optimal energy-dependent message forwarding with (delay sensitive) throughput guarantees in energy-constrained DTNs as a multi-variable optimal control problem using a deterministic  mean-field model. We analytically established that optimal forwarding decisions are composed of simple threshold-based policies, where the threshold used by a node depends only on its residual energy.
We also proved that the thresholds are monotonic in the energy levels for a large class of cost functions associated with the energy consumed in transmission of the message.   Our simulations reveal that the optimal control substantially outperforms state-of-the-art heuristic strategies.

We now describe some directions for future research that have emerged from our investigation of the topic.}
Our analysis has targeted the transmission of a single message, and our simulations reveal that a natural generalization of the corresponding optimal policy substantially outperforms heuristics even for sequential transmission of multiple messages. It would be of interest to characterise the optimal policy in this case and also for the transmission of  multiple messages with overlapping time-to-live intervals. Next, in order to attain an adequate balance between tractability
and emulation of reality, we have abstracted certain features that arise in practice. A case in point is that we ignored the energy dissipated in scanning the media in search of new nodes. We have also assumed homogeneous mixing, i.e., the inter-contact times are similarly distributed for all pairs of nodes. Future research may be directed towards generalizing the analytical results  for models that relax the above assumptions.   For example, we may be able develop optimal policies for spatially inhomogeneous networks by partly relaxing  the homogenous mixing assumption using the approach of  \cite{2khouzani2012optimal}. Similarly, we have demonstrated using simulations that our optimal control policy is  robust  to clock synchronization errors and also errors in the determination of a node's residual energy level. Designing policies that are provably robust to the above errors as per some formal  robustness metric remains open.

\hide{Using numerical calculations and utilizing the benchmark of our optimal dynamic policy, we demonstrated that simple energy-based heuristic policies can achieve  close to optimal performance.

Our model can be generalized to incorporate the energy required for scanning the media, as well as energy replenishment, and the control can be made dependent on the energy-level of the receiver as well as the transmitter. We also intend to extend the results of this paper to the multicast scenario in which a message is destined for more than a single node and the objective is to maximize the number of destinations that receive a copy of the message.
As another future directions of research, we can relate the final state costs more \emph{directly} to the problem of energy-constrained DTN lifetime maximization. \hide{ We can also relax the deterministic model and investigate a partially observable Markov decision process (POMDP) model in general and characterize robust closed-loop optimal policies.}
}

\hide{\section{Acknowledgements}
The work is partially supported by the Army Research Office MURI Awards W911NF-08-1-0238 and W911NF-07-1-0376, and NSF grants CNS-0831919, CNS-0721434, CNS-1115547, CNS-0915697, CNS-0915203 and CNS-0914955.
}
\bibliographystyle{IEEEtran}

\bibliography{biblio}

\appendices

\hide{
\section{\textsc{Proof of Lemma~\ref{baselemma}}}\label{appendix_lemma_1}We first prove that $\boldf(t,\cdot)$ is Lipschitz over $\BS$, where the Lipschitz constant  may be chosen uniformly as $t$ varies over $[0,T]$.
 Since the components of $Q_i(t)$ are  uniformly, absolutely bounded over $[0,T]$, it follows that the Euclidean norm $\|Q_i(t)\bx\|$ is uniformly bounded over $(t,\bx)\in[0,T]\times\BS$: that is to say, there exists $M<\infty$ such that $\sup_{[0,T]\times\BS} \|Q_i(t)\bx\| \leq M$. Now, for each $t$, we may write
\begin{multline*}
  f_i(t,\bx) - f_i(t,\by) = \bx^T Q_i(t) \bx - \by^T Q_i(t) \by\\
    = \bigl(\bx^T Q_i(t) \bx - \bx^T Q_i(t) \by\bigr)
      + \bigl(\bx^T Q_i(t) \by - \by^T Q_i(t) \by\bigr)\\
    = \bigl(Q_i(t)\bx\bigr)^T (\bx - \by) + (\bx - \by)^T Q_i(t)\by.
\end{multline*}
Taking absolute values of both sides, we obtain
\begin{multline*}
  |f_i(t,\bx) - f_i(t,\by)|\leq \bigl|\bigl(Q_i(t)\bx\bigr)^T (\bx - \by)\bigr|
      + \bigl|(\bx - \by)^T Q_i(t)\by\bigr|\\
    \leq \|Q_i(t)\bx\|\cdot\|\bx-\by\| + \|\bx-\by\|\cdot\|Q_i(t)\by\|\\
    \leq 2M\|\bx-\by\| \qquad (t\in[0,T];\; \bx,\by\in\BS).
\end{multline*}

The first step follows by the triangle inequality, the second by two applications of the Cauchy-Schwarz inequality. Therefore

\begin{multline*}
  \|\boldf(t,\bx) - \boldf(t,\by)\| = \biggl(\sum_{i=1}^N
      \bigl(f_i(t,\bx) - f_i(t,\by)\bigr)^2\biggr)^{\!1/2}\\
    \leq L\,\|\bx - \by\| \qquad (t\in[0,T];\; \bx,\by\in\BS),
\end{multline*}
and so $\boldf(t,\cdot)$ is Lipschitz over $\BS$ where the Lipschitz constant $L = 2M\sqrt{N}$ may be chosen uniformly as $t$ varies over $[0,T]$. The lemma follows from the folowing lemma.

{\bf \begin{Lemma}\cite[p. 22]{coddington1955theory},\cite[p. 419]{seierstad1987optimal} 
Consider the following differential equation $\dot{\bx}=h(\bx,t)$, with:
a) $h(\bx,t)$ being defined on an open set $A\in R^{n+1}$, continuous except at most a countable number $t$'s, and with limits existing for all $(\bx,t)\in A$,   
b) $h(\bx,t)$ being locally Lipschitz continuous over A.
 Suppose $\bar{\bx}(t)$ is a solution of the equation on some interval $[a, b]$ with $(\bar{\bx}(t), t)\in A$ and let $t_0\in [a, b]$, $\bar{\bx}^0=\bar{\bx}(t_0) $.
Then, there exists a neighbourhood $N=B(\bar{x^0}, r) x [f_0-\alpha, f_0+\alpha]$ with $r>0$, and $\alpha>0$, such that for all $(x^0, t_0) \in N$, there exists a unique solution
through $(x^0, t_0)$ defined on $[a, b]$ with its graph in $A$. If we denote this
solution by $x(t; x^0, t_0)$, then
for each $t\in [a, b]$, the function $(x^0, t_0)\to x(t; x^0, t_0)$ is continuous in $N$.
\end{Lemma}
}}
\hide{ We now show, using a standard process of Picard iteration,  that the solution $(t,\bF_0)\mapsto\bF(t,\bF_0)$ of~\eqref{Vectordf} is continuous, hence also uniformly continuous, over the compact set $[0,T]\times\BS$. Starting with any continuous function $\bFu0$ on $[0,T]\times\BS$, recursively form the Picard iterates
\begin{equation*}
  \bFu{n}(t,\bx) = \bx + \int_0^t \boldf\bigl(v, \bFu{n-1}(v,\bx)\bigr)\,dv
    \qquad (n\geq1).
\end{equation*}
[Subscripts are getting overloaded and so we introduce superscripts in a nonce notation to represent the iteration index (and not higher-order derivatives).]  We will show that  $\bigl\{\,\bFu{n}(t,\bx), n\geq1\,\bigr\}$  converges uniformly to a limit $\bF(t,\bx)$, where $\bF(t,\bx)$ is the unique solution to~\eqref{Vectordf} and is continuous over $[0,T]\times\BS$. This will complete the proof of the lemma.

 By induction we see that, for each $n$, $\bFu{n}$ is continuous, hence uniformly continuous and bounded, over the compact set $[0,T]\times\BS$. Write $K = \sup_{(t,\bx)\in[0,T]\times\BS} \bigl\|\bFu1(t,\bx) - \bFu0(t,\bx)\bigr\|$. The supremum over the compact set $[0,T]\times\BS$ is finite as the function difference inside the norm is continuous. As $\boldf$ is Lipschitz,
\begin{multline*}
  \bigl\|\bFu2(t,\bx) - \bFu1(t,\bx)\bigr\|\\
    \leq L\int_0^t \bigl\|\bFu1(v,\bx) - \bFu0(v,\bx)\bigr\|\,dv \leq KLt.
\end{multline*}
By induction, it follows that
\begin{equation*}
  \bigl\|\bFu{n+1}(t,\bx) - \bFu{n}(t,\bx)\bigr\|\leq K\frac{(Lt)^n}{n!}.
\end{equation*}
By repeated use of the triangle inequality, it follows that
\begin{equation*}
  \bigl\|\bFu{n+m}(t,\bx) - \bFu{n}(t,\bx)\bigr\|\leq K\sum_{k=n}^{n+m-1}\frac{(Lt)^k}{k!},
\end{equation*}
and hence,
\begin{multline*}
  \sup_{(t,\bx)\in[0,T]\times\BS}\bigl\|\bFu{n+m}(t,\bx) - \bFu{n}(t,\bx)\bigr\|\\
    \leq K\sum_{k=n}^{n+m-1}\frac{(LT)^k}{k!}\leq K e^{LT} \frac{(LT)^n}{n!}.
\end{multline*}
The bound on the right decays to zero uniformly (as $n\to \infty$) and so $\bigl\{\,\bFu{n}(t,\bx), n\geq1\,\bigr\}$ is a Cauchy sequence converging uniformly to a limit $\bF(t,\bx)$. As the uniform limit of continuous functions is continuous, it follows that $\bF(t,\bx)$ is continuous over $[0,T]\times\BS$. Identifying $\bx$ with $\bF_0$, we next  show using  the usual Picard argument that this limit function $\bF(t,\bx)$ is a solution to~\eqref{Vectordf}.

 For any $0\leq t\leq T$, we have
\begin{align*}
  \biggl\| \bFu{n+1}(t,\bx)
    &- \biggl(\bx + \int_0^t \boldf\bigl(v,\bF(v,\bx)\bigr)\,dv\biggr)\biggr\|\\
    &= \biggl\|\int_0^t \bigl[ \boldf\bigl(v,\bFu{n}(v,\bx)\bigr)
      - \boldf\bigl(v,\bF(v,\bx)\bigr)\bigr]\,dv\biggr\|\\
    &\leq \int_0^t \bigl\| \boldf\bigl(v,\bFu{n}(v,\bx)\bigr)
      - \boldf\bigl(v,\bF(v,\bx)\bigr)\bigr\|\,dv\\
    &\leq L\int_0^t \bigl\|\bFu{n}(v,\bx)  - \bF(v,\bx)\bigr\|\,dv\\
    &\leq LT\sup_{0\leq v\leq T} \bigl\|\bFu{n}(v,\bx)  - \bF(v,\bx)\bigr\|.
\end{align*}
As $n\to\infty$, the term $\bFu{n+1}(t,\bx)$ on the left tends uniformly to the limit function $\bF(t,\bx)$ while the entire right-hand side tends to zero. It follows that
\begin{equation}\label{Vectordf2}\tag{\ref{Vectordf}$'$}
  \bF(t,\bx) = \bx + \int_0^t \boldf\bigl(v,\bF(v,\bx)\bigr)\,dv,
\end{equation}
which, by differentiation, is seen to be the same as~\eqref{Vectordf} with $\bx$ identified with $\bF_0$.

We next show that \eqref{Vectordf2} has a unique solution. Otherwise, let $\bF_1(t,\bx)$ and $\bF_2(t,\bx)$ constitute two distinct solutions of~\eqref{Vectordf2}. Let $J = J(\bx) = \sup_{0\leq t\leq T} \|\bF_1(t,\bx) - \bF_2(t,\bx)\|$. Then
\begin{align*}
  \|\bF_1(t,\bx) &- \bF_2(t,\bx)\| \\
    &= \biggl\|\int_0^t \bigl[ \boldf\bigl(v, \bF_1(v,\bx)\bigr)
      - \boldf\bigl(v, \bF_2(v,\bx)\bigr)\bigr]\,dv\biggr\|\\
    &\leq \int_0^t \bigl\|\boldf\bigl(v, \bF_1(v,\bx)\bigr)
      - \boldf\bigl(v, \bF_2(v,\bx)\bigr)\bigr\|\,dv\\
    &\overset{(\ast)}\leq L\int_0^t \|\bF_1(v,\bx) - \bF_2(v,\bx)\|\,dv
    \overset{(\ast\ast)}\leq JLt.
\end{align*}
Working iteratively by applying the bound~$(\ast\ast)$ to the integrand in the step~$(\ast)$, we see that
\begin{multline*}
  \|\bF_1(t,\bx) - \bF_2(t,\bx)\|
    \leq L\int_0^t \|\bF_1(v,\bx) - \bF_2(v,\bx)\|\,dv\\
    \leq JL^2\int_0^t v\,dv = \frac{JL^2t^2}2,
\end{multline*}
whence, by induction, we obtain
\begin{equation*}
  \|\bF_1(t,\bx) - \bF_2(t,\bx)\| \leq \frac{JL^nt^n}{n!}
\end{equation*}
for each $n$. The right-hand side tends to zero as $n\to\infty$ and so, by letting $n\to\infty$ on both sides, we see that $\bF_1(t,\bx) = \bF_2(t,\bx)$. Thus,  the system~\eqref{Vectordf} has a unique solution $\bF$. \QED
}

\section{Proof of Lemma~\ref{lem:5}}\label{appendix_lemma_5}
We only prove \eqref{eq:statement}; the proof for \eqref{eq:statement1} is exactly
the same and therefore omitted for brevity. We first establish:
\begin{Lemma}\label{lem:initialinterval}
If $\bar{\lambda}_0=1$, for each $j\geq \taus$  there exists a positive-length interval containing $T$ in
 which $u_j$ equals $0.$ In addition, irrespective of the value of $\bar{\lambda}_0$ and for all $t$, \begin{align}\label{eq:H(T)}
\ham(t^-)=\ham(T)= \lambda_E(T)\sum_{k=\taus}^B I_k(T).
\end{align}
 \end{Lemma}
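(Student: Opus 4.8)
The plan is to prove the two claims in turn, relying on Pontryagin's Maximum Principle (Theorem~\ref{pontryagin}) and the structural facts about the switching functions $\varphi_i$ established in Step~1 of the proof of Lemma~\ref{lem:reversestepfunction}. First I would show that $u_j$ vanishes on an interval ending at $T$ for every $j\geq\taus$. The idea is to evaluate the switching function $\varphi_j$ at the terminal time using the transversality conditions \eqref{co_st_finals:Gen}: since $\bar\lambda_0 = 1$, we have $\lambda_i(T) = -a_i$ and $\rho_i(T) = -a_i$ for all $i$, so from \eqref{eq:varphi_simpler:Gen},
\[
\varphi_j(T) = \beta I_j(T)\Bigl[\sum_{k=r}^B\bigl(a_k - a_{k-r} - a_{j-\taus} + a_j\bigr)S_k(T)\Bigr].
\]
Since $\{a_i\}$ is strictly decreasing, $a_k - a_{k-r} < 0$ for $k\geq r$ (as $r\geq 1$), and $a_j - a_{j-\taus} < 0$ since $\taus\geq 1$; all the coefficients of the $S_k(T)\geq 0$ are therefore strictly negative, and since $\sum_{k=r}^B S_k(T) > 0$ (which follows because there is some $k\geq r$ with $S_k(0)>0$, hence $S_k(T)>0$ by Theorem~\ref{thm:constraints}), we get $\varphi_j(T) < 0$ whenever $I_j(T) > 0$; and if $I_j(T) = 0$ then $\varphi_j(T) = 0$ trivially but in that case $I_j\equiv 0$ on $[0,T]$ by Theorem~\ref{thm:constraints} so $u_j$ is irrelevant. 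By continuity of $\varphi_j$ there is an interval $(T-\epsilon, T]$ on which $\varphi_j < 0$, and by \eqref{optimal_u_i:Gen} this forces $u_j = 0$ there.

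Next I would establish \eqref{eq:H(T)}. From \eqref{hamsimpler}, $\ham = \lambda_E\sum_{i=\taus}^B I_i + \sum_{i=\taus}^B\varphi_i u_i$. Since each $u_j$ is $0$ on a neighbourhood of $T$ by the first part (in the case $\bar\lambda_0 = 1$; for $\bar\lambda_0 = 0$ one argues analogously that $\varphi_j(T)=0$ forces, via Lemma~\ref{lem:varphi_zero_X}, either $u_j\equiv 0$ or $\varphi_j>0$ on $(\tau_j, T)$ — actually here it is cleaner to invoke that $\ham$ is constant in $t$ because the system is autonomous and $f_0\equiv 0$, so $\dot{\ham} = 0$ at points of continuity of $\bu$ and $\ham$ is continuous, hence $\ham(t^-) = \ham(T)$ for all $t$). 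At $t = T$ the terms $\varphi_i(T) u_i(T)$ either vanish because $u_i(T) = 0$ or are irrelevant on a measure-zero set; more robustly, using that $\ham$ is continuous and left-continuous limits exist (Lemma~\ref{dotI} gives boundedness of the relevant derivatives), $\ham(T) = \ham(T^-) = \lambda_E(T^-)\sum_{i=\taus}^B I_i(T^-) + \sum_i \varphi_i(T^-) u_i(T^-)$, and the sum vanishes since $u_i = 0$ near $T$. Using $\dot\lambda_E = 0$ so $\lambda_E\equiv\lambda_E(T)$, and continuity of the states, this gives $\ham(T) = \lambda_E(T)\sum_{k=\taus}^B I_k(T)$, and the autonomy of the problem extends this to $\ham(t^-)$ for all $t$.

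The main obstacle I anticipate is handling the case $\bar\lambda_0 = 0$ in the second claim without circularity: the first claim was proved only under $\bar\lambda_0 = 1$, so the argument that $\sum_i\varphi_i u_i$ vanishes near $T$ is not directly available. The clean route is to observe that in either case the Hamiltonian~\eqref{define:Hamiltonian:Gen} does not depend explicitly on $t$ (the dynamics \eqref{model:no_recharge} and the running cost are autonomous), so along the optimal path $\ham$ is constant on each interval of continuity of $\bu$ and continuous across the finitely many jumps (since $\ham$ is continuous as a function of $t$ — it is built from the continuous states, co-states, and $\ham$ is maximized at each $t$, making $t\mapsto\ham(\bx^*(t),\bu^*(t),\bp(t),t)$ continuous; see e.g. the standard theory cited via~\cite{seierstad1987optimal}). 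Hence $\ham(t^-) = \ham(T)$ for all $t$ regardless of $\bar\lambda_0$, and then one only needs the value $\ham(T) = \lambda_E(T)\sum_k I_k(T)$, which follows from \eqref{hamsimpler} and the vanishing of $u_i$ near $T$ — an argument that holds whenever $\varphi_i(T)\leq 0$, which is true in both cases ($\varphi_i(T)<0$ if $\bar\lambda_0=1$ as shown, and $\varphi_i(T)=0$ if $\bar\lambda_0=0$, the latter forcing $u_i(T^-)\varphi_i(T^-)=0$ in the limit). I would write this part carefully to make the measure-zero/left-limit bookkeeping rigorous.
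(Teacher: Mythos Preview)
Your approach is essentially the same as the paper's: evaluate $\varphi_j(T)$ via the transversality conditions to get $\varphi_j(T)<0$ when $\bar\lambda_0=1$ (and $=0$ when $\bar\lambda_0=0$), invoke continuity and \eqref{optimal_u_i:Gen} for the first claim, and use autonomy of the problem together with \eqref{hamsimpler} for the second. One small gap to close: when $I_j(T)=0$ you say ``$u_j$ is irrelevant,'' but the lemma literally asserts $u_j=0$ on an interval --- this holds because the specific control $\bu$ under consideration was \emph{constructed} (at the start of the proof of Lemma~\ref{lem:reversestepfunction}) to satisfy $u_j(t)=0$ for all $t\le\tau_j$, and here $\tau_j=T$; you should invoke that construction rather than dismiss the case.
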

\begin{proof} 
{Since the system is \emph{autonomous}\footnote{An autonomous optimal control {problem} is one whose dynamic differential equations and objective function do not \hide{have parameters which }explicitly vary with time $t$.}, the Hamiltonian is continuous in time and $\ham(t)=\ham(T)$ for all $t \in [0,T]$\hide{, as $\frac{\partial f}{\partial t}=0$ in this particular formulation of ~\eqref{problem} }\cite[p. 86 \& p. 197]{seierstad1987optimal}.}
We separately consider: $\bar{\lambda}_0=1$ and  $\bar{\lambda}_0=0$.

1) {\bf $\bar{\lambda}_0=1$.} The first part of the lemma clearly holds for $j\geq \taus$ if $\tau_j = T$, since then  $u_j(t) = 0$ for all $t \in [0, T]$.
We now seek to establish the same in the case that $\tau_j < T$, and therefore $I_j(T) > 0.$
At $t=T$, for $\taus\leq j\leq B$ we have:
\begin{align}\label{eq:varphi_T_0}
\hspace{-0.09in}\varphi_j(T)&=
\beta I_j(T)\bar{\lambda}_0
\sum_{k=r}^B\left(a_k-a_{k-r}-a_{j-\taus}+a_j\right)S_k(T). \hide{& &\taus\leq j\leq B}
\end{align}

Recall that $a_k$ is decreasing in $k$. Hence, since {${\bS}(0)\neq {\bf0}$}, and so for at least one $k\geq r$, $S_k(T)>0$ {(from Theorem~\ref{thm:constraints})}, for all $j\geq \taus$ we have $\varphi_j(T)<0$.%
\footnote{To see this, note that each term is negative as $a_{k-r}> a_k$ and $a_{j-\taus}> a_j$ for $k\geq r$ and $j\geq \taus$.}
Since $\varphi_j$ is a continuous function, $\varphi_j$ is negative in an interval of positive length including $T$. The first part of the lemma follows from \eqref{optimal_u_i:Gen}.

{Now, since $u_k(T)=0$ for all $k\geq \taus$ from the first part of this lemma,~\eqref{define:Hamiltonian:Gen} simplifies to \eqref{eq:H(T)}.}

2) {\bf $\bar{\lambda}_0=0$.}
Replacing $\bar{\lambda}_0=0$ in \eqref{eq:varphi_T_0}, it follows that $\varphi_j(T)=0$ for all $j\geq \taus$; the expression for the Hamiltonian in \eqref{hamsimpler} would thus lead again to \eqref{eq:H(T)}.

\end{proof}
{From \eqref{eq:co-states:Gen}, we have} $\dot{\lambda}_E=0$ , except at the points of discontinuity of $\bu$ -- a countable set -- leading to $\lambda_E(t)= \lambda_E(T)$ for all $t \in [0, T]$ due to the continuity of the co-states. Hence, from Lemma~\ref{lem:initialinterval},  the LHS in \eqref{eq:statement} becomes
\begin{align}
\lambda_E(T)\left(\sum_{j=\taus}^BI_j(T)-\sum_{j=r}^BS_j(t^-)-\sum_{j=\taus}^BI_j(t^-)\right).\label{eq:the_first_expr}
\end{align}
The lemma follows from two subsequently established facts:\\
(A) $\lambda_E(T)>0$,\footnote{
 In this part we show that $\lambda_E(T)>0$ whenever $\bu \not \equiv 0$. This combined with \eqref{co_st_finals:Gen} leads to $E(T)=-ln(1-p)/\beta_0$. Therefore, the delivery probability of the optimal control at the given terminal time $T$  equals  the mandated probability of delivery except possibly when $\bu \equiv 0$.}
and\\ (B)
$ \sum_{j=\taus}^BI_j(T)-\sum_{j=r}^BS_j(t^-)-\sum_{j=\taus}^BI_j(t^-)< 0$.

In order to establish (A), we rule out $\lambda_E(T)=0$; it must therefore be positive by  \eqref{co_st_finals:Gen}. We again consider two cases: (i) $\bar{\lambda}_0=0$ and (ii) $\bar{\lambda}_0=1.$ (i) If $\bar{\lambda}_0=0$, $\lambda_E(T)=0$ would lead to $(\bar{\lambda}_0,\vec{\lambda}(T),\vec{\rho}(T),\lambda_E(T))= \vec{0}$, which  contradicts {\eqref{vec_neq_zero}}. (ii) Otherwise (i.e., for $\bar{\lambda}_0=1$), let  $\lambda_E(T)=0$. Then,  $\lambda_E(t)=0$ for all $t \in [0,T]$. Thus, from \eqref{hamsimpler}, $\ham(t)=\sum_{j=r}^B \varphi_j(t) u_j(t)$.\hide{ \paragraph{Step 2: Proof of Lemma~\ref{lem:initialinterval}}}
Furthermore, since our system is autonomous and from Lemma~\ref{lem:initialinterval},  $\ham(t)=\ham(T)=0$ for all $t \in [0,T]$. \hide{\cite[p. 86 \& p. 197]{seierstad1987optimal}.} But, as argued after~(\ref{optimal_u_i:Gen}),  $\varphi_j(t) u_j(t) \geq 0$, for all $t \in [0, T]$ and all $j\geq \taus$.  Hence, we have $\varphi_j(t) u_j(t) = 0$ for all such $t$ and $j$\hide{$j\geq \taus$ and all $t \in [0, T]$}.
From Lemma~\ref{lem:initialinterval} and since $\bu \not\equiv 0$, there exists $t' \in (0, T)$ such that  $u_j(t)=0$ for all $t \in (t' ,T]$ and all $j\geq \taus$ and for some $k\geq \taus$, {\hide{there must exist}there exists} a non-zero value of $u_k$  in every left neighbourhood of $t'$.
 At any $t \in  (t', T]$  at which $\bu$ is continuous and from equations~(\ref{eq:co-states:Gen}), $
\dot{\rho}_j(t)=\dot{\lambda}_j(t)=0$ for $0 \leq j \leq B
$.
Since $\bu$ may be discontinuous only at a countable number of points and due to the continuity of the co-states,  $\rho_j(t')=\rho_j(T)=\lambda_j(T)=\lambda_j(t')=-
a_j$
for all $j\geq \taus$.

For $j\geq r$ and $k\geq\taus$, define $\Omega_{j,k}(t):=\lambda_j(t)-\rho_{j-r}(t)-\rho_{k-\taus}(t)+\rho_k(t)$. For all such $j, k$, we know that $\Omega_{j,k}(t')=
(-a_j+a_{j-r}+a_{k-\taus}-a_k)>0$. Hence, due to continuity of the co-states, there exists $\epsilon>0$ such that for all $t \in (t'-\epsilon, t')$ and all $j$, $k$, we have $\Omega_{j,k}(t)>0$.
But for all $t$, we had:
\begin{align*}
\ham(t)&=-\sum_{j=r}^B \bigg[\beta\sum_{k=\taus}^B \Omega_{j,k}(t) u_k(t)I_k(t)\bigg] S_j(t)\\&= -\sum_{r \leq j\leq B: S_j(0) > 0}  \bigg[\beta\sum_{k=\taus}^B \Omega_{j,k}(t) u_k(t)I_k(t)\bigg] S_j(t).
\end{align*}
The last equality follows since for each $j\geq 0$, $S_j(t) = 0$ at each $t \in (0, T]$ if  $S_j(0) = 0$ (Theorem~\ref{thm:constraints}).
Since $\bS(0)\neq{\bf 0}$ there exists $k\geq r$ such that $S_k(0) > 0$.
We examine a point $\bar{t}\in(t'-\epsilon, t')$ for which $u_l(\bar{t})>0$ for some $l\geq \taus$.
  Since $\ham(\bar{t})  = 0$, and every variable in the above summation is non-negative, $ \Omega_{k,l}(\bar{t}) u_l(\bar{t})I_l(\bar{t}) S_k(\bar{t}) = 0$.  Since $u_l(\bar{t}) > 0$, $I_l(\bar{t})>0$ by definition of $\bu$, and $\Omega_{k,l}(\bar{t})>0$, therefore   $S_k(\bar{t})=0$. This contradicts $S_k(0) > 0$  (Theorem~\ref{thm:constraints}). Thus, (A) holds.

We now seek to establish (B).  The proof follows from the key insight that it is not possible to convert all of the susceptibles to infectives in a finite time interval, and hence {at the terminal time the total fraction of infectives with sufficient energy reserves for transmitting the message is  less than the sum fraction of susceptibles and infectives with energy reserves greater than $r,\taus$ respectively at any time before $T$.}
To prove this, observe that for all $t\in[0,T]$, we have:
\begin{align*}
&(\sum_{j=\taus}^B \dot{I}_j+\sum_{j=r}^B\dot{S}_j)= -\beta \sum_{j=r}^B S_j \sum_{k=\taus}^{B}u_k I_k -\beta \sum_{j=\taus}^B u_j I_{j} \sum_{k=r}^{B}S_k\\&~~\qquad\qquad+ \beta \sum_{j=\taus}^{B-r} S_{j+r} \sum_{k=\taus}^{B}u_k I_k +\beta \sum_{j=\taus}^{B-\taus} u_{j+\taus} I_{j+\taus}\sum_{k=r}^{B}S_k\\&\qquad
\qquad=-\beta(\sum_{j=r}^{\taus+r-1} S_j \sum_{k=\taus}^{B}u_k I_k +\sum_{j=\taus}^{2\taus-1} u_j I_{j} \sum_{k=r}^{B}S_k) \leq 0.
\end{align*}
Thus $(\sum_{j=\taus}^B {I}_j+ \sum_{j=r}^B {S}_j)$ is a decreasing function of time, leading to
 $\sum_{j=\taus}^BI_j(T)- \sum_{j=\taus}^BI_j(t^-)-\sum_{j=r}^BS_j(t^-)\leq-\sum_{j=r}^BS_j(T).
$ 
   Now, since there exists $k\geq r$ such that $S_k(0) > 0$,  there will exist $k\geq r$ such that $S_k(T) > 0$ (Theorem~\ref{thm:constraints}). Also, from the same theorem, we have $S_m(T) \geq 0$ for all $m$. Thus, $\sum_{j=r}^BS_j(T) > 0.$ The result follows.


\end{document}